\def\bea{\begin{eqnarray}}
\def\eea{\end{eqnarray}}
\def\be{\begin{equation}}
\def\ee{\end{equation}}
\theoremstyle{definition}
\newtheorem{defn}{Definition}
\newtheorem{prop}{Proposition}
\newtheorem{exmp}{Example}
\def\bea{\begin{eqnarray}}
\def\eea{\end{eqnarray}}
\def\be{\begin{equation}}
\def\ee{\end{equation}}
\newsavebox\mybox
\definecolor{mrainbow1}{rgb}{0.86, 0.13, 0.13}
\definecolor{mrainbow2}{rgb}{0.89,0.60,0.22}
\definecolor{mrainbow3}{rgb}{0.67, 0.74, 0.32}
\definecolor{mrainbow4}{rgb}{0.39, 0.67, 0.60}
\definecolor{mrainbow5}{rgb}{0.25, 0.39, 0.81}
\definecolor{mrainbow6}{rgb}{0.47, 0.11, 0.53}
\definecolor{b1}{rgb}{0.27,0.49,0.80}
\definecolor{g2}{rgb}{0.51,0.73,0.44}
\definecolor{y3}{rgb}{0.86,0.67,0.24}
\definecolor{r4}{rgb}{0.86, 0.13, 0.13}
\definecolor{l5}{rgb}{0.51, 0.03,0.77}
\begin{document}

\title{Entanglement and correlations  between local observables in de Sitter spacetime}
\author{Patricia Ribes-Metidieri}
\email{patricia.ribesmetidieri@york.ac.uk}
\affiliation{Department of Mathematics, University of York, Heslington, York YO10 5DD, UK }
\affiliation{Institute for Mathematics, Astrophysics and Particle Physics, Radboud University, 6525 AJ Nijmegen, The Netherlands}

\author{Ivan Agullo}
\email{agullo@lsu.edu}
\affiliation{Department of Physics and Astronomy, Louisiana State University, Baton Rouge, LA 70803, USA}

\author{B\'eatrice Bonga}
\email{bbonga@science.ru.nl}
\affiliation{Institute for Mathematics, Astrophysics and Particle Physics, Radboud University, 6525 AJ Nijmegen, The Netherlands}
\affiliation{Theoretical Sciences Visiting Program, Okinawa Institute of Science and
Technology Graduate University, Onna, 904-0495, Japan}

\begin{abstract}

Studies of quantum field entanglement in de Sitter space based on the von Neumann  entropy of local patches have concluded that curvature enhances entanglement between regions and their complements. Similar conclusions about entanglement enhancement have been reached in analyses of Fourier modes in the cosmological patch of de Sitter space. We challenge this interpretation by adopting a fully local approach: examining entanglement between pairs of field modes compactly supported within de Sitter's cosmological patch.  Our approach is formulated in terms of the properties of a metric tensor and an associated complex structure induced by the Bunch–Davies vacuum on the classical phase space. We find that increasing curvature increases correlations between local modes but,  somewhat counterintuitively, \emph{decreases} their entanglement. Our methods allow us to characterize how entanglement is spatially distributed, revealing that a cosmological constant, even if tiny, qualitatively alters the vacuum's entanglement structure.  We show our results are compatible with previous entropy-based studies when properly interpreted. Our findings have implications for entanglement between observables generated during cosmic inflation.\\
  
\end{abstract}

\maketitle

\section{Introduction}

Investigations in relativistic quantum information have revealed that even the simplest states in the most basic quantum field theories possess an extraordinarily rich and complex entanglement content (see e.g. \cite{Witten:2018,Hollands:2017dov} and references therein). How this entanglement is distributed across space and how it is shaped by the curvature and symmetries of spacetime remains, however, an open question. This article aims to contribute to these issues.

De Sitter spacetime serves as a particularly suitable theoretical laboratory for studying these foundational questions. It is more than just a theoretical laboratory: the spacetime during the inflationary epoch of our universe, where primordial density perturbations are believed to have originated, is approximately isometric to a portion of de Sitter space. Given both its mathematical appeal and its physical relevance, this article focuses on the cosmological patch of de Sitter spacetime, also known as the Poincaré patch.

The cosmological patch of de Sitter admits seven independent Killing vector fields---the subset of the de Sitter isometries that leave the patch invariant \cite{Hawking:1973uf,abk1}. These symmetries single out a preferred Fock representation of the quantum field and a distinguished vacuum state, known as the Bunch-Davies vacuum \cite{Bunch:1978yq}. Characterizing the entanglement structure of this state offers a  way to probe the relationship between symmetry, curvature, and entanglement in quantum field theory. 

It is well-known that curvature and symmetries dictate the form of the vacuum two-point function. For example, while two-point correlations decay as the inverse square of the distance for massless fields in flat spacetime, they become nearly scale-invariant in the low-mass limit in de Sitter space. From this perspective, it is natural to expect that entanglement might behave similarly. However, as we will discuss in later sections, entanglement presents important subtleties. One such subtlety we find is the following. The near scale-invariance of two-point correlations in de Sitter implies that correlations at super-Hubble distances are significantly stronger than in Minkowski spacetime. These enhanced correlations play a crucial role in cosmology, as they seed the observed correlations in temperature fluctuations in the cosmic microwave background. There is a widespread intuition that the stronger correlations generated during inflation are accompanied by a corresponding increase in entanglement. One of the goals of this article is to refine this intuition. While entanglement always implies the presence of correlations, the converse is not true: physical systems can exhibit correlations without being entangled. We show that the correlations generated during inflation---those that later become accessible to cosmological observers---do not contain more entanglement than those found in Minkowski spacetime. In more direct terms, we prove that increasing the Hubble rate $H$ of the cosmological patch of de Sitter space---which corresponds to increasing the Ricci curvature---enhances the correlations between localized field modes, but at the same time---and somewhat counterintuitively---\emph{reduces} the entanglement between them.

Beyond its implications for cosmology, this result emphasizes the need to distinguish sharply between entanglement and correlation in quantum field theory. The intuitive link between the two often stems from our experience with {\em pure} states, where all correlations between a subsystem and its complement are necessarily due to entanglement. In field theory, however, the situation is more subtle. In particular, the reduced state of \emph{any} field mode that is localized in space—i.e., has compact spatial support—is necessarily mixed (this is a direct consequence of the Reeh-Schlieder theorem \cite{reehschlieder}). For mixed states, there is no straightforward relationship between correlation and entanglement.

The entanglement content of de Sitter space has been extensively explored in the literature \cite{Grishchuk:1990bj,Albrecht:1992kf,Lesgourgues:1996jc,Kiefer:1998pb,Kiefer:1998qe,Kiefer:2008ku,Polarski:1995jg,Martin:2015qta,ack,Micheli:2022tld,Brahma:2023lqm,Brahma:2023uab,Brahma:2023hki,Brahma:2024yor,Bhattacharyya:2024duw,Martin_2022,Calzetta_1995,maldacena_entanglement_2013,Maldacena:2015bha,Martin:2015qta,Nelson_2016,Martin_2018,Grain:2019vnq,Brahma:2020zpk,Brahma:2021mng,Martin:2021qkg,Agullo:2022ttg,Micheli:2022tld,Espinosa-Portales:2022yok,Bhardwaj:2023squ,Micheli:2022tld,Martin:2021xml,Martin:2021qkg,K:2023oon,PhysRevD.78.044023,PhysRevD.80.124031,Nambu_2023,BELFIGLIO20251}, raising the natural question of what new insights this article seeks to provide. Previous studies have approached the problem from diverse perspectives and using a variety of tools, each with its own strengths and limitations---and sometimes producing results that appear in tension with one another. One prominent approach involves quantifying the entanglement between a region and its complement using entanglement entropy \cite{maldacena_entanglement_2013}. This method has led to interesting findings, suggesting that the entire de Sitter space contains more entanglement than Minkowski space---a result we independently confirm using a different methodology. 

Another  approach is entanglement harvesting \cite{Valentini1991,reznik1,reznik2,Pozas-Kerstjens:2015,Pozas2016,Salton:2014jaa,Ng2014,mutualInfoBH,freefall,HarvestingSuperposed,Henderson2019,bandlimitedHarv2020,ampEntBH2020,carol,boris,ericksonWhen,threeHarvesting2022,twist2022,cisco2023harvesting,SchwarzchildHarvestingWellDone}, in which two spatially separated particle detectors become entangled solely through their local interactions with the field---i.e., without ever directly interacting with each other.  Since the detectors remain causally disconnected and are initially unentangled, any entanglement between them must originate from pre-existing entanglement in the field, swapped to the detectors via the local field-detector interaction. Interestingly, it has been shown that two comoving geodesic detectors in de Sitter space harvest {\em less} entanglement than they would in Minkowski spacetime \cite{Steeg_2009}. This has been attributed to the thermal nature of the Bunch-Davies vacuum as perceived by local observers. Such findings suggest that less entanglement is accessible in de Sitter space, seemingly at odds with the implications of entanglement entropy calculations.

In this article, we pursue a different yet complementary approach to the study of quantum field theoretic entanglement, one that carries its own set of advantages and limitations. Our methods build on previous work \cite{Martin:2021qkg,K:2023oon,PhysRevD.80.124031,PhysRevD.78.044023,bianchi_entropy_2019,Martin:2021xml,ubiquitous}. It focuses on finite sets of field modes and analyzes entanglement among them using tools from Gaussian quantum information theory \cite{serafini2017quantum,Weedbrook_2012}. The focus on finitely many modes has the significant advantage of avoiding ultraviolet divergences, as the operator algebras involved are of type I. 
This approach entails a choice of the set of field modes to be analyzed---just as the entanglement harvesting results depend on the specific response function of the detectors, which effectively selects the modes of the field that couple to them. However, since our interest lies in general features of the field theory rather than in the peculiarities of any specific set of modes, we identify properties that are generic across any selection of modes. Only in this way can we make invariant statements about de Sitter spacetime. 

Our tools have a more geometric flavor than some of the previous works, as the main results of this article are derived from the properties of a metric tensor and a complex structure on the classical phase space. One of the goals of this work is to highlight the power and elegance of the geometric framework we employ.   

The primary limitation of our approach stems from its reliance on Gaussian tools, making it applicable to Gaussian states. Nevertheless, this regime is highly relevant for cosmology, as the Bunch-Davies vacuum is a Gaussian state, and nonlinear effects are negligible during inflation---confirmed by the absence of measurable non-Gaussianities in the cosmic microwave background \cite{refId0}. As is standard in quantum field theory, non-Gaussian corrections could in principle be incorporated perturbatively, though such extensions lie beyond the scope of the present work.

The analysis presented here uncovers structural aspects of entanglement in the cosmological patch of de Sitter spacetime that, to our knowledge, have not been previously explored—although some elements of our conclusions were anticipated in \cite{Nambu_2023}, based on the study of a restricted family of non-compactly supported field modes. In particular, by employing the concept of {\em partner modes} \cite{hotta2015partner,Trevison_2019,hackl_minimal_2019,partnerformula}, we are able to characterize the subtle manner in which entanglement is distributed in the Bunch-Davies vacuum, revealing important differences from Minkowski spacetime. This, in turn, clarifies that the apparent tension between results from entanglement entropy and entanglement harvesting protocols is not a contradiction. On the contrary, we demonstrate that these findings are closely connected, and in fact, one is a consequence of the other.

For a summary of some of the results presented here, see~\cite{ABRM_deSitter_short}.
The rest of the paper is organized as follows. In Sec.~\ref{sec:framework}, we summarize the tools that will be used in the rest of the paper. Sec.~\ref{sec:dSqft} applies these tools to the symmetric vacuum in the cosmological patch of de~Sitter spacetime, and discusses some important features of correlation functions together with a few illustrative examples. In Sec.~\ref{sec:vNEntropy}, we analyze the von Neumann entropy of a single field degree of freedom. Sections~\ref{sec:correlations} and~\ref{sec:LN2dof} analyze mutual information and entanglement, respectively,  between pairs of compactly supported modes of the field. Finally, in 
 Sec.~\ref{sec:partners} we focus on the notion of partners: the system that codifies the correlations between a local mode and the rest of the field.  We use this notion to uncover the distribution of entanglement between different local modes  of a scalar field in the Bunch-Davies vacuum in Sec.~\ref{subsec:where_is_entanglement}. Finally, Sec.~\ref{sec:discussion} summarizes our findings and discusses their implications for curved spacetime quantum field theory and observational signatures of entanglement in the early universe. Throughout this paper, we use units in which $\hbar=c=1$ and  adopt the following convention for the  Fourier transform  $\tilde{h}(\vec k) = \int d^3 x \, h(\vec{x}) e^{i \vec k \cdot \vec x}$.

\section{General Framework and tools\label{sec:framework}}

This section provides a summary of the tools and concepts on which the rest of this article is based. Specifically, it introduces methods to define and characterize finite-dimensional subsystems in linear quantum field theories, Gaussian states, and correlations and entanglement between subsystems. For illustrative purposes, we will often use the example of a massless field in Minkowski spacetime in this section. This example will serve as a reference case for comparison with a light field in de Sitter spacetime, which will be discussed in subsequent sections.

\subsection{Smeared operators}

Consider a scalar field minimally coupled to curvature and satisfying the Klein--Gordon equation  
\begin{equation}
    (\Box + m^2) \hat{\phi}(x) = 0\,,
\end{equation}
on a globally hyperbolic spacetime with metric tensor $g_{ab}$. (The tools described here generalized straightforwardly to other linear bosonic field theories.) The differential operator $\Box$ denotes the d'Alembertian associated with $g_{ab}$, and $m$ is the mass of the field.  

The object $\hat{\phi}(x)$ does not define an operator in any reasonable way (see e.g. \cite{Wald:1995yp}). This is obvious, for instance, by noticing that acting on the Fock vacuum produces a state with infinite norm, $\langle 0|\hat{\phi}(x) \hat{\phi}(x) |0\rangle\to \infty$, thus $\hat{\phi}(x) |0\rangle$ does not lie within the Hilbert space.  
Well-defined operators can be constructed by integrating $\hat{\phi}(x)$ against suitably chosen real functions (also known as smearing): 
\begin{equation} \label{covopt}
    \hat{\Phi}(F) = \int  dV  \, F(x) \, \hat{\phi}(x) .
\end{equation}
where $dV$ is the spacetime volume element.
In simple terms, the field at a given point, $\hat{\phi}(x)$, is too singular to define an operator, but this singularity is tamed by ``smearing'' the field using functions $F(x)$ in spacetime.  

For the purposes of this article---particularly for comparing aspects of the cosmological patch of de Sitter and Minkowski spacetimes---it will be more convenient to work in the canonical formalism. The canonical picture requires the introduction of a foliation of spacetime into a one-parameter family of spatial Cauchy hypersurfaces $\Sigma_t$. For a fixed value of $t$, we can define  
\begin{align}
    \hat{\Phi}(\vec{x}) &:=\hat \phi(x)|_{t}\, , \nonumber \\
    \hat{\Pi}(\vec{x}) &:= \left(\sqrt{h} \, n^a \nabla_a \hat \phi(x)\right)|_{t}\, , \nonumber
\end{align}
where $n^a$ is the future-oriented unit normal to $\Sigma_t$, $\nabla_a$ is the covariant derivative associated with $g_{ab}$, and $\sqrt{h}$ is the determinant of the metric induced on $\Sigma_t$ by $g_{ab}$.  

The operators $\hat{\Phi}(\vec{x})$ and $\hat{\Pi}(\vec{x})$ satisfy the familiar equal-time commutation relations  
\begin{equation} \label{cancom}
    [\hat{\Phi}(\vec{x}), \hat{\Pi}(\vec{x}')] = i \, \delta^{(3)}(\vec{x} - \vec{x}')\, .
\end{equation}
As in the covariant formalism, the canonical operators $\hat{\Phi}(\vec{x})$ and $\hat{\Pi}(\vec{x})$ must be interpreted  as operator-valued distributions. 

Operators that are linear in field and momentum  are obtained as linear combinations of  the form  
\begin{equation} \label{smeared}
    \hat{\Phi}(f) -\hat{\Pi}(g) \, ,
\end{equation}
where $f(\vec{x})$ and $g(\vec{x})$ are real functions on $\Sigma_t$, and 
\begin{equation}
    \hat{\Phi}(f) := \int_{\Sigma_t} d^3x \, f(\vec{x}) \, \hat{\Phi}(\vec x) \, .
\end{equation}
Similarly for $\hat{\Pi}(g, t)$. The minus sign in the second term in \eqref{smeared} has been introduced merely for convenience, as will become evident soon. For the linear field theory under consideration, there exists a simple relation between the covariant operators $\hat{\Phi}(F)$ in \eqref{covopt} and the canonical ones of the form \eqref{smeared} (see, e.g., \cite{Wald:1995yp}).  

There is a convenient way of organizing all smeared operators in the canonical formalism. This organization is based on identifying each pair of functions $(g, f)$ used in \eqref{smeared} with elements of the classical phase space $\Gamma$.   The classical phase space of our linear field theory consists of pairs of functions $(g(\vec{x}), f(\vec{x})) =: \bm{\gamma}$ (defined on an abstract 3-manifold, which can be identified with $\Sigma_t$ for any $t$) equipped with the symplectic structure 
\begin{equation}
   \bm{\omega}(\bm{\gamma}_1, \bm{\gamma}_2) = \int_{\Sigma_t} d^3x \, ( f_1 g_2 - g_1 f_2 ) \, .
\end{equation}
Let us define the vector $ \hat{\bm R}:=(\hat{\Phi}(\vec{x}),\hat{\Pi}(\vec{x}))$. 
Then, for each element $\bm{\gamma} \in \Gamma$, we can define a smeared canonical operator at time $t$ as the symplectic product of $\bm{\gamma}$ and $\hat{\bm R}$:  
\bea
    \hat{O}_{\bm{\gamma}} &:=& \bm{\omega}(\bm{\gamma}, \hat{\bm R}) =\int_{\Sigma_t}\! d^3x \! \left[ f(\vec{x}) \hat{\Phi}(\vec{x}) - g(\vec{x}) \hat{\Pi}(\vec{x}) \right]\nonumber \\ &=& \hat{\Phi}(f) -\hat{\Pi}(g)  .
\eea
This organization of linear operators---assigning one such operator to each element of the classical phase space---may seem complicated at first, but it offers more benefits than hassle. In particular, it will allow us to describe quantum Gaussian states in terms of a co-vector and a metric tensor in the classical phase space. 

Using the canonical commutation relation \eqref{cancom}, the commutator algebra of two linear operators $\hat{O}_{\bm{\gamma}}$ and $\hat{O}_{\bm{\gamma}'}$ is simply determined from  the symplectic product of  $\bm{\gamma}$ and $\bm{\gamma}'$ as follows:  
\begin{equation}
    [\hat{O}_{\bm{\gamma}}, \hat{O}_{\bm{\gamma}'}] = - i \, \bm{\omega}(\bm{\bm{\gamma}}, \bm{\bm{\gamma}}') \, .
\end{equation}
Recall that the specification of the classical phase space $\Gamma$ in field theory requires a choice of the class of allowed functions. In Minkowski spacetime, it is common to choose $\Gamma$ as made of functions in Schwartz's space (see Appendix \ref{app:sobolev} for its definition). Smooth functions of compact support are also a typical choice, particularly in curved spacetimes where Schwartz's space does not generalize. We can start with either option, as it will not affect our discussion. This is so because the construction of the quantum Fock space requires an enlargement of $\Gamma$ (more precisely, a Cauchy-completion), and the final enlarged phase space will be the same regardless of whether we start from Schwartz's space or smooth functions of compact support.

\subsection{Gaussian states and covariance metric}

Gaussian states are completely determined by their first and second moments in the following sense. A generic quantum state, Gaussian or not, is uniquely characterized by all its $n$-th moments (the expectation value of $n$ linear operators). For a Gaussian state, the expectation value of the product of $n$ centered operators, $\langle \hat{\overline{O}}_{\bm{\gamma}_1} \cdots \hat{\overline{O}}_{\bm{\gamma}_n} \rangle$, vanishes if $n$ is odd, and is given by the following combination of second moments if $n$ is even:
\begin{equation} \label{Gauss}
\langle \hat{\overline{O}}_{\!\bm{\gamma}_1}\! \!\cdots \hat{\overline{O}}_{\!\bm{\gamma}_n} \rangle \!=\! \frac{1}{n!}\!\! \sum_\pi \langle \!\hat{\overline{O}}_{\!\bm{\gamma}_{\!\pi\!(\!1\!)}} \!\hat{\overline{O}}_{\!\bm{\gamma}_{\!\pi\!(\!2\!)}} \rangle \! \cdots \!\langle \hat{\overline{O}}_{\!\bm{\gamma}_{\!\pi\!(\!n\!-\!1\!)}} \hat{\overline{O}}_{\!\bm{\gamma}_{\!\pi\!(\!n\!)}} \rangle,
\end{equation}
where the sum extends over all permutations $\pi$ of $n$ elements. Centered operators are defined as $\hat{\overline{O}}_{\bm{\gamma}} := \hat{O}_{\bm{\gamma}} - \langle \hat{O}_{\bm{\gamma}} \rangle$.  
Property \eqref{Gauss} can be understood as the defining property of Gaussian states.

It follows that all $n$-th moments of a Gaussian state can be obtained from the first and second moments, $ \langle \hat{O}_{\bm{\gamma}} \rangle$ and $\langle \hat{O}_{\bm{\gamma}} \hat{O}_{\bm{\gamma}'} \rangle$, $\forall \bm{\gamma},\bm{\gamma}'\in\Gamma$. Furthermore, the second moments can be decomposed into their symmetric and antisymmetric parts:
\begin{equation}
\langle \hat{O}_{\bm{\gamma}} \hat{O}_{\bm{\gamma}'} \rangle =  \frac{1}{2}\Big(\langle \{\hat{O}_{\bm{\gamma}}, \hat{O}_{\bm{\gamma}'}\} \rangle + \langle [\hat{O}_{\bm{\gamma}}, \hat{O}_{\bm{\gamma}'}] \rangle\Big )\,,
\end{equation}
where curly brackets denote the anti-commutator. Since the commutator is proportional to the identity operator for all $\bm{\gamma}, \bm{\gamma}'$, its expectation value is state-independent. Therefore, all information about a Gaussian state is encoded in the first moments and the symmetric part of the second moments.

This information can be used to organize Gaussian states as follows. The first moments, $\langle \hat{O}_{\bm{\gamma}} \rangle$, can be viewed as a linear map from the classical phase space to the reals, defining a covector in $\Gamma$:
\begin{equation}
\mu(\bm{\gamma}) := \langle \hat{O}_{\bm{\gamma}} \rangle.
\end{equation}
I.e., $\mu$ is the co-vector in $\Gamma$ whose action on ${\bm \gamma}\in \Gamma$ equals to the expectation value of $ \hat{O}_{\bm{\gamma}}$ in the state under consideration. 

Similarly, the symmetrized second moments define a symmetric bilinear map on $\Gamma$ denoted as $\sigma$:
\begin{equation}
\sigma(\bm{\gamma}, \bm{\gamma}') := \langle \{ \hat{\overline O}_{\bm{\gamma}}, \hat{\overline  O}_{\bm{\gamma}'} \} \rangle\,.
\end{equation}
It is straightforward to show that $\sigma$ defined in this way is positive definite. That is, a quantum state defines a {\em metric tensor}  $\sigma$ in the classical phase space.  We will refer to $\sigma$ as the {\em covariance metric} of the quantum state. Its matrix elements in a basis are commonly referred to as the covariance matrix.

While every quantum state defines a covector and a metric on the classical phase space, it is only for Gaussian states that these two objects completely and uniquely characterize the state. This fact makes it possible to reformulate any quantum calculation involving Gaussian states as geometric operations in the classical phase space, offering significant computational and conceptual advantages, which we exploit throughout this article.

As mentioned above, $\sigma$ is a covariant rank-2 tensor. To uncover certain aspects of Gaussian states more transparently, it is useful to raise one index of $\sigma$ using the inverse of the symplectic structure.\footnote{In field theory, the symplectic structure is generally a weakly degenerate two-form and may not admit an inverse. However, an inverse can be uniquely defined once the phase space $\Gamma$ is Cauchy completed using the inner product defined by $\sigma$.} Let $\Omega$ denote the inverse of the symplectic structure $\omega$ on $\Gamma_{\sigma}$—the Cauchy completion of $\Gamma$ with respect to $\sigma$. By combining $\sigma$ and $\Omega$, one obtains a linear map on $\Gamma_{\sigma}$:
\begin{equation}
J^{\alpha}_{\ \beta} = -\Omega^{\alpha\gamma} \sigma_{\gamma\beta} .
\end{equation}
To simplify the notation, we have used an extended index notation, where Greek indices denote tensor indices in the infinite-dimensional vector space $\Gamma_{\sigma}$, i.e., they include the $\vec{x}$ dependence. More explicitly, the previous equation reads  
\[
J(\vec{x}, \vec{x}') = -\int d^3x''\, \Omega(\vec{x}, \vec{x}'') \sigma(\vec{x}'', \vec{x}').
\]  
It is not difficult to prove that a Gaussian state is pure if and only if $J$ satisfies $J^2 = -\mathbb{I}$. Such linear maps are known as {\em complex structures}. In other words, pure Gaussian states define complex structures in $\Gamma_{\sigma}$, while mixed Gaussian states ---which satisfy $J^2 < -\mathbb{I}$--- define {\em restricted complex structures}. One can check that $\sigma$ and $J$ defined in this way satisfy certain compatibility conditions with the symplectic structure $\omega$---these conditions are related to the mathematical concept of K\"ahler vector spaces (see, e.g., \cite{Ashtekar:1975zn, Hackl:2020ken, partnerformula} for further details).

\subsubsection{Massless scalar field in Minkowski spacetime}

In order to show these tools in practice, let us apply them to a familiar example describing the usual Poincaré invariant vacuum state of a massless, real scalar field propagating in Minkowski spacetime.

For the classical phase space $\Gamma$, we take the Schwartz space (see Appendix~\ref{app:sobolev} for its definition).
 The Minkowski vacuum is a Gaussian state with
\be
\mu_M(\vec{x})=\left( \langle \hat{\Pi}(\vec{x})\rangle, \langle \hat{\Phi}(\vec{x})\rangle \right)=(0,0)\, ,
\label{eq:mu-M}
\ee
and covariance metric
\be
\sigma_M(\vec{x},\vec{x}')\!=\!\begin{pmatrix} \langle \{ \hat{\Pi}(\vec{x}),\hat{\Pi}(\vec{x}')\}\rangle & \langle \{\hat{\Phi}(\vec{x}') ,\hat\Pi(\vec{x})\}\rangle \\
\langle\{\hat{\Phi}(\vec{x}) ,\hat\Pi(\vec{x}')\}\rangle & \langle \{\hat{\Phi}(\vec{x}), \hat \Phi(\vec{x}')\}\rangle
\end{pmatrix},
\label{eq:sigma-M}
\ee where the expectation values in Eqs.~\eqref{eq:mu-M}-\eqref{eq:sigma-M} are evaluated in the vacuum $\ket{0}$. 
The components of $\sigma(\vec{x},\vec{x}')$ are bi-distributions which, for the Minkowski vacuum,  act 
on test functions as follows 
\bea \label{synmcorrMink}
\langle 0|\{\hat{\Phi}[f], \hat\Phi[f']\}|0\rangle&=&\text{Re}(f,f')_{-1/2}\, , 
\nonumber \\ 
\langle 0|\{\hat{\Pi}[g], \hat\Pi[g']\}|0\rangle&=&\text{Re}(g,g')_{1/2}\, ,\nonumber \\
\langle 0|\{\hat{\Phi}[f], \hat \Pi[g']\}|0\rangle&=&0 \, . \eea
where $(f,g)_{s}$ denotes a Sobolev product of order $s$, defined in Fourier space as
\[
(f,g)_{s} \equiv \int_{\mathbb{R}^3} \frac{d^3k}{(2\pi)^3} \, |\vec{k}|^{2s} \tilde{f}(\vec{k}) \, \tilde{g}^*(\vec{k})\, .
\] 
Expressions \eqref{synmcorrMink} can be readily checked using the standard expansion of the field and momentum operators in terms of creation and annihilation operators (see e.g. Appendix B in \cite{ubiquitous}).

Using tools from asymptotic harmonic analysis, it is relatively straightforward to determine the large-separation behavior of the Sobolev products in Eq.~\eqref{synmcorrMink}. 
For instance, for ${\bm \gamma} = (g, f)$ and ${\bm \gamma}' = (g', f')$ supported on non-overlapping spherical regions with center-to-center separation $|\Delta \vec x|$, we find that
\[\mathrm{Re}(f,f')_{-\frac{1}{2}} \sim (|\Delta \vec x|)^{-2},\]
and 
\[\mathrm{Re}(g,g')_{\frac{1}{2}} \sim (|\Delta \vec x|)^{-4}, 
\]
(Appendix~\ref{app:asymptotic_sobolev} contains further details of the derivation of these asymptotic properties). This reproduces the familiar fall-off of field-field and momentum-momentum correlations in the Minkowski vacuum. 
 
The action of $\sigma_M$ on any pair ${\bm\gamma}=(g,f),{\bm\gamma'}=(g',f')\in \Gamma$ can therefore be expressed as
\be \label{sigmaMink} \begin{split}
\sigma_M({\bm\gamma},{\bm\gamma'})&=(g,f)\begin{pmatrix} \mathrm{Re}(\cdot,\cdot)_{1/2} & 0 \\
0 & \mathrm{Re}(\cdot,\cdot)_{-1/2}\end{pmatrix}\begin{pmatrix} g'\\f'\end{pmatrix}\\
&=\mathrm{Re}(g,g')_{1/2}+\mathrm{Re}(f,f')_{-1/2}\, .
\end{split} \ee

As described above, $\sigma_M$ defines a metric in $\Gamma$, which can be used to complete it. The resulting space, $\Gamma_{\sigma}$, consists of all pairs $(g,f)$ of functions in $\mathbb{R}^3$ which have finite Sobolev norm of order $s=1/2$ and $s=-1/2$, respectively (see, e.g. \cite{Much:2018ehc}). The set of functions with finite Sobolev norm of order $s$ defines the so-called Sobolev space of order $s$, denoted as $\dot{H}_s(\mathbb{R}^3)$  (see Appendix \ref{app:sobolev} for a further details about these spaces). Hence, 
\[
\Gamma_{\sigma} = \dot{H}_{1/2}(\mathbb{R}^3) \times \dot{H}_{-1/2}(\mathbb{R}^3).
\]

It is worth mentioning that, while all functions in $\dot{H}_{1/2}(\mathbb{R}^3)$ are continuous (although not necessarily differentiable), there are functions in $\dot H_{-1/2}(\mathbb{R}^3)$ that are discontinuous. For instance, the top-hat function belongs to $\dot H_{-1/2}(\mathbb{R}^3)$, but not to $\dot H_{1/2}(\mathbb{R}^3)$, and hence it can be used to smear the field operator $\hat{\Phi}(\vec{x})$ (but not the momentum $\hat{\Pi}(\vec{x})$). 

The space $\Gamma_{\sigma}$ contains the Schwartz space as a dense subspace as well as all smooth functions of compact support.

Next, using $\sigma_M$, one obtains
\bea \label{JM} J_M(\vec x, \vec x')&=&-\int d^3x''\, \Omega(\vec x,\vec x'') \sigma_M(\vec x'',\vec x')\nonumber \\
&=& \int \frac{d^3k}{(2\pi)^3} e^{i\vec k(\vec x-\vec x')}\,\begin{pmatrix} 0 & -\frac{1}{k}\\k& 0\end{pmatrix}, \nonumber \eea
where we have used that the bi-distribution $\sigma_M(\vec x,\vec x')$ can be written as
\be \sigma_M(\vec x,\vec x')=\int \frac{d^3k}{(2\pi)^3} e^{i\vec k(\vec x''-\vec x')}\,\begin{pmatrix} k & 0\\0& \frac{1}{k}\end{pmatrix} \, . \ee
From this, it is easy to verify that
\[
J_M^2(\vec{x}, \vec{x}')=-\begin{pmatrix} 1 & 0\\0& 1\end{pmatrix}\, \delta^{(3)}(\vec{x}-\vec{x}')\, ,
\]
confirming that the Minkowski vacuum is a pure state.

\subsection{Gaussian subsystems}
The organization of linear quantum observables $\hat{O}_{\gamma}$ in terms of vectors ${\bm \gamma}$ in the classical phase space allows one to define subsystems in the quantum theory via the concept of classical subsystems.

In the classical theory, a subsystem is characterized by a {\em symplectic subspace} of the classical phase space, $\Gamma_A \subset \Gamma_{\sigma}$. This definition is basis-independent, meaning it does not require specifying any coordinates in $\Gamma_{\sigma}$. Using the relation $\gamma \to \hat{O}_{\gamma}$, the symplectic subspace $\Gamma_{A}$ defines a subalgebra of quantum observables ---generated by products of operators $\hat{O}_{\gamma}$ in the standard manner--- with $\gamma \in \Gamma_A$. In the algebraic approach to quantum field theory, this subalgebra defines a quantum subsystem. This establishes the relation between classical subsystems and their quantum counterpart.

If $\Gamma_{A}$ has finite dimension $2N_A$, the associated subalgebra is isomorphic to the algebra generated by the positions and momenta of $N_A$ quantum harmonic oscillators. The associated Weyl algebra is a Type I von Neumann sub-algebra of the quantum field theory \cite{haag_algebraic_1964}. 

If $\hat{\rho}$ is a Gaussian state in the field theory, its restriction to subsystem $A$ defines the reduced state $\hat{\rho}^{\rm red}_A$. This reduced state is also Gaussian ---as its $n$th-point functions, being a subset of the $n$th-point functions of the Gaussian state $\hat{\rho}$, automatically satisfy equations \eqref{Gauss}. 

Following the discussion in the previous subsection, the Gaussian state $\hat{\rho}^{\rm red}_A$ can be uniquely characterized by the pair $(\mu_A, \sigma_A)$, defined as the restriction to $\Gamma_A$ of the pair $(\mu, \sigma)$ that defines the state $\hat{\rho}$:
\be 
\mu_A := \mu|_{\Gamma_A}, \quad \sigma_A := \sigma|_{\Gamma_A}.
\ee
In general, the Gaussian state $(\mu_A, \sigma_A)$ is mixed, even when the parent state $(\mu, \sigma)$ is pure. The calculation of $(\mu_A, \sigma_A)$ from $(\mu, \sigma)$ is straightforward, as it only involves restricting the action of the latter pair to $\Gamma_A$. We illustrate this calculation with the following example.

\subsubsection{Example: Single-mode Subsystem in a Massless Scalar Theory in Minkowski Spacetime}
\label{example-ball-Minkowski}

This is a continuation of the example introduced in the previous subsection, where a massless, real scalar field is prepared in the Minkowski vacuum $|0\rangle$.

The smallest possible subsystem corresponds to choosing a two-dimensional symplectic subspace $\Gamma_A \subset \Gamma_{\sigma}$. Though $\Gamma_A$ exists independently of any choice of basis, it is convenient to define it by explicitly constructing a basis. So let us consider the following two vectors in $\Gamma_{\sigma}$:
{ \begin{equation}\label{eq:ex1vNS}
    {\bm \gamma}^{(1)} (\vec{x}) = \left(\begin{matrix} 0\\ f^{(\delta)} (\vec{x}) \end{matrix}\right), \quad {\bm \gamma}^{(2)} = \left(\begin{matrix} - g^{(\delta)} (\vec{x})\\ 0 \end{matrix}\right),
\end{equation}
with $f^{(\delta)}(\vec{x})$ defined as the (non-negative, spherically symmetric) function:
\bea\label{eq:fdelta_family}
    f^{(\delta)} (\vec{x})\! &=& \!A_{\delta} \left( 1 - \frac{|\vec{x}|^2}{R^2} \right)^{\delta} \!\Theta(R - r), \\ \nonumber g^{(\delta)} (\vec{x}) \!&=&\!R\,f^{(\delta)} (\vec{x}). 
\eea
Here, $\delta$ is a positive real parameter, kept general for flexibility. $\Theta(x)$ is the Heaviside step function, which makes $f^{(\delta)}(\vec{x})$ compactly supported within a ball of radius $R$, and
\begin{equation}\label{eq:Adelta}
    A_{\delta} = { R^{-2}} \pi^{-3/4} \sqrt{\frac{\Gamma(5/2 + 2\delta)}{\Gamma(1+2\delta)}}\,,
\end{equation}
is a normalization constant, making ${\bm \omega}({\bm \gamma}^{(2)},{\bm \gamma}^{(1)})=1$.}
Figure~\ref{fig:W_delta_vs_r} shows the shape of $f^{(\delta)}(\vec{x})$ for different values of $\delta$.

\begin{figure}

    \begin{flushleft}
    \includegraphics[width=0.45\textwidth]{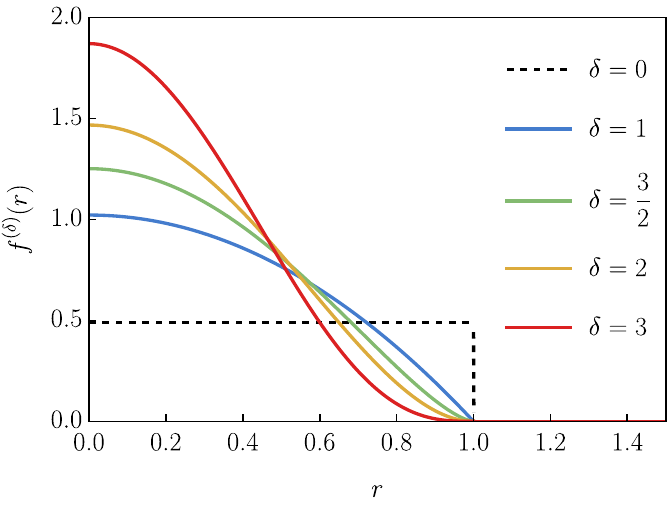}
    \end{flushleft}\caption{\label{fig:W_delta_vs_r} Shape of the smearing functions $f^{(\delta)}(\vec{x})$ for a few values of $\delta$ (with $R=1$). $r$ represents the  radial coordinate.}
\end{figure}

The parameter $\delta$ determines the differentiability class of $f^{(\delta)}(\vec{x})$. For example, for $\delta = 0$, $f^{(\delta)}(\vec{x})$ reduces to the top-hat function, which is discontinuous. In this case, $f^{(0)}(\vec{x})$ belongs to the Sobolev space $\dot{H}_{-\frac{1}{2}}$, but not to $\dot{H}_{\frac{1}{2}}$, so it can be used to smear $\hat{\Phi}(\vec x)$ but not $\hat{\Pi}(\vec x)$. For $\delta=1$, $f^{(1)}(\vec{x})$ is continuous but its first derivative is not. All functions $f^{(\delta)}$ for $\delta > 0$ belong to both $\dot{H}_{\frac{1}{2}}$ and $\dot{H}_{-\frac{1}{2}}$.

The symplectic structure ${\bm \omega}$ restricted to $\Gamma_A$, written in the basis $({\bm \gamma}^{(1)},\bm{\gamma}^{(2)})$, is
\be {\bm \omega}_A=\begin{pmatrix} \omega(\bm\gamma^{(1)},\bm\gamma^{(1)}) & \omega(\bm\gamma^{(1)},\bm\gamma^{(2)}) \\ \omega(\bm\gamma^{(2)},\bm\gamma^{(1)}) & \omega(\bm\gamma^{(2)},\bm\gamma^{(2})
    \end{pmatrix}=
\begin{pmatrix} 0 & -1\\1& 0\end{pmatrix}. \ee
${\bm \omega}_A$ is a symplectic structure in its own right, confirming that the vector space $\Gamma_A$ spanned by $\bm\gamma^{(1)}$ and $\bm\gamma^{(2)}$ is a symplectic subspace of $\Gamma_{\sigma_M}$.

The calculation of $\mu_A$ is straightforward: since $\mu_M = 0$ for the Minkowski vacuum, it automatically follows that $\mu_A$ also vanishes. 

The calculation of $\sigma_A$ reduces to computing the product of the basis vectors $\bm\gamma^{(1)}$ and $\bm\gamma^{(2)}$:
\begin{equation}\label{sigmaexamp1}
    \sigma_A = \begin{pmatrix}
        \sigma(\bm\gamma^{(1)}, \bm\gamma^{(1)}) & \sigma(\bm\gamma^{(1)}, \bm\gamma^{(2)}) \\
        \sigma(\bm\gamma^{(1)}, \bm\gamma^{(2)}) & \sigma(\bm\gamma^{(2)}, \bm\gamma^{(2)})
    \end{pmatrix}\,.
\end{equation}
Using $\sigma_M$ given in Eq.~\eqref{sigmaMink} and the form of the basis vectors, the components of $\sigma_A$ are
\begin{equation}
    \sigma_A = \begin{pmatrix}
        (f^{(\delta)}, f^{(\delta)})_{-1/2} & 0 \\
        0 & (g^{(\delta)}, g^{(\delta)})_{1/2}
    \end{pmatrix}.
\end{equation} 

With this, given any two vectors $\bm\gamma$ and $\bm\gamma'$ in $\Gamma_A$, with components $(a_1, a_2)$ and $(a'_1, a'_2)$, respectively, in the basis $(\bm\gamma^{(1)}, \bm\gamma^{(2)})$ of $\Gamma_A$, we have
\begin{widetext}
    \bea 
    \sigma_A(\bm \gamma, \bm\gamma') &=& 
    \begin{pmatrix} a_1 & a_2 \end{pmatrix}
    \begin{pmatrix}
        (f^{(\delta)}, f^{(\delta)})_{-1/2} & 0 \\
        0 & (g^{(\delta)}, g^{(\delta)})_{1/2}
    \end{pmatrix}
    \begin{pmatrix} a'_1 \\ a'_2 \end{pmatrix} \nonumber \\
    &=& a_1 a_1'\, (f^{(\delta)}, f^{(\delta)})_{-1/2} + 
    a_2 a_2'\, (g^{(\delta)}, g^{(\delta)})_{1/2}.
\eea 
\end{widetext}

The functions $f^{(\delta)}$ ---defined in \eqref{eq:fdelta_family}--- are particularly well-suited for calculating these components analytical, since their Sobolev products can be computed using the Fourier transform of $f^{(\delta)}$:
\bea
    \tilde{f}^{(\delta)} (\vec{k}) &=& A_{\delta} \, \pi^{3/2} 2^{\delta + \frac{3}{2}} R^3 \, \Gamma(\delta + 1) \,\nonumber \\ &\times&  (kR)^{-\delta - \frac{3}{2}} J_{\frac{1}{2} (2\delta + 3)}(kR)\,,
\eea
where $J_{\alpha}(x)$ denotes the Bessel function of the first kind of order $\alpha$ and $\Gamma(x)$ is the Gamma function. The resulting expressions are:
\begin{equation}\label{eq:SN1o2B}
      (g^{(\delta)}, g^{(\delta)})_{\frac{1}{2}} = \frac{4 \Gamma (2 \delta ) \Gamma (\delta +1)^2 \Gamma \left(2 \delta +\frac{5}{2}\right)}{\sqrt{\pi } \Gamma \left(\delta +\frac{1}{2}\right)^2 \Gamma (2 \delta +1) \Gamma (2 \delta +2)}\,,
\end{equation}
\begin{equation}\label{eq:SNm1o2B}
      (f^{(\delta)}, f^{(\delta)})_{-{\frac{1}{2}}} = \frac{2 \Gamma (\delta +1) \Gamma \left(2 \delta +\frac{5}{2}\right) \Gamma \left(\delta +1\right)}{\sqrt{\pi } \Gamma \left(\delta +\frac{1}{2}\right) \Gamma \left(\delta +\frac{3}{2}\right) \Gamma \left(2 \delta +3\right)} \,.
\end{equation}

Substituting these expressions back into \eqref{sigmaexamp1} yields the covariance matrix of the reduced state for subsystem~$A$. Since $\mu_A = 0$, this covariance matrix fully characterizes the reduced state $\hat{\rho}_A^{\mathrm{red}}$. I.e., all physical predictions about subsystem $A$ can be derived from $\sigma_A$. 

\subsection{Correlations and entanglement between subsystems}

Throughout this article, we will focus on quantifying the correlations and entanglement between {\em finite-dimensional} subsystems of a linear bosonic field theory. Given two finite-dimensional subsystems, $A$ and $B$, we can apply standard tools in quantum mechanics to quantify entropies, correlations, and entanglement. 

We will also focus on basis-independent measures of correlations and entanglement, that depend solely on the subsystems under consideration, rather than on any specific choice of basis within each subsystem. In the classical theory, a change of canonical (Darboux) basis in the phase space $\Gamma$ corresponds to a linear canonical transformation, i.e., a linear transformation which leaves the symplectic structure ${\bm \omega}$ unchanged (these linear transformations are commonly referred to as symplectic transformations). The group formed by all such transformations, the symplectic group, is infinite-dimensional in field theory. 

Symplectic transformations restricted to a finite-dimensional subsystem $\Gamma_A$ and correspond to the identity in the symplectic-orthogonal complement $\Gamma_{\bar{A}}$ of $\Gamma_A$, are called {\em system-local} symplectic transformations, and form the  finite-dimensional subgroup ${\rm Sp}(2N_A,\mathbb{R})$.  We will focus on measures of correlations and entanglement that are invariant under such system-local symplectic transformations. 

Entanglement entropy, mutual information, and logarithmic negativity are all invariant under system-local symplectic transformations on each of the two subsystems involved. To make this invariance explicit, we will express these measures in terms of system-local symplectic invariants. It is also recall (see e.g. \cite{serafini2017quantum}) that entropies, correlations, and entanglement in a Gaussian state are fully determined by the covariance metric $\sigma$ and do not depend on the first-moment covector $\mu$. Thus, we will will focus attention on system-local symplectic invariants of a covariant metric $\sigma$.

Consider two single-mode subsystems $A$ and $B$ of the field theory under consideration. The description below straightforwardly generalizes to $N$-dimensional subsystems, with $N \in \mathbb{N}$, but in this article we will primarily restrict to single-mode subsystems, i.e., $N=1$. Let $\sigma_{AB}$ denote the covariance metric of the combined system and  $\sigma_{A}$ and $\sigma_{B}$ those of the individual subsystem. The rank-two covariant tensor $C$ defined as 
\begin{equation}
C^{\text{corr}} = \sigma_{AB} - \sigma_{A} \oplus \sigma_{B}
\end{equation}
encodes all information about correlations between the two subsystems --- $C^{\text{corr}}=0$ if and only if the reduced state $\hat{\rho}_{AB}^{\text{red}}$ is a product state $\hat{\rho}_{AB}^{\text{red}} = \hat{\rho}_{A}^{\text{red}} \otimes \hat{\rho}_{B}^{\text{red}}$. In matrix form, $C^{\text{corr}}$ and $\sigma_{AB}$ have the following structure: 
\begin{equation}\label{coormatrix}
C^{\text{corr}} = \begin{pmatrix} 0 & C \\ C^{T} & 0 \end{pmatrix}, \quad \sigma_{AB} = \begin{pmatrix} \sigma_{A} & C \\ C^{T} & \sigma_{B} \end{pmatrix} .
\end{equation}
Here, $\sigma_A$, $\sigma_B$, and $C$ are $2 \times 2$ matrices. The components of these matrices depend on the choice of basis. On the contrary, the determinants $\det \sigma_{AB}$, $\det \sigma_A$, $\det \sigma_B$, and $\det C$ are invariant under system-local symplectic transformations (this can be easily seen using that all symplectic transformations have unit determinant). These determinants encode the invariant information in the reduced state of the system $(A,B)$ that we are interested in. 

The following six combinations of these determinants will be particularly useful:
\begin{align} \label{eq:symplecticeigvalsall}
\nu_I &\equiv \sqrt{\det \, \bm{\sigma}_I}, \quad I=A,B, \nonumber\\   
\nu_{\pm}^2 &\equiv \frac{\Delta \pm \sqrt{\Delta^2 - 4 \, \det \, \bm{\sigma}_{AB}}}{2}, \\
\tilde{\nu}_{\pm}^2 &\equiv \frac{\tilde{\Delta} \pm \sqrt{\tilde{\Delta}^2 - 4 \, \det \, \bm{\sigma}_{AB}}}{2}, \nonumber
\end{align}
where\footnote{Note, in passing, that $\nu_A$ and $\nu_B$ are the absolute values of the eigenvalues of $J_A$ and $J_B$, respectively. The quantities $\nu_\pm$ are the absolute values of the eigenvalues of $J_{AB}$, and $\tilde{\nu}_\pm$ are derived from the eigenvalues of the \emph{partial transposed} of  $J_{AB}$, defined below.}
\begin{align}
\Delta &:= \det \, \bm{\sigma}_A + \det \, \bm{\sigma}_B + 2 \, \det \, \bm{C}, \\
\tilde{\Delta} &:= \det \, \bm{\sigma}_A + \det \, \bm{\sigma}_B - 2 \, \det \, \bm{C}.
\end{align}

In this article, we will focus on computing (i) the von Neumann entropy of each subsystem, (ii) the mutual information between two subsystems, and (iii) their logarithmic negativity. For the reader's convenience, we briefly recall how these quantities can be computed using the symplectic invariants listed above.

\subsubsection{von Neumann entropy}
The von Neumann entropy of a single-mode subsystem $A$, when the total system is prepared in the Gaussian state $\hat \rho$, is given by
\begin{equation} \label{eq:S} \begin{split}
    S(\nu_A) =& \left[\left( \frac{\nu_A+1}{2}\right) \log_2\left( \frac{\nu_A+1}{2}\right) \right. \\ 
    & \left. - \left( \frac{\nu_A-1}{2}\right) \log_2\left( \frac{\nu_A-1}{2}\right) \right]\,,
\end{split}
\end{equation}
where $\nu_A$ was defined in \eqref{eq:symplecticeigvalsall}. 
Recall that von Neumann entropy  quantifies the mixedness of $\hat \rho^{\rm red}_A$, with $S(\nu_A)=0$ indicating a pure state. 

Furthermore, when the state of the field $\hat \rho$ is pure, $S(\nu_A)$ quantifies the entanglement between $A$ and its complement $\bar{A}$ (containing the remaining degrees of freedom in the field other than $A$). It is important to remember that $S(\nu_A)$ quantifies the entanglement between $A$ and its complement only when the state $\hat \rho$ is pure; for mixed $\hat \rho$, the entropy $S_A$ can be zero even in the absence of entanglement. 

It is also important to note that, for any Hadamard state $\hat{\rho}$, we have $S(\nu_A) \neq 0$ for any subsystem $A$ compactly localized in space. This fact follows from the Reeh--Schlieder theorem~\cite{reehschlieder}. Consequently, all compactly supported subsystems are entangled with other field modes.

\subsubsection{Correlations}

Mutual information serves as an invariant measure of the correlations between two subsystems. For two single modes, it can be computed in terms of the von Neumann entropy of each subsystem as:
\begin{equation} \label{eq:def_MI}
\mathcal{I}(A,B) = S(\nu_A) + S(\nu_B) - S(\nu_+) - S(\nu_-)\,. 
\end{equation}
Mutual information captures all correlations between the subsystems, both classical and quantum. When the state $\hat \rho_{AB}$ describing the combined system is pure, all correlations are genuinely quantum, and non-zero mutual information implies the existence of entanglement. However, this is not true if $\hat \rho_{AB}$ is mixed. 

In field theory, if $A$ and $B$ correspond to subsystems made of field modes compactly localized in space, $\hat \rho_{AB}$ is {\em always} mixed. Thus, in this context, it is essential not to identify $\mathcal{I}(A,B)$ with entanglement. 

\subsubsection{Entanglement between subsystems}\label{entsubs}

As already mentioned, when the reduced state $\hat \rho_{AB}$ is mixed, the von Neumann entropy of one of the subsystems does not quantify entanglement between $A$ and $B$. In this case, it is necessary to use entanglement measures applicable to general states $\hat \rho_{AB}$. Logarithmic negativity (LN) provides such an entanglement measure, particularly useful when dealing with quantum Gaussian states given the efficiency with which it can be computed. 

The LN is defined via the positivity of the partial transpose (PPT) criterion for separability{~\cite{peres96,Horodecki:1996nc,Simon_2000}}.\footnote{The PPT criterion states that, if the partial transpose of $\hat \rho_{AB}$ ---i.e., the result of transposing the operator $\hat \rho_{AB}$ only with respect to one of the subsystems, either $A$ or $B$--- fails to be positive semi-definite, then $\hat \rho_{AB}$ is entangled. This holds regardless of the basis used to perform the partial transposition~\cite{serafini2017quantum}.} For general states $\hat \rho_{AB}$ ---pure or mixed, Gaussian or not--- the LN is defined from the \emph{partial transposition} of $\hat \rho_{AB}$ as:
\begin{equation} \label{LNnorm1}
\mathrm{LN}(\hat{\rho}) = \log_2 \| \hat{\rho}_{AB}^{T_A} \|_{1} \, ,
\end{equation}
where $T_A$ denotes the transposition only on subsystem $A$, and $\| \cdot \|_1$ denotes the trace norm, defined as $\| \hat O \|_1:={\rm Tr}\sqrt{\hat O^\dagger \hat O}$ and equal to the sum of the absolute value of the eigenvalues of $\hat O$, when diagonalizable. Although transposition is a basis-dependent operation, it is straightforward to show that $\| \hat{\rho}^{T_A} \|_{1}$ does not depend on the basis used in the transposition. Thus, LN is invariant under system-local symplectic transformations. 

When $\hat \rho_{AB}$ is a Gaussian state and both $A$ and $B$ are single-mode systems, Eq.~\eqref{LNnorm1} reduces to \cite{serafini2017quantum}
\begin{equation} \label{LN}
\mathrm{LN}(\hat{\rho}_{AB}) = \max\{0, -\log_{2} \tilde{\nu}_-\}\,,
\end{equation}
where $\tilde{\nu}_-$ was defined in Eq.~\eqref{eq:symplecticeigvalsall}. Thus, $\mathrm{LN}(\hat{\rho}_{AB})$ can be computed in a remarkably simple manner. 

For a two-mode system $(A,B)$ prepared in a Gaussian state, it has been proven in \cite{Simon_2000} that LN is non-zero if and only if $A$ is entangled with $B$. Therefore, the condition $\tilde{\nu}_- < 1$ is both necessary and sufficient for quantum entanglement in such systems. Furthermore, lower values of $\tilde{\nu}_-$ ---corresponding to higher values of LN--- indicate a greater degree of entanglement. \\

The tools presented so far in this section, summarized in Table~\ref{tab:summary}, allow the computation of correlations and entanglement between any pair of modes within a field theory and enable the study of how these quantities vary with distance and, in the case of de Sitter space, with curvature. 

\begin{table*}
    \centering
    \begin{tabular}{l|c}
   \hline \hline
                  Measure         &Used in this article to characterize\\
                  \hline
       von Neumann Entropy  & Entanglement in pure states \\
       
      Mutual information   &Total (classical and quantum) correlations in pure and mixed states\\
    Logarithmic Negativity  &  Entanglement in pure and mixed states \\
    \hline\hline 
    \end{tabular}
    \caption{Summary of some correlation measures and how they are used in this work.}
    \label{tab:summary}
\end{table*}

It is worth highlighting that working with finite-dimensional subsystems ensures that all quantities defined above are free from ultraviolet divergences.

\section{The cosmological patch of de Sitter spacetime\label{sec:dSqft}}

In this section, we provide a concise overview of the quantum theory of a linear scalar field  in the cosmological  patch of de Sitter spacetime. This theory is of great interest for early universe cosmology, as it accurately describes the scalar curvature perturbations and (individual polarization modes of) tensor perturbations in the cosmic inflationary phase. Keeping this application in mind, we will focus attention on a light field, whose mass is small compared to the Hubble radius $m^2/H^2\ll 1$. 

Consider a four-dimensional manifold with $\mathbb{R}^4$ topology equipped with the conformally-flat line element 
\begin{equation}
    ds^2 = a^2(\eta) \left( -d\eta^2 + d\mathbf{x}^2 \right),
    \label{eq:dS_metric}
\end{equation}
where $\mathbf{x}\in \mathbb{R}^3$ and $\eta$ ranges from $-\infty$ to $0$. The function $a(\eta)$ ---the so-called scale factor---  is chosen to be $a(\eta) = - \frac{1}{H \eta}$. This spacetime belongs to the 
Friedmann-Lema\^itre-Robertson-Walker family of spatially-flat cosmologies, it has constant Ricci curvature throughout the spacetime, and it has a spacelike singularity when $\eta\to -\infty$ (the big bang). 

This spacetime is isomorphic to ``half'' of four-dimensional de Sitter spacetime. For this reason, it is commonly referred to as the \emph{cosmological} or \emph{Poincaré patch} of de Sitter space (abbreviated as PdS hereafter).

Although a generic FLRW spacetime possesses six isometries---corresponding to spatial translations and rotations---the PdS patch admits an additional Killing vector field. This extra symmetry can be understood as being ``inherited'' from the full de Sitter spacetime, which is maximally symmetric. 

The de Sitter group in four spacetime dimensions has ten independent Killing vector fields. Locally, all of them are isometries of PdS. However, because the Poincaré patch covers only a portion of de Sitter space, not all of these transformations correspond to global isometries of PdS. Only the subgroup that leaves the Poincaré patch invariant corresponds to global isometries of PdS (see, e.g., \cite[Sec.~IV~C]{abk1}). This subgroup forms a seven-dimensional Lie group, generated by three spatial translations, three rotations, and one additional isometry defined by the Killing vector field
\begin{equation}
K^{\mu} = -H\, \eta\, \partial^{\mu}_{\eta} - H\, x\, \partial^{\mu}_{x} - H\, y\, \partial^{\mu}_{y} - H\, z\, \partial^{\mu}_{z}\,.
\end{equation}

It is well known that invariance under spatial rotations and translations alone is insufficient to single out a preferred vacuum state in the quantum field theory under consideration. However, the inclusion of the additional isometry in PdS selects a unique state that is both invariant under the seven global isometries of PdS and satisfies the Hadamard condition. This  distinguished state is known as the \textit{Bunch-Davies vacuum} \cite{Chernikov:1968zm,Tagirov:1972vv,Bunch:1978yq}. Since $K^{\mu}$ is not time-like in the entire Poincaré patch, the Bunch-Davies vacuum is not the ground states of any Hamiltonian and quanta over it do not  have a natural interpretation in terms of particles.

For a scalar field obeying the Klein-Gordon equation,
\begin{equation}
    (\Box + m^2) \hat{\phi}(\eta, \mathbf{x}) = 0\,,
\end{equation}
the Bunch-Davies vacuum can be described mathematically as follows. Adopting a Fock representation, one can express the operator-valued distribution $\hat{\phi}(\eta, \vec{x})$ as
\begin{equation}\label{BDrep}
    \hat{\phi}(\eta, \vec x) \!= \!\!\int\!\!\! \frac{d^3 k}{(2\pi)^3} e^{i \vec k \cdot \vec x}\!\left( e^{\mathrm{BD}}_{k}(\eta)   \hat{A}_{\mathbf{k}} + e^{\mathrm{BD}*}_{k}(\eta) \hat{A}_{-\mathbf{k}}^\dagger \right),
\end{equation} 
where the functions $e^{\mathrm{BD}}_{k}(\eta) e^{i \vec k \cdot \vec x}$ are mode solutions of the Klein-Gordon equation, with
\begin{equation}
    e^{\mathrm{BD}}_{k}(\eta) = \sqrt{\frac{-\pi \eta}{4 a(\eta)^2}} H^{(1)}_{\nu}(-k\eta),
\end{equation}
and $H^{(1)}_{\nu}$ the Hankel function of the first kind of order $\nu = \sqrt{\frac{9}{4} - \frac{m^2}{H^2}}$. These complex solutions are referred to as the \textit{Bunch-Davies mode functions}, and the Fock vacuum annihilated by all operators $\hat{A}_{\mathbf{k}}$ is the Bunch-Davies vacuum.

The Bunch-Davies vacuum is a Gaussian state with vanishing first moments, $\mu_{\mathrm{BD}} = 0$. Its covariance metric, $\sigma_{\mathrm{BD}}$, can be computed analogously to the Minkowski spacetime example discussed in the previous section. Specifically, the action of $\sigma_{BD}(\vec{x}, \vec{x}')$ on two classical phase space elements, $\bm \gamma(\vec{x}) = (g(\vec{x}), f(\vec{x}))$ and $\bm \gamma'(\vec{x}) = (g'(\vec{x}), f'(\vec{x}))$, is given by
\begin{widetext}
    \bea\label{sigmap}
    \sigma_{\mathrm{BD}}(\bm \gamma, \bm \gamma') &=& \int d^3xd^3x'\begin{pmatrix} g(\vec x) & f(\vec x)\end{pmatrix}\begin{pmatrix} \langle 0|\{ \hat{\Pi}(\vec{x}),\hat{\Pi}(\vec{x}')\} |0\rangle & \langle 0|\{\hat{\Phi}(\vec{x}') ,\hat\Pi(\vec{x})\}|0\rangle \\\langle 0|\{\hat{\Phi}(\vec{x}) ,\hat \Pi(\vec{x}')\}|0\rangle & \langle 0|\{\hat{\Phi}(\vec{x}), \hat \Phi(\vec{x}')\}|0\rangle
\end{pmatrix}\begin{pmatrix} g'(\vec x')\\f'(\vec x')\end{pmatrix}  \\ \nonumber  &=&    \langle 0 | \{ \hat{\Pi}[g], \hat{\Pi}[g'] \} | 0 \rangle 
    + \langle 0 | \{ \hat{\Phi}[f], \hat{\Pi}[g'] \} | 0 \rangle 
+ \langle 0 | \{ \hat{\Phi}[f'], \hat{\Pi}[g] \} | 0 \rangle 
    + \langle 0 | \{ \hat{\Phi}[f], \hat{\Phi}[f'] \} | 0 \rangle.
\eea
\end{widetext}

Similar to Minkowski spacetime, the classical phase space can be taken, to begin with, as consisting of functions in Schwartz space, and the Cauchy-complete it using $\sigma_{\mathrm{BD}}$.

Each term in $\sigma_{\mathrm{BD}}(\bm \gamma, \bm \gamma')$ can be computed using the field representation \eqref{BDrep}. When the functions comprising $\bm \gamma$ and $\bm \gamma'$ are compactly supported in a small region of space and the typical separation between their supports is small ---both distances compared to the Hubble radius--- the product $\sigma_{\mathrm{BD}}(\bm\gamma, \bm\gamma')$ is well-approximated by $\sigma_M(\bm\gamma, \bm\gamma')$ (this  is explicitly illustrated below using several examples). To observe sizable differences between the Bunch-Davies vacuum and the Minkowski vacuum, one must consider either  $\bm\gamma(\vec{x})$ and $\bm \gamma'(\vec{x})$ supported in ``super-Hubble'' regions --- i.e., regions with $R > H^{-1}$ --- or functions whose supports are separated by super-Hubble distances.\footnote{ When working with non-spherically symmetric functions, we define the ``radius of support'' of a function $f(\vec x)$ as 
\[
R = \left( \frac{3}{4\pi} \int_A d^3x\, \sqrt{h} \, f(\vec{x}) \right)^{1/3}.
\]
The pre-factor $(3/4\pi)^{1/3}$ is somewhat irrelevant, since we will use $R$ merely as a reference scale. When the discussion involves two functions, as for instance when talking about a phase space element ${\bm \gamma}$, $R$ will refer to the largest of the scales $R$ defined from each function.}

In inflationary cosmology, interest is  focused on  field modes whose support satisfies $R \gg H^{-1}$ at the end of inflation, since these are the modes that become observationally accessible to us. (More precisely, the smallest primordial wavelength resolvable in the cosmic microwave background is of the order of $10^4$ Mpc today. At the end of inflation, such a mode had a physical wavelength exceeding $e^{50}$ times the Hubble radius in typical inflationary models \cite{Liddle_2003}.) Motivated by this, we will henceforth restrict our attention to the regime $R \gg H^{-1}$, although many of the results presented here can be extended to other regimes, as discussed below. Since we are interested in the regime $RH \gg 1$, we will use $RH$ as our control parameter. \\

\begin{prop}\label{prop:BDcorrelations}
In the regime of interest, namely $RH \gg 1$ and $m/H \ll 1$, the symmetrized expectation values appearing in Eq.~\eqref{sigmap} can be expanded in powers of $RH$ and ${\mu}^2 \equiv \frac{3}{2} - \sqrt{\frac{9}{4} - \frac{m^2}{H^2}} \ll 1$. To leading order, the expressions are:
\begin{widetext}
\bea \label{phiphi} \braket{\{\hat{\Phi}[f],\hat{\Phi}[f']\}} &=& \mathrm{Re}(f,f')_{-\frac{1}{2}} + \frac{2^{2-2\mu^2}\,\pi\, (RH)^{2-2\mu^2}}{\cos^2(\pi \mu^2)\, \Gamma \left(- \frac{1}{2} + \mu^2\right)^2 } \, { R^{-2+2\mu^2}}\,   \mathrm{Re}(f,f')_{-\frac{3}{2}+\mu^2} +\mathcal{N}_{\Phi\Phi}, \\ 
\label{pipi}
    \braket{\{\hat{\Pi}[g],\hat{\Pi}[g']\}} &=& \mathrm{Re}(g, g')_{\frac{1}{2}}+ \mathcal{N}_{\Pi\Pi}\,, \\
\label{phipi}
    \braket{\{\hat{\Phi}[f],\hat{\Pi}[g']\}} &=& \frac{2^{1-\mu^2}{R^{-1+\mu^2}}\, \sqrt{\pi} \, (RH)^{1-\mu^2}}{\cos(\pi \mu^2)\, \Gamma \left(-\frac{1}{2} + \mu^2\right)}\,    \mathrm{Re}(f, g')_{-\frac{1- \mu^2}{2} }+\mathcal{N}_{\Phi\Pi}\,.
\eea

\end{widetext}
\end{prop}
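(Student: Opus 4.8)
The plan is to compute each of the three symmetrized two-point functions appearing in Eq.~\eqref{sigmap} directly from the Bunch--Davies mode expansion \eqref{BDrep}, passing to Fourier space where the smearing reduces everything to radial $k$-integrals against $|\tilde f(\vec k)|^2$, $|\tilde g(\vec k)|^2$, and $\tilde f(\vec k)\tilde g^*(\vec k)$. Concretely, inserting \eqref{BDrep} into $\langle 0|\{\hat\Phi[f],\hat\Phi[f']\}|0\rangle$ and using $[\hat A_{\vec k},\hat A^\dagger_{\vec k'}]=(2\pi)^3\delta^{(3)}(\vec k-\vec k')$ gives, at a fixed Cauchy slice $\eta$,
\begin{equation}
\langle 0|\{\hat\Phi[f],\hat\Phi[f']\}|0\rangle = 2\int\frac{d^3k}{(2\pi)^3}\,|e^{\mathrm{BD}}_k(\eta)|^2\,\mathrm{Re}\!\left[\tilde f(\vec k)\tilde f'^{\,*}(\vec k)\right],
\end{equation}
with analogous expressions for the $\Pi\Pi$ kernel (replacing $e^{\mathrm{BD}}_k$ by $\sqrt h\,n^a\nabla_a$ acting on it, i.e.\ essentially $a^2\partial_\eta e^{\mathrm{BD}}_k$) and for the mixed $\Phi\Pi$ kernel (the cross term, which is nonzero here unlike in Minkowski because the Bunch--Davies modes are not purely ``positive-frequency'' with respect to $\eta$-translations). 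The first step is therefore to write down these three master integrals explicitly and substitute $e^{\mathrm{BD}}_k(\eta)=\sqrt{-\pi\eta/(4a^2)}\,H^{(1)}_\nu(-k\eta)$ together with its conformal-time derivative, using standard Hankel recursion relations to simplify $\partial_\eta H^{(1)}_\nu$.

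The second step is the asymptotic analysis. Fix the reference scale $R$ (support radius) and note that $\tilde f(\vec k)$ is concentrated at $k\lesssim R^{-1}$; since we evaluate on a slice where the physical support size is $aR$ and we are interested in $aR H\gg1$, i.e.\ $-k\eta = k/(aH)\ll1$ over the relevant range of $k$, we may expand the Hankel functions for small argument: $H^{(1)}_\nu(z)\approx -\tfrac{i}{\pi}\Gamma(\nu)(z/2)^{-\nu}+\tfrac{1}{\Gamma(\nu+1)}(z/2)^{\nu}+\dots$, and similarly for the derivative term. Writing $\nu=3/2-\mu^2$ and keeping the two leading powers, $|e^{\mathrm{BD}}_k(\eta)|^2$ becomes a sum of a term $\propto k^{-3+2\mu^2}$ (from the dominant $z^{-\nu}$ piece) plus the would-be flat-space term $\propto k^{-1}$ (the subleading cross term between the two series). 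When folded against $\mathrm{Re}[\tilde f\tilde f'^{\,*}]$ these reproduce exactly the Sobolev products $\mathrm{Re}(f,f')_{-1/2}$ and $\mathrm{Re}(f,f')_{-3/2+\mu^2}$ with the claimed $(RH)$- and $R$-dependent prefactors; the explicit prefactor, with its $\cos^2(\pi\mu^2)$ and $\Gamma(-1/2+\mu^2)^2$, comes from carefully tracking the $\Gamma$-function coefficients in the small-$z$ expansion and the powers of $2$, $\pi$, $\eta$, $a$, $H$. For the $\Pi\Pi$ kernel the extra factors of $k$ from the derivative push the leading behavior to $\mathrm{Re}(g,g')_{1/2}$ with the de Sitter correction now higher-order (hence absorbed into $\mathcal N_{\Pi\Pi}$), and for the mixed kernel one gets a single leading term $\propto\mathrm{Re}(f,g')_{-(1-\mu^2)/2}$ with a half-integer power of $(RH)$. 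The residual terms $\mathcal N_{\Phi\Phi}$, $\mathcal N_{\Pi\Pi}$, $\mathcal N_{\Phi\Pi}$ collect all strictly subleading contributions in the double expansion in $RH$ and $\mu^2$, and one should state the order at which they enter (a remark rather than a sharp bound will suffice for the purposes here).

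The third ingredient is purely bookkeeping: one must fix the choice of Cauchy slice and show the leading-order result is slice-independent once expressed in terms of $R$ (the comoving support radius) and the physical ratio $RH$ evaluated at that slice — equivalently, absorb all $\eta$-dependence into $a(\eta)R$ and rename. Since $\sigma_{\mathrm{BD}}$ was defined on an abstract $3$-manifold with $R$ the comoving scale, this is just the statement that the physical smearing region has size $aR$ and the expansion parameter is $aRH$; I would present this cleanly at the outset so the $\eta$'s never clutter the asymptotics.

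I expect the main obstacle to be the precise determination of the numerical and $\mu^2$-dependent prefactors rather than the structure of the expansion. The structure (which Sobolev orders appear, which powers of $RH$) follows almost mechanically from the small-argument Hankel expansion; but getting $\cos^2(\pi\mu^2)\Gamma(-1/2+\mu^2)^2$ exactly right requires using the reflection formula $\Gamma(\nu)\Gamma(1-\nu)=\pi/\sin(\pi\nu)$ to rewrite the product of $\Gamma$-functions coming from the two series of $H^{(1)}_\nu$, being careful that $\nu=3/2-\mu^2$ so $\sin(\pi\nu)=\cos(\pi\mu^2)$ up to sign, and then correctly combining this with the $k$-integral identity $\int_0^\infty dk\,k^{2s}|\tilde f(k)|^2 \propto (f,f)_s$ that defines the Sobolev product. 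A secondary subtlety is handling the fact that the naive term-by-term small-$z$ expansion of $|H^{(1)}_\nu|^2$ generates a formally divergent constant-in-$k$ piece whose integral against $|\tilde f|^2$ would be a pure ``mass'' term; this must be shown either to cancel or to be genuinely subleading (it contributes to $\mathcal N_{\Phi\Phi}$), which is where one uses $m/H\ll1$, i.e.\ $\mu^2\ll1$, to argue that $(RH)^{2-2\mu^2}\gg 1$ dominates any $O((RH)^0)$ correction.
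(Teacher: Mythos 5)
Your identification of the leading terms follows the same route as the paper: insert the Bunch--Davies mode expansion, reduce to radial $k$-integrals against $\tilde f\tilde f'^*$, $\tilde g\tilde g'^*$, $\tilde f\tilde g'^*$, expand $H^{(1)}_{3/2-\mu^2}$ for small argument, and use the reflection formula to assemble the $\cos^2(\pi\mu^2)\,\Gamma(-\tfrac12+\mu^2)^2$ prefactor. That part is fine. The genuine gap is in how you treat the remainders. In the paper the terms $\mathcal N_{\Phi\Phi}$, $\mathcal N_{\Pi\Pi}$, $\mathcal N_{\Phi\Pi}$ are \emph{defined} as the exact integrals minus the displayed leading terms, so the displayed equalities are trivially exact; the entire content of the proposition, and hence of its proof, is the claim that these remainders are $\mathcal O(\mu^2)$. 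You propose to dispatch this with ``a remark rather than a sharp bound,'' but that is precisely the step that cannot be waved through: the $k$-integrals run over all momenta, and the small-argument expansion you rely on is only valid for $k\ll aH$. In fact the Minkowski piece $\mathrm{Re}(f,f')_{-1/2}$ is dominated by the UV tail $k\gtrsim aH$, where $|H^{(1)}_{3/2-\mu^2}(k/H)|^2$ must instead be compared to its large-argument form $2H/(\pi k)$. A term-by-term small-$z$ expansion therefore cannot by itself establish that the difference between the exact kernel and the two subtracted power laws is uniformly small in $k$; for $\mu=0$ the identity $|H^{(1)}_{3/2}(x)|^2=\tfrac{2}{\pi x}(1+x^{-2})$ makes the remainder vanish exactly, but for $\mu\neq0$ one needs control over the whole range of $x$.

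The paper supplies exactly this missing control: it uses an integral representation of $x^{2\alpha}|H^{(1)}_\alpha(x)|^2$ (valid for all real positive $x$ and $\tfrac12<\alpha<\tfrac32$) and the identity $J_\alpha J_{\alpha-1}+Y_\alpha Y_{\alpha-1}=\tfrac{1}{2x^{2\alpha}}\tfrac{d}{dx}\bigl(x^{2\alpha}|H^{(1)}_\alpha(x)|^2\bigr)$ to derive rigorous, $k$-uniform bounds of the form $|\mathcal N_{\Phi\Phi}|\leq \mu^2\bigl(\|\mathfrak F_{IJ}\|^2_{-1/2}+\tfrac{2}{\pi}H\|\mathfrak F_{IJ}\|^2_{-1}\bigr)$, and analogously for $\mathcal N_{\Phi\Pi}$ and $\mathcal N_{\Pi\Pi}$, with the norms built from $\sqrt{|\tilde f_I||\tilde f_J|}$ etc. It then separately establishes the $RH$-scaling of the remainders in both regimes ($\mathcal O(\mu^2\log RH)$ for $\mathcal N_{\Phi\Phi},\mathcal N_{\Pi\Pi}$ and $\mathcal O(\mu^2 (RH)^{1-\mu^2})$ for $\mathcal N_{\Phi\Pi}$ at large $RH$; vanishing as $RH\to0$), which is needed later in the paper. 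To complete your argument you would need to replace your closing ``remark'' with a bound of this uniform type; your observation about the ``formally divergent constant-in-$k$ piece'' gestures at the right issue but does not resolve it.
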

Without loss of generality, we have chosen the scale factor $a(\eta_0) =1$ at the instant we are calculating these correlations functions. 
The terms denoted by $\mathcal{N}_{\Phi\Phi}$, $\mathcal{N}_{\Pi\Pi}$, and $\mathcal{N}_{\Phi\Pi}$ are of order $\mathcal{O}(\mu^2)$ when $\mu \ll 1$, and are therefore subleading. Their explicit form, together with a proof of this statement, is given in Appendix~\ref{app:proofprop1}. The terms $\mathcal{N}_{\Phi\Phi}$, $\mathcal{N}_{\Pi\Pi}$, and $\mathcal{N}_{\Phi\Pi}$ will not play a relevant role in the remainder of this article—although we explicitly compute them in several examples below to illustrate that they are indeed subleading (see, e.g. Fig.~\ref{fig:2pfballwithNterms}).

Expressions \eqref{phiphi},  \eqref{pipi}   and~\eqref{phipi} can be directly compared with the corresponding ones in Eq.~\eqref{synmcorrMink} for the Minkowski vacuum. The main differences arise in two aspects: (i) the field-momentum correlation \eqref{phipi} is different from zero for the Bunch-Davies vacuum, unlike for the Minkowski vacuum, and (ii) the field-field correlations \eqref{phiphi} in the Bunch-Davies vacuum include an extra term proportional to the Sobolev product of order $-\frac{3}{2} + \mu^2$. This term is absent in the Minkowski vacuum and is the origin of the most relevant difference between both states.

In the flat-space limit $H \to 0$ and $m\to 0$ (in this order), the terms $\mathcal{N}_{\Phi\Phi}$, $\mathcal{N}_{\Pi\Pi}$, and $\mathcal{N}_{\Phi\Pi}$ vanish, and 
\eqref{phiphi}-\eqref{phipi} reduce to the smeared two-point functions of Minkowski spacetime, as given in Eq.~\eqref{synmcorrMink}.

The term proportional to the Sobolev product of order $-\frac{3}{2} + \mu^2$ in \eqref{phiphi} is responsible for most of the distinctive features of the Bunch-Davies vacuum.  For instance, for $f$ and $f'$  supported within non-overlapping spherical regions, 
when $H \neq 0$ and the separation between the supports of $f$ and $f'$  is large compared to the Hubble radius, i.e., $|\Delta \vec x| \gg H^{-1}$, it is simple to show that 
\be \label{almostscinv}
\mathrm{Re}(f,f')_{-\frac{3}{2}+ \mu^2} \sim |\Delta \vec x|^{-2\mu^2}\,,
\ee
when $\mu\ll 1$ (see Appendix~\ref{app:asymptotic_sobolev} and the proof of Prop.~\ref{prop:asymptotics_partner} below for details of the derivation). Thus, this Sobolev product is responsible for the characteristic, nearly scale-invariant field-field correlations of the Bunch-Davies vacuum. Furthermore, this term is infrared divergent in the limit $m \to 0$, accounting for the well-known infrared divergence of the Bunch-Davies state. In contrast, the momentum-momentum and field-momentum correlations remain finite when $m \to 0$ and do not exhibit approximate scale invariance.

We use Eqs.~\eqref{phiphi}-\eqref{phipi} to write the  complex structure of the Bunch--Davies vacuum in the limit $\mu \ll 1$ and $RH\gg1$ as
\begin{widetext}

{  \begin{equation}\label{JdS}
     J_{BD}(\vec x,\vec x')=\int \frac{d^3k}{(2\pi)^3} e^{-i \vec{k}\cdot (\vec{x}-\vec{x}')} \left( \begin{matrix}
        K\, (RH)^{1- \mu ^2} \, k^{-1+ \mu ^2} & - \frac{1}{k}- K^2\,(RH)^{2-2\mu^2}\, k^{-3+2\mu^2} \\ 
     k & -K\, (RH)^{1- \mu ^2} \, k^{-1+ \mu ^2}
    \end{matrix}\right) + \mathcal{O}(\mu^2) \,.
\end{equation}  
where $K =  \frac{2^{1-\mu^2} \sqrt{\pi}R^{-1+\mu^2}}{\cos(\pi \mu^2)\Gamma\left(-\frac{1}{2} + \mu^2\right) }$.
Is it easy to check that $J_{BD}^2=-\mathbb{I} \ \delta(\vec x -\vec x')+\mathcal{O}(\mu^2)$,  and $J_{BD}$ reduces to $J_M$ in \eqref{JM} in the limit $H\to 0$ and $\mu\to 0$ (in this order).} 
\end{widetext}

{\bf Remark: \label{remark_IR}} As mentioned before, the field-field correlation \eqref{phiphi} suffers from an infrared divergence in the massless limit. One way to avoid this
divergence while keeping $m=0$ is by restricting the test functions \( f(\vec{x}) \) used to smear the field operator. In particular, if one restricts to smearing functions with zero spatial average, i.e., \( f \in \Gamma_{\sigma_{BD}} \) such that
\be \label{zeroaverage} 
\int_{\mathbb{R}^3} d^3x\, f(\vec{x}) = 0,
\ee
all correlation functions are infrared finite. This restriction effectively reduces the kind of degrees of freedom of the theory. 

Smearing functions satisfying \eqref{zeroaverage} restricts us to a rather special family of degrees of freedom. This family is special because it does not display several of the distinctive features of the Bunch-Davies vacuum. In particular, the long-distance behavior of 
\(\mathrm{Re}(f,f')_{-\frac{3}{2}+ \mu^2}\) shown in \eqref{almostscinv} no longer holds. Consequently, correlations for these field modes do not exhibit an almost scale-invariant behavior at large separations (see Appendices~\ref{app:asymptotic_sobolev} and \ref{app:proofs_special_functions} for further details on these special functions).
 
Since functions in this family are somewhat fine-tuned and do not display the distinctive features of the Bunch-Davies vacuum, most of the proofs presented in this article need to be adapted accordingly. To improve readability, the proofs specific to this family have been relegated to { Appendix~\ref{app:proofs_special_functions}}.

It is also worth noting that condition \eqref{zeroaverage} is morally analogous to introducing an infrared cutoff \( k_c \) on the modulus of the wave numbers in Fourier space. However, \eqref{zeroaverage} is a somewhat milder constraint, as it corresponds to removing only the zero mode \( \vec{k} = 0 \), rather than truncating all low-momentum modes $|\vec k|<k_c$.\footnote{ A subset of these special modes has been recently examined in \cite{Nambu_2023} within the framework of a massless scalar field in de Sitter spacetime.} This remark serves to contextualize the rather drastic limitation on accessible degrees of freedom imposed by introducing an infrared cutoff.

\subsection{Examples}
\label{subsec:vNEntropy_examples}
To illustrate the calculations discussed so far in this section, we now present three concrete examples for which the correlators in Eqs.~\eqref{phiphi}--\eqref{phipi} can be computed analytically. We will repeatedly use these examples throughout this article. 

\begin{exmp}
\label{ex:singlemodeball}
{\bf Single-mode subsystem supported in a ball}\\

This example is the same as that presented in Sec.~\ref{example-ball-Minkowski}, but now using the Bunch-Davies vacuum for a light scalar field, instead of the Minkowski vacuum for a massless field.

The calculations required to obtain the self-correlators \eqref{phiphi}--\eqref{phipi} reduce to evaluating Sobolev products of the functions $f^{(\delta)}$ and  $g^{(\delta)}=R \,f^{(\delta)}$ used to define the single-mode subsystem (these functions were defined in Eq.~\eqref{eq:ex1vNS}). As in the Minkowski case, these norms can be computed analytically:
\begin{widetext}
\begin{equation}\label{eq:SNm3o2AB}
{ (f^{(\delta)},f^{(\delta)})_{-\frac{3}{2} + \mu^2}} 
= \frac{2 \Gamma (\delta +1) \Gamma \left(2 \delta +\frac{5}{2}\right) \Gamma \left(\mu ^2\right) \Gamma \left(-\mu ^2+\delta +2\right)}{
\sqrt{\pi } \Gamma \left(\delta +\frac{1}{2}\right) \Gamma \left(-\mu ^2+\delta +\frac{5}{2}\right) \Gamma \left(-\mu ^2+2 \delta +4\right)} {  R^{2-2\mu^2}}\,,
\end{equation}
and 
\begin{equation} \label{eq:SNm1o2pmu2AB}
{ (f^{(\delta)},g^{(\delta)})_{-\frac{1}{2} + \frac{\mu^2}{2}}} 
= \frac{2 \Gamma (\delta +1) \Gamma \left(2 \delta +\frac{5}{2}\right) \Gamma \left(\frac{\mu ^2}{2}+1\right) \Gamma \left(-\frac{\mu ^2}{2}+\delta +1\right)}{
\sqrt{\pi } \Gamma \left(\delta +\frac{1}{2}\right) \Gamma \left(-\frac{\mu ^2}{2}+\delta +\frac{3}{2}\right) \Gamma \left(-\frac{\mu ^2}{2}+2 \delta +3\right)}{  R^{1-\mu^2}}\,.
\end{equation}
\end{widetext}
The norms  $||f^{(\delta)}||^2_{-1/2}$ and $||g^{(\delta)}||^2_{1/2}$ were reported in \eqref{eq:SN1o2B} and \eqref{eq:SNm1o2B}.

Note that $||f^{(\delta)}||^2_{-3/2+\mu^2}$ is proportional to $\Gamma(\mu^2)$, which diverges in the massless limit $\mu^2 \to 0$ producing the infra-red divergence of the Bunch-Davies vacuum.

The self-correlators \eqref{phiphi}--\eqref{phipi} are plotted in Fig.~\ref{fig:2pfballwithNterms} for different values of $RH$ and different masses $\mu$.

To verify that the terms $\mathcal{N}_{\Phi\Phi},\mathcal{N}_{\Phi\Pi}$ and $\mathcal{N}_{\Pi\Pi}$ in Eqs.~\eqref{phiphi}--\eqref{phipi} are indeed subleading in the regime $RH \gg 1$ and $\mu \ll 1$, we have evaluated them numerically (their explicit expressions are provided in Appendix~\ref{app:proofprop1}). 
\begin{figure*}[hbtp]
    \centering
\includegraphics[width=0.35\textwidth]{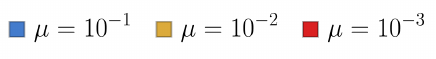}\\
    \subfigure[]{\includegraphics[width=0.45\textwidth]{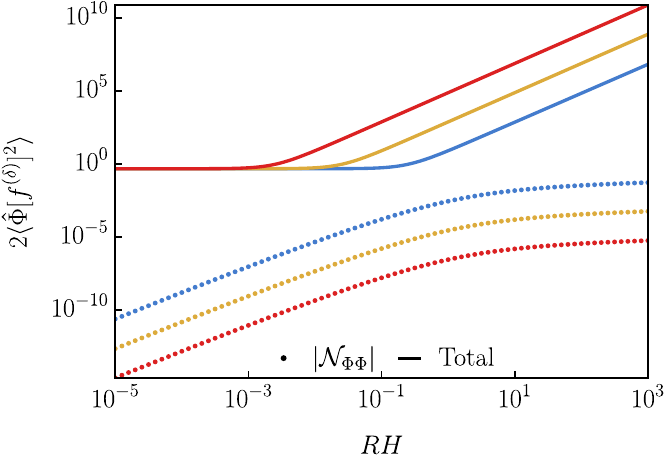}}~\subfigure[]{\includegraphics[width=0.45\textwidth]{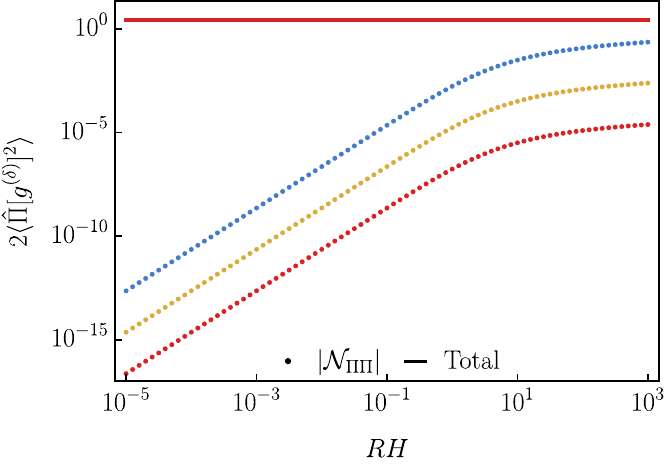}} \\
    \subfigure[]{\includegraphics[width=0.45\textwidth]{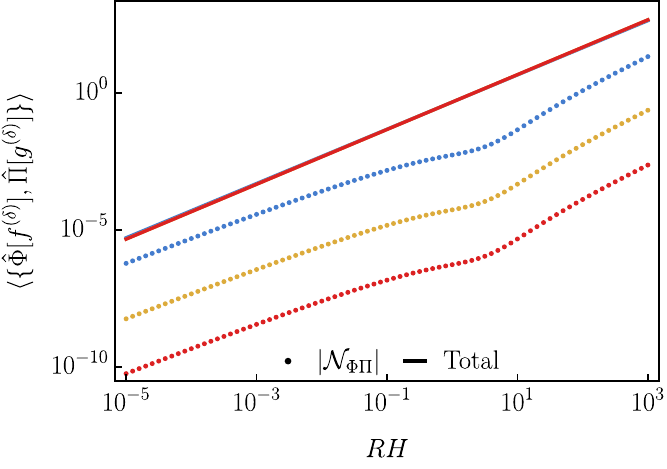}} 
   \caption{(a) Field–field, (b) momentum–momentum, and (c) field–momentum smeared self-correlations in the Bunch-Davies vacuum for the single mode defined in Eq.~\eqref{eq:ex1vNS}. 
    Note the growth of the field–field correlations as $\mu$ decreases—this growth signals an infrared divergence in the limit $\mu \to 0$. The momentum-momentum and field-momentum correlations are only mildly dependent on $\mu$ so that these lines appear on top of each other.   
    The dashed lines represent the terms $\mathcal{N}_{\Phi\Phi}$, $\mathcal{N}_{\Pi\Pi}$, and $\mathcal{N}_{\Phi\Pi}$ in Eqs.~\eqref{phiphi}--\eqref{phipi}. These figures confirm that these terms are subleading in the regime of interest ($RH \gg 1$ and $\mu \ll 1$), and they will therefore be neglected from now on. 
    }
    \label{fig:2pfballwithNterms}
\end{figure*}
\end{exmp}

Note that in Fig.~\ref{fig:2pfballwithNterms} the momentum–momentum self-correlations do not change appreciably with $RH$. This is a consequence of the small mass used in this example, $m/H \ll 1$. In this limit, the momentum–momentum self-correlations are indistinguishable from their values in flat spacetime—the effects of curvature manifest primarily in the field–field and field–momentum correlations. In flat spacetime and for a massless field, the vacuum correlations are invariant under a rescaling of $R$.

\begin{exmp}
\label{ex:singlemodeshell}
{\bf Single-mode subsystem supported in a spherical shell}\\

As a second example, we consider a single-mode subsystem supported in a spherical shell of finite width. This mode is defined by the symplectic subspace ${\rm span}[\gamma_{S}^{(1)},\gamma_{S}^{(2)}]$ of the classical phase space, where the phase space elements $\gamma_{S}^{(1)}$ and $\gamma_{S}^{(2)}$ are given by
{ \begin{equation}\label{eq:ex2vNS}
     \bm\gamma_{S}^{(1)} (\vec x) = \begin{pmatrix}
    0\\ 
   {\color{orange}} f_{S} (|\vec{x}|)
    \end{pmatrix}, \quad 
  \bm  \gamma_{S}^{(2)} (\vec x) = \begin{pmatrix}
    - g_{S} (|\vec{x}|)\\ 
    0
    \end{pmatrix},
\end{equation}
with the smearing functions $f_S(\vec x),g_S(\vec x)$ defined as
\bea\label{eq:fS}
 f_{S}(\vec{x}) &=& A_{S} \, \Big(|\vec{x}|-(R_S - d)\Big) \Big((R_S+d)-|\vec{x}| \Big) \nonumber \\ \nonumber 
 && \times \Theta\Big(|\vec{x}|-(R_S-d)\Big) \Theta\Big((R_S+d) - |\vec{x}|\Big)\,, \\ \nonumber  g_{S}(\vec{x}) &=& R_S \, f_{S}(\vec{x}).
\eea

The real numbers $R_S \pm d$ correspond to the inner and outer radii of the shell, and 
\begin{equation}
   A_{S} =  { R_S^{-1/2}\,} \frac{\sqrt{\frac{105}{\pi }}}{8 \sqrt{{d}^7+7 {d}^5 {R}_S^2}}
\end{equation}
is a normalization constant ensuring that ${\bm \omega}(\bm\gamma_{S}^{(2)},\bm\gamma_{S}^{(1)}) = 1$.} 
 The form of the smearing function $f_{S}(\vec{x})$ is illustrated in Fig.~\ref{fig:shell_geometric}.

The Sobolev norms relevant for computing the self-correlators \eqref{phiphi}--\eqref{phipi} can also be obtained analytically in this case, although the resulting expressions are lengthy  and not particularly illuminating (we do not reproduce them here). The  self-correlation  are qualitatively similar to the ones showed in  previous example (see Fig.~\ref{fig:2pfballwithNterms}).

\savebox{\mybox}{\includegraphics[width=0.5\textwidth]{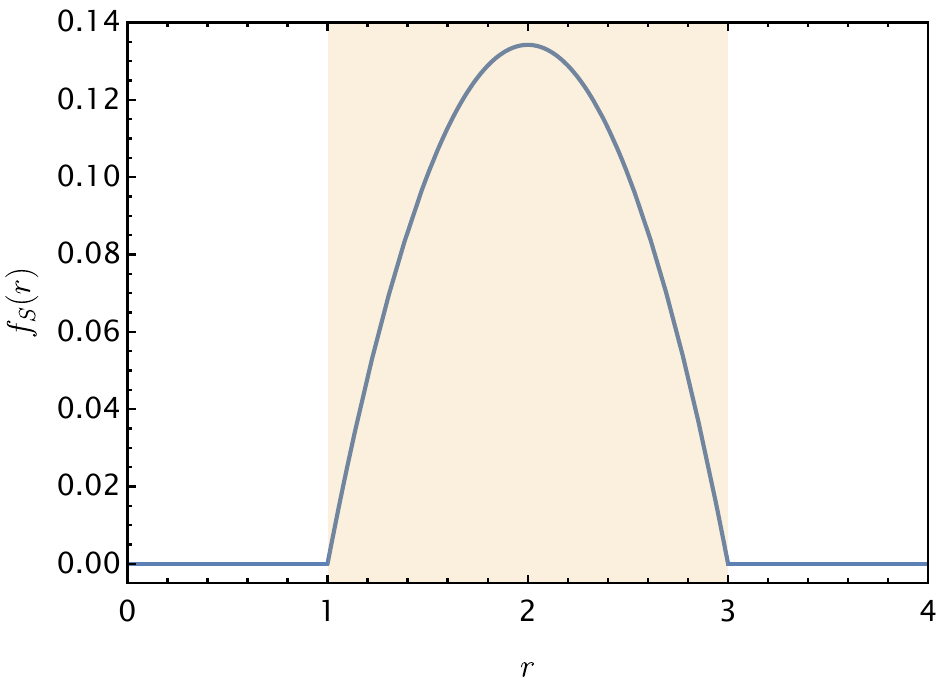}}
\begin{figure*}[hbtp]
    \centering
     \subfigure[]{
        \begin{tikzpicture}
           \node at (0,0) {\resizebox{!}{1.1 \ht\mybox}{
        \includegraphics[width=\textwidth]{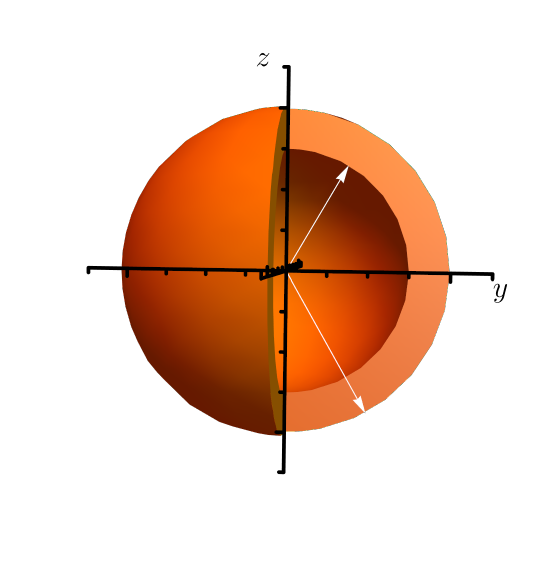}}};
           \node[rotate=60] at (0.75,0.75) {\textcolor{white}{$R_S - d$}};
           \node[rotate=-60] at (0.8,-0.6) {\textcolor{white}{$R_S + d$}};
        \end{tikzpicture}
     }~\subfigure[]{\includegraphics[width=0.5\textwidth]{03a_fs_vs_r.pdf}}
    \caption{(a) Geometric representation of the support of the mode in Eq.~\eqref{eq:ex2vNS}. (b) The smearing function $f_S(r)$ as a function of the radial distance $r=|\vec{x}|$. The support of the shell is shaded in orange.}  
    \label{fig:shell_geometric}
\end{figure*}

\end{exmp}

\begin{exmp}\label{ex:twoballs}{\bf Two modes supported in non-overlapping balls.}\\

In this example, we consider two single-mode subsystems, $A$ and $B$, each identical to the single-mode defined in Example~\ref{ex:singlemodeball}, but supported in disjoint spheres of radius $R$. The centers of the two supports are separated by a distance $|\Delta \vec x|$ such that $|\Delta \vec x|/R > 2$, so they do not overlap.

The self-correlators of each mode are identical to those computed in Example~\ref{ex:singlemodeball}.
The Sobolev products appearing in the cross-correlators can also be computed analytically in this case, with the results:
\begin{widetext}
{
\begin{equation}\label{eq:SIPs1o2BABB}
\mathrm{Re}(g^{(\delta)}_A,g^{(\delta)}_B)_{\frac{1}{2}} = -\frac{2^{-2 \delta }\,\Gamma (\delta +1)\, \Gamma \left(2 \delta +\frac{5}{2}\right)}{ \Gamma \left(\delta +\frac{1}{2}\right)\, \Gamma \left(\delta +\frac{5}{2}\right)^2 } \left(\frac{|\Delta \vec x|}{R}\right)^{-4}  
\, {}_3F_2\left(\tfrac{3}{2}, 2, \delta +2;\, \delta +\tfrac{5}{2}, 2 \delta +4;\, \tfrac{4 R^2}{|\Delta \vec x|^2}\right),
\end{equation}
\begin{equation}\label{eq:SIPsm3o2BABB}
\begin{split}
\mathrm{Re}(f^{(\delta)}_A,f^{(\delta)}_B)_{-\frac{3}{2} + \mu^2} 
&= -\frac{2^{-2 \delta -1} \, \Gamma (\delta +1) \, \Gamma \left(2 \delta +\frac{5}{2}\right) \, \cos \left(\pi  \mu ^2\right) \, \Gamma \left(2 \mu ^2 -1\right)}{\Gamma \left(\delta +\tfrac{1}{2}\right) \, \Gamma \left(\delta +\tfrac{5}{2}\right)^2} 
\left(\frac{|\Delta \vec x|}{R}\right)^{-2 \mu ^2} \\ 
&\quad \times \, {}_3F_2\left(\delta +2, \mu ^2, \mu ^2 - \tfrac{1}{2}; \, \delta + \tfrac{5}{2}, 2 \delta + 4; \, \tfrac{4 R^2}{|\Delta \vec x|^2}\right)  R^{2-2\mu^2},
\end{split}
\end{equation}
and 
\begin{equation}\label{eq:SIPsm1o2BABB}
\begin{split}
\mathrm{Re}(f^{(\delta)}_A,g^{(\delta)}_B)_{-\frac{1}{2} + \frac{\mu^2}{2}} 
&= \frac{2^{-2 \delta -1} \, \Gamma (\delta +1) \, \Gamma \left(2 \delta +\frac{5}{2}\right) \, \cos \left(\frac{\pi  \mu ^2}{2}\right) \, \Gamma \left(\mu ^2 +1\right)}{ \Gamma \left(\delta +\tfrac{1}{2}\right) \, \Gamma \left(\delta +\tfrac{5}{2}\right)^2} 
\left(\frac{|\Delta \vec x|}{R}\right)^{-2 - \mu ^2} \\
&\quad \times {}_3F_2\left(\delta +2, \tfrac{\mu ^2}{2} + \tfrac{1}{2}, \tfrac{\mu ^2}{2} + 1; \, \delta + \tfrac{5}{2}, 2 \delta + 4; \, \tfrac{4 R^2}{|\Delta \vec x|^2}\right)  R^{1-\mu^2}.
\end{split}
\end{equation}
}

\end{widetext}

Since ${}_3F_2(a,b,c;\,d,e;\,x) \to 1$ as $x \to 0$, the behavior of these Sobolev inner products in the regime $|\Delta \vec x|/R \gg 1$ is dominated by the prefactor. In particular,
\[
\mathrm{Re}(f^{(\delta)}_A,f^{(\delta)}_B)_{-\frac{3}{2} + \mu^2}\sim \left(\frac{|\Delta \vec x|}{R}\right)^{-2 \mu^2},
\]
for  $\mu^2 \ll 1$,  confirming that field-field correlations are nearly scale-invariant in the small-mass limit. On the other hand, \eqref{eq:SIPsm3o2BABB} diverges in the massless limit $\mu \to 0$.

Figure~\ref{fig:correlations_vs_Dx} shows the correlations between the two modes, as a function of the separation between their centers (in units of the Hubble radius).   
\begin{figure*}[tb]
    \centering
    \subfigure[]{\includegraphics[width=0.5\textwidth]{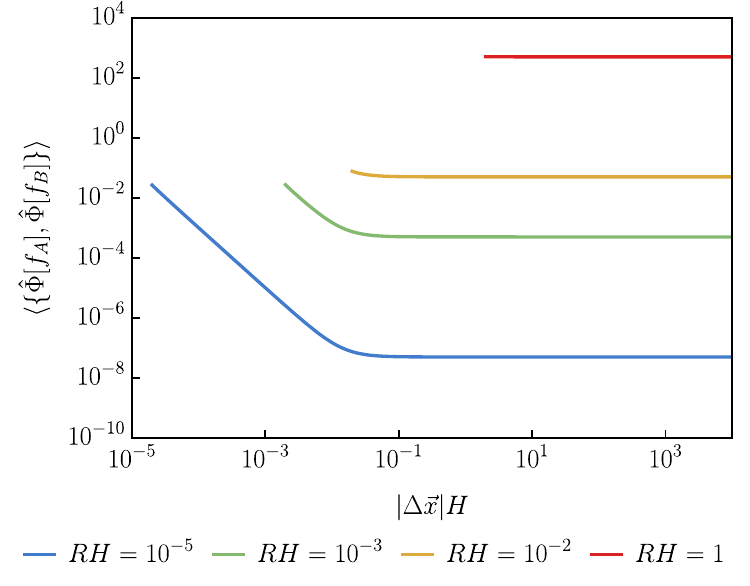}}~\subfigure[]{\includegraphics[width=0.5\textwidth]{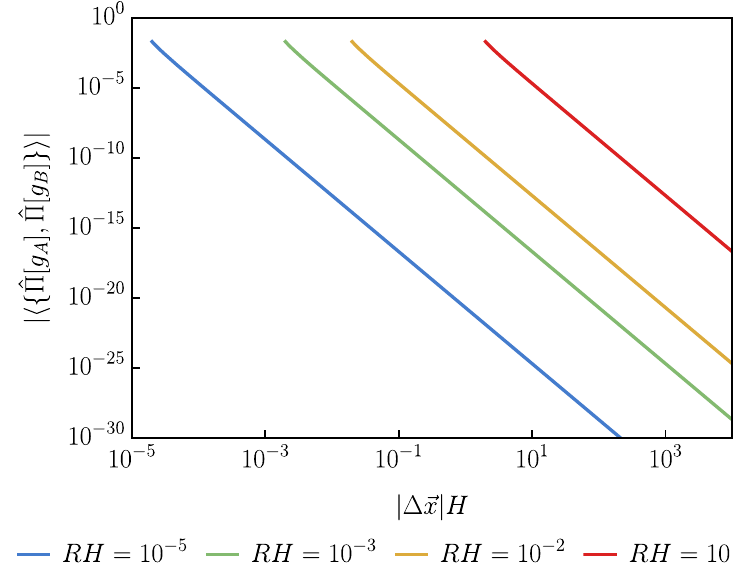}}
    \subfigure[]{\includegraphics[width=0.5\textwidth]{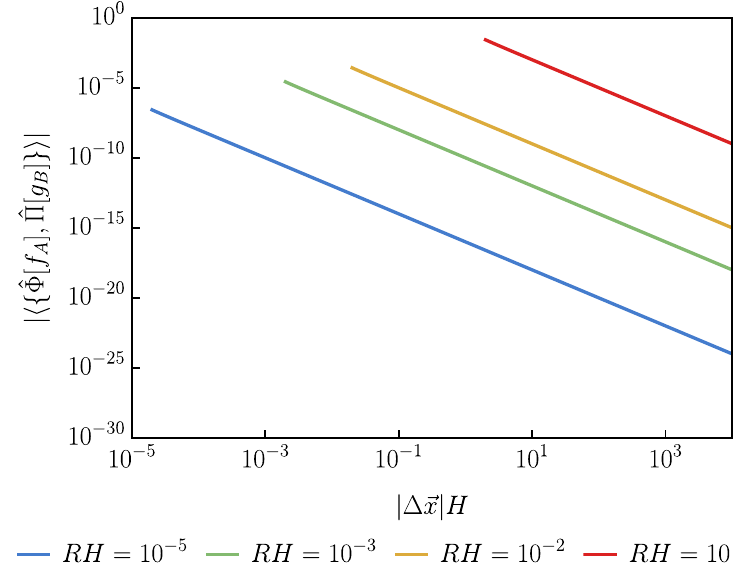}}
    \caption{(a) Field–field, (b) momentum–momentum, and (c) field–momentum correlations between two identical modes supported in non-overlapping balls, shown as functions of their separation $|\Delta \vec{x}|$, for $\mu^2 = 10^{-4}$ and various values of $RH$. The modes are defined as in Example~\ref{ex:singlemodeball} with $\delta = 2$. The non-overlapping condition $|\Delta \vec{x}| > 2R$ explains why the lines corresponding to different values of $R$ are plotted over different ranges of $|\Delta \vec{x}|$. Panel (a) illustrates the almost scale-invariant behavior of the field–field correlations when $|\Delta \vec{x}| H \gtrsim 1$. In contrast, the momentum-momentum and field-momentum correlations are not almost scale-invariant}
    \label{fig:correlations_vs_Dx}
\end{figure*}

\end{exmp}

\section{Von Neumann entropy\label{sec:vNEntropy}}

The von Neumann entropy of a subsystem composed of finitely many degrees of freedom in the Bunch--Davies vacuum is ultraviolet finite, and also infrared finite, provided $m\neq 0$. Furthermore, since the Bunch--Davies state is pure, the von Neumann entropy quantifies the entanglement between the subsystem and its complement ---namely, the remaining degrees of freedom in the field theory. (This interpretation is further reinforced by the concept of the {\em partner mode}, discussed in Sec.~\ref{sec:partners}.)

In this section, we focus on a single-mode subsystem that is locally supported in space, and study the way  its von Neumann entropy changes with $H$. We show that the entropy grows with $H$ for any such mode.

Let \( A \) be a single-mode subsystem whose classical state space is \( \Gamma_A = \mathrm{span}[\gamma_A^{(1)}, \gamma_A^{(2)}] \), where \( \gamma_A^{(1)}, \gamma_A^{(2)} \in \Gamma_{\sigma_{BD}} \) have compact support, and satisfy \( \boldsymbol{\omega}(\gamma_A^{(2)}, \gamma_A^{(1)}) = 1 \), so that \( \Gamma_A \) is a symplectic subspace of the classical phase space.

\begin{prop}\label{prop:vNgeneral}
    The von Neumann entropy of any\footnote{We restrict in this section to modes defined from functions not belonging to the special family defined in \eqref{zeroaverage}. These are discussed in Appendix \ref{app:proofs_special_functions}.} single-mode subsystem \(A\), compactly supported in a super-Hubble region ($RH \gg 1$) of a light scalar field theory ($m/H \ll 1$) in the Bunch--Davies vacuum, grows monotonically with $H$. 

\end{prop}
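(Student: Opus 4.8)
The plan is to compute the symplectic eigenvalue $\nu_A = \sqrt{\det \sigma_A}$ for the single-mode subsystem $A$ and show that it increases monotonically with $H$, since the von Neumann entropy $S(\nu_A)$ in Eq.~\eqref{eq:S} is a monotonically increasing function of $\nu_A$ for $\nu_A \geq 1$. So the whole proof reduces to establishing $\partial_H \det \sigma_A > 0$ in the regime $RH \gg 1$, $\mu/H \ll 1$.

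First I would write $\sigma_A$ explicitly in a convenient basis. Working in the basis $(\gamma_A^{(1)},\gamma_A^{(2)})$ normalized so that $\bm\omega(\gamma_A^{(2)},\gamma_A^{(1)})=1$, the entries of the $2\times 2$ matrix $\sigma_A$ are the smeared two-point functions $\braket{\{\hat\Phi[f],\hat\Phi[f]\}}$, $\braket{\{\hat\Pi[g],\hat\Pi[g]\}}$, and $\braket{\{\hat\Phi[f],\hat\Pi[g]\}}$ evaluated in the Bunch--Davies vacuum, where $f,g$ are the functions appearing in $\gamma_A^{(1)},\gamma_A^{(2)}$. Using Proposition~\ref{prop:BDcorrelations} to leading order in $RH$ and $\mu^2$, I would substitute the expansions \eqref{phiphi}--\eqref{phipi}, dropping the subleading $\mathcal{N}$ terms. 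Then $\det\sigma_A$ becomes an explicit function of $RH$ (and of fixed Sobolev products of $f$ and $g$, which are $H$-independent positive numbers). Schematically, writing $P \equiv \mathrm{Re}(f,f)_{-1/2}$, $Q\equiv \mathrm{Re}(g,g)_{1/2}$, $U \equiv R^{-2+2\mu^2}\mathrm{Re}(f,f)_{-3/2+\mu^2}$, $V \equiv R^{-1+\mu^2}\mathrm{Re}(f,g)_{-(1-\mu^2)/2}$, and with coefficients $c_1(RH)^{2-2\mu^2}$, $c_2(RH)^{1-\mu^2}$ appearing in front of $U$ and $V$ respectively, we get
\begin{equation}
\det\sigma_A = \big(P + c_1 (RH)^{2-2\mu^2} U\big)\,Q - \big(c_2(RH)^{1-\mu^2} V\big)^2 .
\end{equation}

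Next, I would differentiate with respect to $H$ at fixed $R$ (equivalently, treat $x \equiv RH$ as the control parameter, as the paper suggests). The flat-space value is $\det\sigma_M = PQ$, which is the Heisenberg-saturating value $\nu_A^2=1$ only if one works with the pure Minkowski state; here we should just track the sign of the derivative. The key point is that the leading $H$-dependent contribution to $\det\sigma_A$ is the term $c_1 Q U (RH)^{2-2\mu^2}$, which scales like $(RH)^{2-2\mu^2}$, while the negative field--momentum piece scales only like $(RH)^{2-2\mu^2}$ as well --- so I must check the coefficients. Here I would use the explicit Sobolev-product formulas from Example~\ref{ex:singlemodeball} (Eqs.~\eqref{eq:SN1o2B}, \eqref{eq:SNm1o2B}, \eqref{eq:SNm3o2AB}, \eqref{eq:SNm1o2pmu2AB}) to verify that $QU > c_2^2 V^2 / c_1$ (with the $c_i$ from Prop.~\ref{prop:BDcorrelations}), i.e.\ that the positive contribution dominates, using also $\Gamma(\mu^2)\to\infty$ as $\mu\to 0$ which makes $U$ parametrically large. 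Since the statement is claimed for \emph{any} compactly supported mode, not just the ball, I would also give a basis-independent argument: by a system-local symplectic transformation bring $\sigma_A$ to Williamson form $\nu_A\,\mathbb{I}$; then $\nu_A^2 = \det\sigma_A$ is manifestly symplectic-invariant, and the inequality $\det\sigma_A$ increasing in $H$ follows from the structural fact that the $H$-correction to $\sigma_{BD}$ is a positive-semidefinite rank-controlled perturbation (the extra $\mathrm{Re}(\cdot,\cdot)_{-3/2+\mu^2}$ Sobolev product is itself a positive-definite bilinear form) dominating the indefinite cross term.

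The main obstacle I anticipate is the last point: controlling the sign of the cross-term contribution $-\big(c_2(RH)^{1-\mu^2}V\big)^2$ relative to the positive diagonal growth \emph{uniformly over all admissible smearing functions} $f,g$ (subject only to compact support, the symplectic normalization, and excluding the zero-average family of Eq.~\eqref{zeroaverage}). A Cauchy--Schwarz-type inequality relating the mixed Sobolev product $\mathrm{Re}(f,g)_{-(1-\mu^2)/2}$ to the diagonal ones $\mathrm{Re}(f,f)_{-3/2+\mu^2}$ and $\mathrm{Re}(g,g)_{1/2}$ --- interpolating between Sobolev orders --- should do the job, but making it tight enough to conclude monotonicity (rather than merely positivity of the leading term) is the delicate step. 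I would close by noting that in the strict regime $RH\gg 1$ the $(RH)^{2-2\mu^2}$ term dominates everything, so $\det\sigma_A \approx c_1 Q U (RH)^{2-2\mu^2}$ grows without bound, and hence $\nu_A$ and $S(\nu_A)$ grow monotonically with $H$; a short argument that no accidental cancellation occurs at finite $RH$ (using the explicit positivity of all the relevant Sobolev products) completes the proof.
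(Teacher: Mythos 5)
Your overall strategy is the right one and matches the paper's: reduce the claim to monotonicity of $\nu_A^2=\det\sigma_A$ in $H$, insert the leading-order expansion of the smeared correlators from Prop.~\ref{prop:BDcorrelations}, and control the negative field--momentum cross term against the positive field--field enhancement by a Sobolev Cauchy--Schwarz/H\"older inequality. For the subclass of modes you actually write down, your argument essentially reproduces one branch of the paper's proof: the needed inequality is $\mathrm{Re}(f,g)_{-\frac{1-\mu^2}{2}}^2\leq \|f\|^2_{-\frac{3}{2}+\mu^2}\|g\|^2_{\frac{1}{2}}$, obtained from H\"older's inequality by splitting the Sobolev weight, exactly as you anticipate in your ``main obstacle'' paragraph.

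However, there is a genuine gap: your ansatz $\det\sigma_A=\bigl(P+c_1(RH)^{2-2\mu^2}U\bigr)Q-\bigl(c_2(RH)^{1-\mu^2}V\bigr)^2$ presumes a Darboux basis of the form $\gamma_A^{(1)}=(0,f)$, $\gamma_A^{(2)}=(g,0)$. A generic two-dimensional symplectic subspace $\Gamma_A$ does \emph{not} admit such a basis: if the field components $f_A^{(1)},f_A^{(2)}$ of the two basis vectors are linearly independent, then every nonzero element of $\Gamma_A$ has a nonzero field component, so no local symplectic transformation can bring you to the pure-field/pure-momentum form. In that generic case both diagonal entries of $\sigma_A$ pick up the $(RH)^{2-2\mu^2}$ enhancement, the leading term of $\nu_A^2-(\nu_A^{\rm Mink})^2$ scales as $(RH)^{4-4\mu^2}$ rather than $(RH)^{2-2\mu^2}$, and its coefficient is $\|f_A^{(1)}\|^2_{-\frac{3}{2}+\mu^2}\|f_A^{(2)}\|^2_{-\frac{3}{2}+\mu^2}-\mathrm{Re}(f_A^{(1)},f_A^{(2)})^2_{-\frac{3}{2}+\mu^2}$, whose positivity follows from Cauchy--Schwarz in the single Hilbert space $\dot H_{-3/2+\mu^2}$ applied to the pair $(f_A^{(1)},f_A^{(2)})$ --- a different inequality from the mixed-order one you invoke. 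Your proposed ``basis-independent'' patch (Williamson form plus positivity of the $-\tfrac{3}{2}+\mu^2$ Sobolev form) does not substitute for this case analysis, because positivity of the perturbation to $\sigma$ does not by itself control the determinant of the perturbed $2\times2$ matrix when the field--momentum cross terms are present. Relatedly, verifying the coefficient inequality ``using the explicit ball formulas'' would only establish the result for that example; the proposition is about arbitrary compactly supported modes, so the inequality must be proved for general smearing functions, which is precisely what the Cauchy--Schwarz and H\"older arguments in the three separate cases accomplish.
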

\begin{proof}
The expression \eqref{eq:S} for the von Neumann entropy is a monotonically increasing function of its argument. Thus, it suffices to show that the symplectic invariant $\nu_A^2$ grows with $H$ in the regime of interest.  

To prove this, we will show that $\nu_A^2 - (\nu_{A}^{\mathrm{Mink}})^2 \geq 0$ and increases with $H$, where $\nu_{A}^{\mathrm{Mink}}$ denotes the symplectic eigenvalue of the same mode but for a massless field in the Minkowski vacuum. ($\nu_{A}^{\mathrm{Mink}}$ is independent of $H$ and is used here merely for convenience in isolating the $H$-dependent terms in $\nu_A^2$.)

 Let us denote by $(\hat{O}_A^{(1)},\hat{O}_A^{(2)})$ the operator associated with the basis vectors $\gamma_A^{(1)}$ and $\gamma_A^{(2)}$. The symplectic eigenvalue $\nu_A$ of a single mode subsystem, given in Eq.~\eqref{eq:symplecticeigvalsall}, takes the form:
 \begin{equation}\label{eq:nuI_O1O2}
        \nu_A^2 = 4\braket{(\hat{O}_A^{(1)})^2} \braket{(\hat{O}_A^{(2)})^2} - \braket{\{\hat{O}_A^{(1)},\hat{O}_A^{(2)}\}}^2 \,.
    \end{equation}
    The expectation values in this expression reduce to a combination of the second moments of the smeared field and momentum operators:
\begin{equation}\label{eq:OIiOIj}\begin{split}
            &\braket{\{\hat{O}_A^{(i)},\hat{O}_A^{(j)}\}} = \\ &\braket{\{\hat{\Phi}[f_A^{(i)}],\hat{\Phi}[f_A^{(j)}]\}} + \braket{\{\hat{\Pi}[g_A^{(i)}],\hat{\Pi}[f_A^{(j)}]\}} \\ 
            &- \braket{\{\hat{\Phi}[f_A^{(i)}],\hat{\Pi}[g_A^{(j)}]\}} - \braket{\{\hat{\Pi}[g_A^{(i)}],\hat{\Phi}[f_A^{(j)}]\}}\,.
            \end{split}
        \end{equation}
        
    The second moments in the Bunch--Davies vacuum are written in  Eqs.~\eqref{phiphi}-~\eqref{phipi}, and for the Minkowski vacuum in \eqref{synmcorrMink}. Using these expressions, the leading contributions to $\nu_A^2 - (\nu_{A}^{\mathrm{Mink}})^2$ can be expressed as a polynomial in $RH$. In the regime $RH\gg1$, 
    the value of $\nu_A^2 - (\nu_{A}^{\mathrm{Mink}})^2$ is determined by the coefficient of the leading power in $RH$.  
    This coefficient depends on the choice of $\gamma_A^{(1)}$ and $\gamma_A^{(2)}$.

    If $\gamma_A^{(1)}=(g_A^{(1)},f_A^{(1)})$ and $\gamma_A^{(2)}=(g_A^{(2)},f_A^{(2)})$ we  must  differentiate  three different cases: 
    \begin{enumerate}
        \item  $f_A^{(1)} \neq 0$,  $f_A^{(2)} \neq 0$ with $f_A^{(1)} \neq f_A^{(2)}$;
        \item $f_A^{(1)} =0$, $f_A^{(2)} \neq 0$ (or vice-versa).
        \item  $f_A:= f_A^{(1)} = f_A^{(2)}$ and $f_A \neq 0$. 
    \end{enumerate}
    
\noindent {\bf Case 1:} Using Eqs.~\eqref{phiphi}-~\eqref{phipi}, we find         $\nu_A^2 -(\nu_{A}^{\mathrm{Mink}})^2= \mathfrak{a}\, (RH)^{4-4\mu^2}(1 + \mathcal{O}(\mu^2)) +\mathcal{O}[(RH)^{3-3\mu^2}]\,,$ 
    with 
    \begin{align*}
        \mathfrak{a} &= ||f_A^{(1)}||_{-\frac{3}{2}+\mu^2}^2\, ||f_A^{(2)}||_{-\frac{3}{2}+\mu^2}^2 \\
        &\qquad - \mathrm{Re}(f_A^{(1)}|f_A^{(2)} )_{-\frac{3}{2}+\mu^2}^2\,.
    \end{align*}
    Since homogeneous Sobolev spaces of order $|s| < 3/2$ are Hilbert spaces with an inner product given in Eq.~\eqref{eq:innerprod}~\cite{bahouri_fourier_2011}, the Cauchy-Schwarz inequality implies that $\mathfrak{a} \geq 0$.  Saturation of this inequality, i.e., $\mathfrak{a} =0$, can only occur if $f_A^{(1)} = f_A^{(2)}$, which corresponds to Case 3, and will be analyzed below. Thus, $ \nu_A^2 -(\nu_{A}^{\mathrm{Mink}})^2 > 0 $ for $RH\gg1$, $mH\ll1$ whenever $0\neq f^{(1)}_A \neq f^{(2)}_A \neq 0$. Furthermore, since $\mathfrak{a} > 0$,  $\nu_A^2-(\nu_{A}^{\mathrm{Mink}})^2$
—and hence $\nu_A$—grows with $H$, which automatically implies that the entropy of the mode $A$ increases as $H$ increases. \\
 
\noindent {\bf Case 2:} Using Eqs.~\eqref{phiphi}-~\eqref{phipi}, one obtains
$\nu_A^2 -(\nu_{A}^{\mathrm{Mink}})^2= \mathfrak{b}\,  (RH)^{2-2\mu^2}(1 + \mathcal{O}(\mu^2)) + \mathcal{O}[(RH)^{1-\mu^2}]$, with $$ \mathfrak{b} = ||f_A^{(2)}||^2_{-\frac{3}{2}+\mu^2} ||g_A^{(1)}||_{\frac{1}{2}}^2 - \mathrm{Re} (f_A^{(2)}|g_A^{(1)})_{-\frac{1-\mu^2}{2}}^2\,. $$
 
Next, we observe that 
\begin{widetext}
    \be \nonumber  \begin{split}
    \mathrm{Re}(f_A^{(2)},g_A^{(1)})_{s}^2 &\leq \left|\int \frac{d^3\,k}{(2\pi)^3} |\vec k|^{2s} \tilde{f}_A^{(2)}(\vec k) (\tilde{g}_A^{(1)})^*(\vec{k}) \right|^2 \\
    &\leq \left( \int \frac{d^3\,k}{(2\pi)^3} |\vec k|^{2s} |\tilde{f}_A^{(2)}(\vec{k})||\tilde{g}_A^{(1)}(\vec{k})|\right)^2 \\
    & =  \left( \int \frac{d^3\,k}{(2\pi)^3} \left(|\vec{k}|^{2s+2s'}|\tilde{f}_A^{(2)}(\vec{k})|^2 \right)^{1/2} \left(|\vec{k}|^{2s-2s'}|\tilde{g}_A^{(1)}(\vec{k})|^2 \right)^{1/2}\right)^2 \\ 
    &\leq  \int \frac{d^3\,k}{(2\pi)^3} |\vec{k}|^{2s+2s'}|\tilde{f}_A^{(2)}(\vec{k})|^2  \int \frac{d^3\,k'}{(2\pi)^3} |\vec{k}'|^{2s-2s'}|\tilde{g}_A^{(1)}(\vec{k}')|^2 = || f_A^{(2)}||_{s+s' }^2||g_A^{(1)}||_{s-s'}^2\,,
\end{split}
 \ee 
\end{widetext}
where we have used H\"older's inequality  in the last line (see Appendix~\ref{app:sobolev}, with $p=q=2$ and $r=1$). This inequality holds for arbitrary $s$ and $s'$. For $s=-1/2+\mu^2{/2}$ and $s'=-1+\mu{^2}/2$, this inequality implies $\mathfrak{b}\geq0$.

\noindent {\bf Case 3:} In this case, $ \nu_A^2 -(\nu_{A}^{\mathrm{Mink}})^2 = \mathfrak{c}\, (RH)^{2-2\mu^2} (1 + \mathcal{O}(\mu^2)) + \mathcal{O}[(RH)^{1-\mu^2}]\,,$ where 
    $$\mathfrak{c}\! =\! ||f_A||^2_{\!-\frac{3}{2}\!+\mu^2} ||g_A^{(1)}\!-g_A^{(2)}||_{\frac{1}{2}}^2\! -\! \mathrm{Re} (f_A,g_A^{(1)}\!-g_A^{(2)})_{\!-\frac{1\!-\mu^2}{2}}^2\,.$$
 Following the same argument as in Case 2 above, we conclude  $\mathfrak{c} \geq 0$. \\

In Cases~2 and~3, the inequalities can be saturated; that is, one can choose functions such that $\mathfrak{b}=0$ and $\mathfrak{c}=0$. In these situations, it becomes necessary to analyze the first non-vanishing contributions to $\nu_A^2 - (\nu_A^{\mathrm{Mink}})^2$. We find that the conditions $\mathfrak{b}=0$ and $\mathfrak{c}=0$ can only occur for a subset of smearing functions $f(\vec{x})$ belonging to the special family defined in Eq.~\eqref{zeroaverage}. A detailed discussion of this case is provided in Appendix~\ref{app:proofs_special_functions}.\\   
\end{proof}

Although Prop.~\ref{prop:vNgeneral} focuses on single modes supported in super-Hubble regions, we conjecture that it remains true for arbitrary values of $RH$. Although we do not have a completely general proof, in 
Appendix~\ref{app:vNspecial} we prove this conjecture for single-mode subsystems of the form 
\(\gamma_{A}^{(1)} = (0, f(\vec{x}))\) and \(\gamma_{A}^{(2)} = (g(\vec{x}), 0)\) ---corresponding to “pure-field’’ and “pure-momentum’’ smeared operators.

\begin{exmp}
\label{ex:singlemodeball_cont}
{\bf Single-mode subsystem supported in a ball (cont.)}\\

This is a continuation of Example~\ref{ex:singlemodeball}.  
To compute the von Neumann entropy of this mode using Eq.~\eqref{eq:S}, we need the symplectic eigenvalue of the reduced state \( \hat{\rho}^{\rm red}_A \), which, as discussed earlier, takes the form
\begin{equation}\label{eq:nuAex1vNS}
    \nu_{A}^2 \!=\! 4\!\braket{\hat{\Phi}(f^{(\delta)})^2}\!\braket{\hat{\Pi}(g^{(\delta)})^2} \!- \!\braket{\{\hat{\Phi}(f^{(\delta)}), \hat{\Pi}(g^{(\delta)})\}}^{\!2}\!\!.
\end{equation}

An analytic expression for \( \nu_A \) is obtained by inserting the second moments from Eqs.~\eqref{phiphi}--\eqref{phipi} into Eq.~\eqref{eq:nuAex1vNS}, and by using the Sobolev norms defined in Eqs.~\eqref{eq:SNm3o2AB}, \eqref{eq:SNm1o2B}, and \eqref{eq:SN1o2B}.

Figure~\ref{fig:SvNball} shows the von Neumann entropy \( S(\nu_A) \) of the single-mode subsystem for $\delta=1$, as a function of the radius \( RH \), and for different values of the mass. The value of \( S(\nu_A) \) for the same mode but in the case of a massless field in Minkowski spacetime is also shown for comparison.

The figure demonstrates that \( S(\nu_A) \) in the Bunch--Davies vacuum grows with $H$ and is always greater than or equal to its value in the Minkowski vacuum, with equality reached in the limit \( RH \ll 1 \). This is expected, as curvature effects become negligible in sufficiently small regions. When \( RH \gg 1 \), the entropy \( S(\nu_A) \) grows logarithmically with \( RH \). Furthermore, \( S(\nu_A) \) increases as $\mu$ decreases, and diverges in the limit $\mu\to 0$.

\begin{figure}
    \centering
    \includegraphics[width= \linewidth]{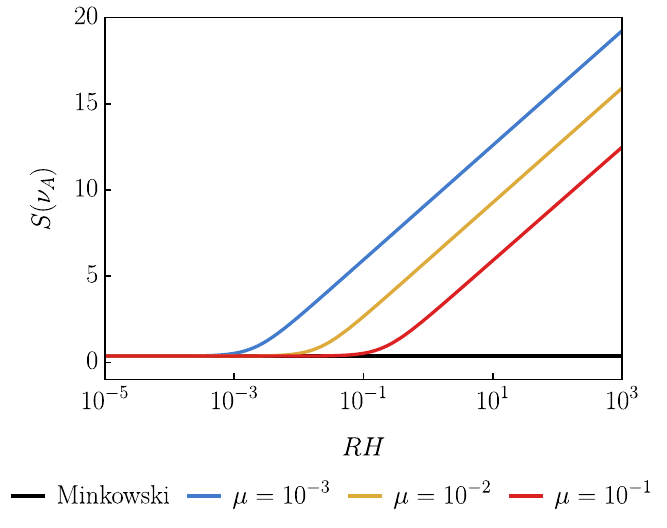}
    \caption{Von Neumann entropy of a single-mode subsystem supported in a ball of radius \( R \), with the mode defined in Eq.~\eqref{eq:ex1vNS}. The entropy is shown for different values of the mass parameter \( \mu \).   This figure shows that $S(\nu_A)$ grows with $H$, except when the region supporting the mode is small compared to the Hubble radius ($RH \ll 1$). In this regime, the entropy becomes independent of $H$ and converges to its Minkowski value. Additionally, the value of \( S(\nu_A) \) increases with decreasing mass, as expected due to the infra-red divergence in the massless limit.}  
    \label{fig:SvNball}
\end{figure}

\end{exmp}

\section{Correlations}
\label{sec:correlations}

As discussed in Sec.~\ref{sec:framework}, mutual information provides a convenient measure of correlations between subsystems. Unlike field-field or momentum-momentum  correlations between two subsystems, mutual information is local-sympletic invariant, meaning it does not rely on the choice of any basis or operators in each sub-system. As already emphasized before, the mutual information of two compactly supported subsystems is not a measure of entanglement, but rather a measure of the total correlations, including classical ones. The reason is that local modes always have mixed reduced states, which can exhibit non-quantum correlations.

For pedagogical purposes, we first illustrate the behavior of mutual information in the Bunch-Davies vacuum for two examples, which can be worked out analytically, after which we will prove the generality of the features displayed in these examples. 

\subsection{Examples\label{subsec:correlations_examples}}

\begin{exmp}
\label{ex:ballballcorrelations}
{\bf Two modes supported in non-overlapping balls (cont.)}\\

This example is a continuation of Example~\ref{ex:singlemodeshell}. The expression for the mutual information \( \mathcal{I}_{AB} \) was given in Eq.~\eqref{eq:def_MI}; it requires the computation of the symplectic eigenvalues \( \nu_A \), \( \nu_B \), and \( \nu_\pm \). Since the two modes are identical, we have \( \nu_A = \nu_B \), and \( \nu_A \) was already computed in Example~\ref{ex:singlemodeball}. 

The eigenvalues \( \nu_\pm \) can be computed using Eq.~\eqref{eq:symplecticeigvalsall}. For this, we need \( \det \sigma_{AB} \) and \( \det C \), where \( \sigma_{AB} \) is the covariance matrix of the joint state, and \( C \) the correlation matrix, defined in Eq.~\eqref{coormatrix}. These determinants can be written analytically using the expressions for the Sobolev inner products in Eqs.~\eqref{eq:SIPs1o2BABB}, \eqref{eq:SIPsm3o2BABB}, and \eqref{eq:SIPsm1o2BABB}, which were presented in Example~\ref{ex:singlemodeshell}.

The expressions for \( \det \sigma_{AB} \) and \( \det C \) are lengthy and not particularly illuminating,  they nevertheless allow us to compute \( \nu_\pm \), and hence the mutual information analytically. The result is plotted in Fig.~\ref{fig:MI_vs_DxH} as a function of $RH$, the mass of the field,  and the physical separation \( |\Delta \vec x| \) between the centers of the two regions. The mutual information for the same pair of modes in the Minkowski vacuum is also shown for comparison.

This figure conveys three key messages:
\begin{enumerate}
\item In the limit \(RH \ll 1\) and \(|\Delta \vec x| H \ll 1\) (i.e., both support and mode–mode separation are sub-Hubble), the value of the mutual information in de Sitter approaches the correponding value in Minkowski spacetime. In this regime, the mutual information falls off as \(|\Delta \vec x|^{-4}\).
\item For \( |\Delta \vec x|H > 1 \), the mutual information scales as \( \mathcal{I}_{AB} \propto (|\Delta \vec x|)^{-4\mu^2} \). Since \( \mu \ll 1 \), this implies that \( \mathcal{I}_{AB} \) is nearly independent of the separation (almost scale-invariant).
\item \(\mathcal{I}_{AB}\) increases with \(RH\), implying that the two modes become more correlated as $H$ grows. This growth, however, saturates at large \(RH\). 
\end{enumerate}

\begin{figure}
    \centering
    \begin{tikzpicture}
        \node (MI) at  (current page.center) {\includegraphics[width=\linewidth]{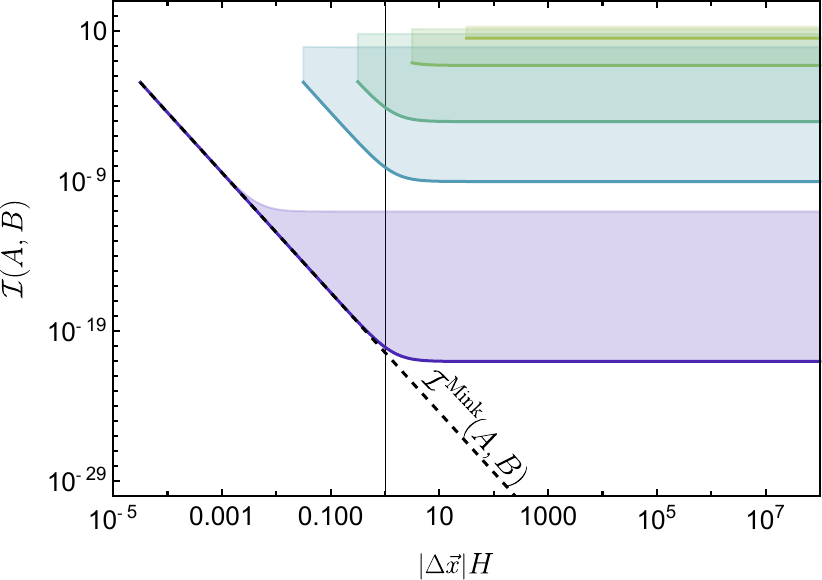}};
        \node[text={rgb:red,74;green,39;blue,178}] at ([xshift=1.5cm,yshift=-0.55cm]MI.center) {$RH = 10^{-5}$};
        \node[text={rgb:red,83;green,155;blue,181}] at ([xshift=1.5cm,yshift=1.35cm]MI.center) {$RH = 10^{-2}$};
        \node[text={rgb:red,104;green,175;blue,147}] at ([xshift=1.5cm,yshift=1.95cm]MI.center) {$RH = 10^{-1}$};
        \node[text={rgb:red,131;green,186;blue,112}] at ([xshift=1.5cm,yshift=2.5cm]MI.center) {$RH = 1$};
         \node[text={rgb:red,163;green,190;blue,86}] at ([xshift=3.cm,yshift=2.75cm]MI.center) {$RH = 10$};   
    \end{tikzpicture}
\caption{Mutual information between two non-overlapping single-mode subsystems supported in balls of radius $R$, plotted as a function of their separation $|\Delta \vec x|$. The modes are the same as those described in Example~\ref{ex:singlemodeshell}, with $\delta = 1$. This figure is obtained  analytically (neglecting the terms $\mathcal{N}_{\Phi\Phi}$, $\mathcal{N}_{\Phi\Pi}$ and $\mathcal{N}_{\Pi\Pi}$ in \eqref{phiphi}-\eqref{phipi}).  All curves are plotted only for $|\Delta \vec x| > 2R$, ensuring that the two modes do not overlap. The shaded region indicates how the mutual information varies with the field mass, showing results for $\mu^2$ ranging from $10^{-10}$ (bottom) to $10^{-15}$ (top). For reference, the gray vertical line corresponds to $|\Delta \vec x|H=1$.}
    \label{fig:MI_vs_DxH}
\end{figure}
\end{exmp}

\begin{exmp}
\label{ex:ballshellcorrelations}{\bf Two modes, one supported in a ball and the other in a spherical shell around it.}\\

In this example, we consider a mode \(A\) supported in a ball and a second mode \(B\) supported on a concentric spherical shell. For \(A\), we take the same mode as in Example~\ref{ex:singlemodeball}, and for \(B\) the same mode as in Example~\ref{ex:singlemodeshell}.  The geometric configuration where the two modes are supported is illustrated in Fig.~\ref{fig:geometric_ball-shell}.
\begin{figure}
    \centering
   \begin{tikzpicture}
       \node (BS) at ([xshift=0.25\textwidth]current page.center) {\includegraphics[width=0.35\textwidth]{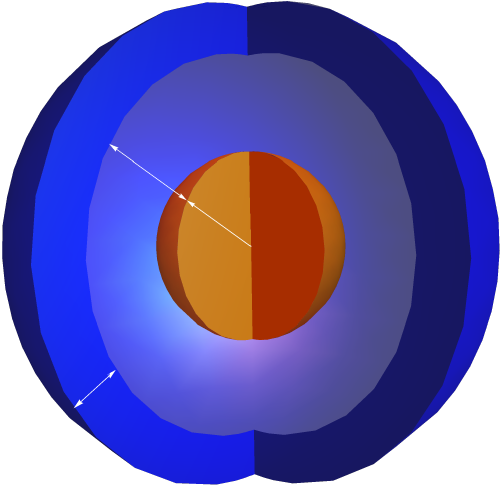}}; 
       \node[rotate=-35] at ([xshift=-1.1cm, yshift=1.1cm]BS.center) {\textcolor{white}{\textbf{$\Delta x$}}}; 
       \node[rotate=-35] at ([yshift=0.5cm,xshift=-0.3cm]BS.center) {\color{white}$R$}; 
       \node[rotate=35] at ([xshift=-2.15cm,yshift=-1.42cm]BS.center) {\color{white}$2d$}; 
   \end{tikzpicture}
    \caption{Geometric configuration of the modes in Example~\ref{ex:ballshellcorrelations}.  The shell has inner and outer radii $R_S \pm d$, where $2d$ is its width and  $R_S = R + \Delta x$. Here, $\Delta x$ denotes the radial distance between the edge of the ball (of radius $R$), and the inner edge of the shell.}
    \label{fig:geometric_ball-shell}
\end{figure}

\begin{figure}[t!]
    \centering
    \begin{tikzpicture}
        \node at (0,0) { \includegraphics[width=\linewidth]{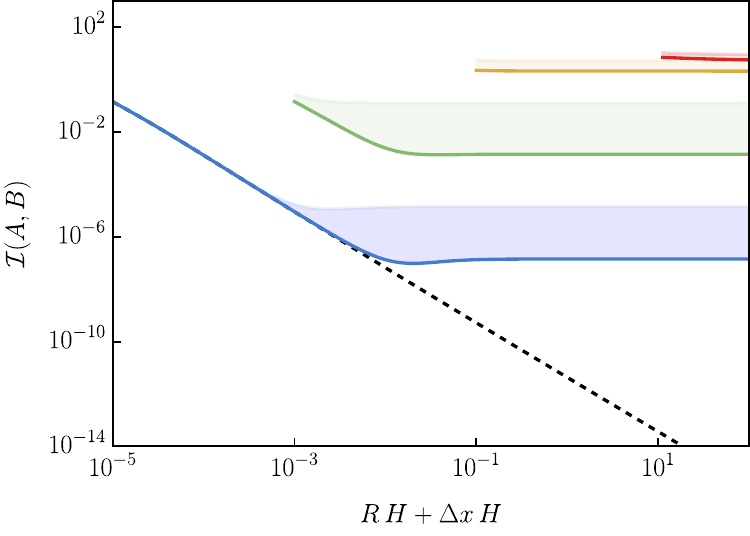}};
        \node[rotate=-32] at (2.55,-1.25) {$\mathcal{I}^{\mathrm{Mink}}(A,B)$};
        \node at (1.5,0.3) {\color{b1}$RH=10^{-5}$}; 
        \node at (1.5, 1.55) {\color{g2}$RH =10^{-3} $ };
        \node at (1.5, 2.45) {\color{y3}$RH=1$}; 
        \node at (3.25, 2.65) {\color{r4} $RH = 10$}; 
    \end{tikzpicture}
   
    \caption{ Mutual information of a single-mode subsystem supported in a ball of radius $R$ and another single-mode supported in a concentric shell of radial width  $2d$ and inner radius $\Delta x$, as a function of $\Delta xH$. The mode supported in the ball is defined as in Example~\ref{ex:singlemodeball},  with $\delta =2$, and the mode in the shell is defined as in Example~\ref{ex:singlemodeshell}.  The shaded region shows the variation of the mutual information when $\mu^2$ changes from  $\mu^2 = 10^{-4}$ (bottom) to  $\mu^2 = 10^{-6}$ (top). }
    \label{fig:MI_ball&shell}
\end{figure}

Figure~\ref{fig:MI_ball&shell} shows the mutual information between these two modes, with the field prepared in the Bunch--Davies vacuum, as a function of the radial separation \(\Delta x\) between the ball and the shell. The calculation can be performed analytically---although the resulting expressions are lengthy.  

The mutual information exhibits essentially the same qualitative behavior as in the previous example, with one notable difference: when both the support of the two modes and their separation are sub-Hubble modes—the regime where $\mathcal{I}(A,B)$ agrees with that found in Minkowski spacetime—one finds \(\mathcal{I}_{AB} \propto (\Delta  x)^{-2}\), instead of \(\propto (\Delta  x)^{-4}\) as in the previous case. The difference arises from the specific geometric configuration considered here: the shell mode \(B\) completely surrounds the ball mode \(A\), thereby enhancing the correlations between them and slowing down the rate of fall-off.

  \end{exmp}

\subsection{General result\label{subsec:correlations_general}}
 Let $(\hat{O}_A^{(1)}, \hat{O}_A^{(2)})$ and $(\hat{O}_B^{(1)}, \hat{O}_B^{(2)})$ be two space-like separated, single-mode subsystems defined from phase space elements 
 \begin{equation}
     \gamma_{I}^{(i)} = (g_{I}^{(i)}(\vec{x}),f_{I}^{(i)}(\vec{x}))\,, ~ I=A,B\,  \text{ and } i = 1,2\,. 
 \end{equation}
In this section, we analyze some asymptotic properties of the mutual information for these non-overlapping  modes.

\begin{prop} {\bf Large-separation behavior of $\mathcal{I}(A,B)$.} 
    Consider a real, light scalar field theory in the Bunch-Davies vacuum. The mutual information of two typical---i.e., non-special---single-mode subsystems compactly supported in regions distant from each other (specifically, $|\Delta \vec x|H\gg1$ and { $|\Delta \vec x|/R\gg 1$})  grows monotonically with $H$.
   
\end{prop}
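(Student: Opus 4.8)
The plan is to express $\mathcal{I}(A,B)$ through the system-local symplectic invariants $\nu_A,\nu_B,\nu_\pm$ via \eqref{eq:def_MI}, track the $H$-dependence of each invariant using Propositions~\ref{prop:BDcorrelations} and~\ref{prop:vNgeneral}, and then deduce monotonicity from convexity properties of the entropy function \eqref{eq:S}. The first step is to insert the leading correlators \eqref{phiphi}--\eqref{phipi} into the blocks $\bm{\sigma}_A,\bm{\sigma}_B,\bm{C}$ of the joint covariance metric, built from $\hat O_I^{(i)}=\hat\Phi[f_I^{(i)}]-\hat\Pi[g_I^{(i)}]$. The key structural simplification in the regime $|\Delta\vec x|/R\gg1$ is that the cross-block $\bm{C}$ is effectively rank one: of the three families of cross-correlators only the field–field one, $\langle\{\hat\Phi[f_A^{(i)}],\hat\Phi[f_B^{(j)}]\}\rangle$, fails to decay with the separation (it is nearly scale invariant, $\sim|\Delta\vec x|^{-2\mu^2}$), while $\langle\{\hat\Phi[f_A],\hat\Pi[g_B]\}\rangle\sim|\Delta\vec x|^{-2-\mu^2}$ and $\langle\{\hat\Pi[g_A],\hat\Pi[g_B]\}\rangle\sim|\Delta\vec x|^{-4}$ are suppressed by positive powers of $R/|\Delta\vec x|$ (using the Sobolev asymptotics of Appendix~\ref{app:asymptotic_sobolev}). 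Hence $\det\bm{C}$ is negligible compared with $\det\bm{\sigma}_A$, and one obtains closed relations $\nu_+^2+\nu_-^2=\nu_A^2+\nu_B^2+\mathcal O((R/|\Delta\vec x|)^{p})$ with $p>0$, and $\nu_+^2\nu_-^2=\det\bm{\sigma}_{AB}=\nu_A^2\nu_B^2-\mathcal Q$, where $\mathcal Q\ge0$ is built solely from the field–field cross-correlator and the momentum variances, and $\mathcal Q=0$ would force $\mathcal{I}=0$.

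For the second step I would feed in Proposition~\ref{prop:vNgeneral}: it gives $S(\nu_A),S(\nu_B)$ monotone in $H$ and, quantitatively, $\nu_I^2=\mathfrak{b}_I\,(RH)^{2-2\mu^2}+\cdots$ with $\mathfrak{b}_I>0$ as established there. Combined with the relations above, this fixes the scaling of $\nu_\pm$ and yields an expansion of $\mathcal{I}(A,B)$ whose coefficients are polynomials in the Sobolev inner products $\mathrm{Re}(f_A^{(i)},f_B^{(j)})_{-3/2+\mu^2}$, $\mathrm{Re}(f_A^{(i)},g_B^{(j)})_{-(1-\mu^2)/2}$, etc. The third step is then a statement about a single-variable function: substituting into \eqref{eq:def_MI} and \eqref{eq:S}, and using that $S$ is increasing and concave, that $u\mapsto S(\sqrt u)$ is concave, and that passing from $(\nu_A^2,\nu_B^2)$ to $(\nu_+^2,\nu_-^2)$ is a spread at fixed sum (a majorization), gives $\mathcal{I}\ge0$; the monotone \emph{growth} in $H$ should then follow from the sign of the leading $H$-dependent coefficient, which I expect to be controlled by exactly the Cauchy--Schwarz and H\"older bounds on Sobolev products that appear in the proof of Proposition~\ref{prop:vNgeneral}, now applied to the four-dimensional joint system.

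The genuine obstacle is this last point. In the strict large-separation limit the leading term of $\mathcal{I}(A,B)$ is $H$-independent (the saturation seen in Figs.~\ref{fig:MI_vs_DxH} and~\ref{fig:MI_ball&shell}), so monotonicity is decided by subleading corrections; fixing their sign requires the precise asymptotic forms of the relevant Sobolev inner products (Appendix~\ref{app:asymptotic_sobolev}, and Appendix~\ref{app:proofs_special_functions} for the delicate infrared structure) together with the physicality constraint $\nu_-\ge1$, and one must check that no cancellation between the $(RH)^{2-2\mu^2}$- and the $(RH)^{1-\mu^2}$-order terms spoils the estimate across the whole range of $RH$ compatible with $|\Delta\vec x|H\gg1$. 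Should a fully general argument prove elusive, the statement can still be anchored on the explicitly solvable Examples~\ref{ex:ballballcorrelations} and~\ref{ex:ballshellcorrelations}, for which $\det\bm{\sigma}_{AB}$ and $\det\bm{C}$ are available in closed form from \eqref{eq:SIPs1o2BABB}--\eqref{eq:SIPsm1o2BABB} and the monotonicity can be checked directly.
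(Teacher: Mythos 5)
Your overall route is the same as the paper's: write $\mathcal{I}(A,B)$ in terms of the local symplectic invariants, pass to Simon's normal form so that $\nu_\pm^2$ depend on $(\nu_A,\nu_B,c_\pm)$, and observe that at large separation the cross-correlations are dominated by the field--field Sobolev product $\mathrm{Re}(f_A^{(i)},f_B^{(j)})_{-3/2+\mu^2}\sim|\Delta\vec x|^{-2\mu^2}$ while the field--momentum and momentum--momentum cross-correlators decay with positive powers of $R/|\Delta\vec x|$. Up to that point you and the paper agree, and your rank-one observation about $\bm C$ and the resulting smallness of $\det\bm C$ is a correct and useful refinement.

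The gap is that you never actually prove the statement. The proposition is not ``$\mathcal{I}\geq 0$'' (which your concavity/majorization step delivers and which is trivially true), but that $\mathcal{I}$ \emph{grows monotonically with $H$}. You explicitly leave the sign of the leading $H$-dependent coefficient as something you ``expect'' to be controlled by Cauchy--Schwarz/H\"older bounds, and your fallback is to verify the two worked examples — which does not establish the general claim. The paper's proof closes this by showing that the large-separation asymptotics take the form $\mathcal{I}(A,B)\sim\alpha(H)\,|\Delta\vec x|^{-4\mu^2}$, where the $H$-dependence of $\alpha$ is inherited entirely from the prefactor $(RH)^{2-2\mu^2}$ multiplying the $(\cdot,\cdot)_{-3/2+\mu^2}$ term in Eq.~\eqref{phiphi}, and asserting that $\alpha$ is monotonically increasing in $H$; this is the step your proposal is missing. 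Relatedly, your stated ``genuine obstacle'' — that the leading term of $\mathcal{I}$ is $H$-independent in the strict large-separation limit — conflates two different limits: the saturation visible in Figs.~\ref{fig:MI_vs_DxH} and~\ref{fig:MI_ball&shell} is the $RH\gg1$ saturation of Proposition~4, not a property of the $|\Delta\vec x|\to\infty$ limit at fixed $R$. At fixed $R$ and large separation the leading term does depend on $H$ (through $\alpha(H)$), so monotonicity is decided at leading order, not by a delicate cancellation between $(RH)^{2-2\mu^2}$ and $(RH)^{1-\mu^2}$ corrections as you fear.
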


\begin{proof}

We focus on pairs of modes for which a meaningful notion of distance can be defined---namely, when the support of these modes can be separated continuously without deforming or intersecting them (this is not possible, for instance, for the ball-shell configuration discussed above). When the shape of the modes precludes defining a distance between them, the fall-off behavior of the mutual information must be analyzed separately.

Recall that  
\begin{equation}
  \mathcal{I}(A,B) = S(\nu_A) + S(\nu_B) - S(\nu_+) - S(\nu_-)\,,
\end{equation}
where only \( \nu_\pm \) depend on the separation between the two modes.
Using Simon’s normal form of the covariance matrix, one finds
\begin{widetext}
\begin{equation}
\nu_{\pm}^2 = 
\frac{1}{2} \left(
\nu_A^2 + \nu_B^2 + 2 c_- c_+ 
\pm 
\sqrt{(\nu_A^2 - \nu_B^2)^2 
+ 4 c_+ c_- (\nu_A^2 + \nu_B^2)
+ 4 \nu_A \nu_B (c_+^2 + c_-^2)}
\right)\,,
\end{equation}
\end{widetext}
where $c_{\pm}$ denote the components of the cross-correlations matrix $C$ when written in Simon's normal form (see \ref{app:simon_form}). 

The combinations \( c_+ c_- = \det C \) and \( c_+^2 + c_-^2 \) can be expressed in terms of the correlators in 
Eqs.~(\ref{phiphi}--\ref{phipi}), whose large-separation behavior, $|\Delta \vec x|H\gg1$, can be obtained using  standard tools of asymptotic harmonic analysis {(see Appendix~\ref{app:asymptotic_sobolev} and, e.g.~\cite{bender1999advanced} for more details).}

{ The key observation is that the} large-separation behavior of \( c_- c_+ \) and \( c_+^2 + c_-^2 \), and consequently of the 
symplectic eigenvalues \( \nu_\pm \), is dominated by 
\( \mathrm{Re}(f_{A}^{(i)},f_B^{(j)})_{-\frac{3}{2} + \mu^2} { \sim (|\Delta \vec x|)^{-2\mu^2}}\).  This implies that
\begin{equation}\label{Iassymp}
\mathcal{I}(A,B) \sim { \alpha(H)}\, (|\Delta \vec x|)^{-4\mu^2}\,,
\end{equation}
{  where $\alpha$ is a monotonically growing function of  $H$; its form lengthy, and not particularly illuminating.} 
This  shows that \( \mathcal{I}(A,B) \) becomes nearly scale-invariant and grows monotonically 
with \( H \) at large separations.

{  In the limit $H \to 0$ and {  $m\to 0$ (in this order)}, the term including \( \mathrm{Re}(f_{A}^{(i)},f_B^{(j)})_{-\frac{3}{2} + \mu^2} \) vanishes, and instead $(f_{A}^{(i)},f_B^{(j)})_{-\frac{1}{2}} $ becomes dominant. One finds  $(f_{A}^{(i)},f_B^{(j)})_{-\frac{1}{2}} \sim |\Delta \vec x|^{-2}$ (see Appendix~\ref{app:asymptotic_sobolev}), recovering the well-known result $\mathcal{I}^{\mathrm{Mink}}(A,B) \sim (\Delta x_{AB})^{-4}$ in Minkowski spacetime~\cite{martin_real-space_2021,Martin:2021qkg,Shiba:2012np}.}

In this proof we have assumed that the functions \( f_{I}^{(i)} \) do not belong to the special class of functions defined in 
Eq.~\eqref{zeroaverage}; Appendix~\ref{app:proofs_special_functions} discusses these special cases in detail.
\end{proof}

The previous proposition describes the large-separation behavior of $\mathcal{I}(A,B)$, and applies irrespective of the size \( R \) of the supports of the two modes relative to $H$. 
It is also interesting to explore a different limit, namely when the supports of the two modes are 
super-Hubble, \( RH\gg 1 \), irrespective of the distance between them. 

\begin{prop}
    In the limit $RH\gg 1$, $\mathcal{I}(A,B)$ becomes independent of $R$ and depends logarithmically on the separation between the regions. 
\end{prop}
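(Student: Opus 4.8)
The plan is to force all four symplectic invariants in Eq.~\eqref{eq:def_MI} to be large, replace the von Neumann entropy by its logarithmic asymptote, and then read off the $R$- and $|\Delta\vec x|$-dependence of the resulting determinant ratio. The key preliminary observation is the scaling pattern of the leading Bunch--Davies correlators of Prop.~\ref{prop:BDcorrelations}: in a momentum/field basis $(\hat\Pi_I,\hat\Phi_I)$ the dominant parts of $\langle\{\hat\Pi,\hat\Pi\}\rangle$, $\langle\{\hat\Phi,\hat\Pi\}\rangle$, $\langle\{\hat\Phi,\hat\Phi\}\rangle$ grow like $(RH)^{0}$, $(RH)^{1-\mu^2}$, $(RH)^{2-2\mu^2}$. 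Hence $\sigma_A=D_A^{T}\hat\sigma_A D_A$, $\sigma_B=D_B^{T}\hat\sigma_B D_B$, $C=D_A^{T}\hat C\,D_B$ with the single-mode rescalings $D_I=\mathrm{diag}\big(1,(R_I H)^{1-\mu^2}\big)$, and $\hat\sigma_A,\hat\sigma_B,\hat C$ have finite, generically non-degenerate limits as $RH\to\infty$. Since $\det\sigma_I=(\det D_I)^2\det\hat\sigma_I$ and $\det\sigma_{AB}=(\det D_A)^2(\det D_B)^2\det\hat\sigma_{AB}$, the eigenvalues $\nu_A,\nu_B,\nu_\pm$ all diverge: for $\nu_A,\nu_B$ this is already in the proof of Prop.~\ref{prop:vNgeneral}; for $\nu_\pm$ one has $\nu_+\nu_-=\sqrt{\det\sigma_{AB}}\to\infty$, and a short estimate $\nu_-^2\ge\det\sigma_{AB}/\Delta$ together with the fact that both supports are super-Hubble forces $\nu_-$ (hence also $\nu_+$) to grow.

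With all four eigenvalues large, $S(\nu)=\log_2\nu+(\log_2 e-1)+O(\nu^{-2})$, the four constants cancel in $\mathcal{I}=S(\nu_A)+S(\nu_B)-S(\nu_+)-S(\nu_-)$, and
\[
\mathcal{I}(A,B)=\tfrac12\log_2\frac{\det\sigma_A\,\det\sigma_B}{\det\sigma_{AB}}+O\big((RH)^{-2+2\mu^2}\big)=\tfrac12\log_2\frac{\det\hat\sigma_A\,\det\hat\sigma_B}{\det\hat\sigma_{AB}}+\cdots,
\]
where in the last step every power of $RH$ cancels between numerator and denominator. The right-hand side is built from the $RH$-independent limits alone, so it no longer grows with $RH$ and depends on $R$ and $H$ only through the dimensionless combination $|\Delta\vec x|/R$ and on the shapes of the two modes. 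This is the sense in which $\mathcal{I}(A,B)$ ``becomes independent of $R$''.

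To extract the separation law I would look inside $\hat C$: its field--field block inherits the nearly scale-invariant behaviour $\mathrm{Re}(f_A,f_B)_{-\frac32+\mu^2}\sim(|\Delta\vec x|/R)^{-2\mu^2}$ of Eqs.~\eqref{almostscinv}--\eqref{eq:SIPsm3o2BABB}, whereas its field--momentum and momentum--momentum blocks decay much faster, as $(|\Delta\vec x|/R)^{-2}$ and $(|\Delta\vec x|/R)^{-4}$ (Eqs.~\eqref{eq:SIPsm1o2BABB}, \eqref{eq:SIPs1o2BABB}). Thus for well-separated supports only the field--field entry $\hat C_{\Phi\Phi}$ survives, and a short $4\times4$ determinant evaluation gives $\det\hat\sigma_{AB}=\det\hat\sigma_A\det\hat\sigma_B-(\hat\sigma_A)_{\Pi\Pi}(\hat\sigma_B)_{\Pi\Pi}(\hat C_{\Phi\Phi})^{2}$, so that
\[
\mathcal{I}(A,B)=-\tfrac12\log_2\!\Big(1-K_0\,(|\Delta\vec x|/R)^{-4\mu^2}\Big),
\]
with $K_0\in[0,1)$ depending only on the mode shapes. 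Expanding $(|\Delta\vec x|/R)^{-4\mu^2}=1-4\mu^2\ln(|\Delta\vec x|/R)+\cdots$ (valid for $\mu^2\ln(|\Delta\vec x|/R)\ll1$) yields $\mathcal{I}(A,B)=\mathcal{I}_\infty-c\,\mu^2\ln(|\Delta\vec x|/R)+\cdots$, i.e.\ $\mathcal{I}$ is to leading order linear in the logarithm of the separation; in the order $H\to0$, $m\to0$ the dominant term reverts to $\mathrm{Re}(f_A,f_B)_{-\frac12}\sim|\Delta\vec x|^{-2}$ and one recovers $\mathcal{I}^{\mathrm{Mink}}\sim|\Delta\vec x|^{-4}$.

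The main obstacle is justifying the claimed $(RH)$-powers of $\det\sigma_A,\det\sigma_B,\det\sigma_{AB}$. The three blocks of each covariance matrix scale with \emph{different} powers of $RH$, and one must rule out an accidental cancellation that lowers the leading power of a determinant (which would change $\det D_I$ and spoil the cancellation in the ratio). This is precisely the mechanism controlled by the quantities $\mathfrak{b},\mathfrak{c}$ in the proof of Prop.~\ref{prop:vNgeneral}: such cancellations arise only for the special, zero-average smearing functions of Eq.~\eqref{zeroaverage}, which is why the proposition is phrased for typical modes. Secondary points to handle cleanly: the bookkeeping when the two support sizes differ, $R_A\neq R_B$ (the determinant identity still forces the $RH$-powers to cancel, but the Williamson eigenvalues no longer rescale by a single scalar, so one must argue divergence of $\nu_\pm$ directly as above); and the fact that ``logarithmic'' holds in the window $\mu^2\ln(|\Delta\vec x|/R)\ll1$, outside which one simply has the near-scale-invariant power law $(|\Delta\vec x|/R)^{-4\mu^2}$ of the previous proposition, while the crossover regime $|\Delta\vec x|\sim R$ requires keeping the full ${}_3F_2$ expressions rather than their large-separation limit.
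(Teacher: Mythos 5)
Your proposal is correct and follows essentially the same route as the paper: approximate $S(\nu)$ by its logarithmic asymptote for large symplectic eigenvalues, observe that the leading powers of $RH$ cancel in the resulting determinant ratio (the paper does this via Simon's normal form coefficients $c_\pm$), and arrive at $\mathcal{I}\sim-\log_2\bigl(1-\beta\,|\Delta\vec x|^{-4\mu^2}\bigr)$ with the almost-scale-invariant expansion at large separation. Your additional bookkeeping (explicit rescaling matrices $D_I$, divergence of $\nu_\pm$, exclusion of the zero-average smearing functions) makes explicit steps the paper leaves implicit but does not change the argument.
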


\begin{proof}

We showed in Proposition~2 that, in the regime $RH \gg1$, $\nu_I^{2} \gg 1$, $I=A,B$. Consequently, the mutual information can be approximated as: 
\begin{equation}\label{IRHgg1}
     \mathcal{I}(A,B) \underset{RH\gg 1}{\sim} \frac{1}{2} \log_2\left(\frac{\nu_A^2 \nu_B^2}{\nu_+^2 \nu_-^2}\right).
\end{equation} 

Using Simon's normal form (see Appendix \ref{app:simon_form}), \eqref{IRHgg1} can be re-written as
\begin{equation}
   \mathcal{I}(A,B) \!\!\underset{RH\gg 1}{\sim}\!\!\!  -\frac{1}{2} \log_2\!\left(\!1 +\frac{c_-^2c_+^2}{\nu^2_A\nu^2_B} - \frac{c_-^2 + c_+^2}{\nu_A \nu_B} \!\right). 
\end{equation}
In the regime $RH \gg 1$, $c_\pm$ can be expanded as polynomials of $RH$.  The leading order dependence in $RH$ cancels out in the argument of the logarithm, making $\mathcal{I}(A,B)\sim \mathcal{O}((RH)^0)$, i.e., independent of $R H$, in the limit $RH \gg 1$.

The dependence on the separation $|\Delta \vec x|$  in the limit $RH \gg 1$ is 
\begin{equation}\label{Ilog}
     \mathcal{I}(A,B) {\sim} -\log_2(1-\beta\, |\Delta \vec x|^{-4\mu^2})\, , \end{equation}
where $1>\beta |\Delta \vec x|^{-4\mu^2}>0$---the concrete value depends on the choice of smearing functions. 
This logarithmic dependence on the distance was first reported in~\cite{Martin:2021qkg} using a  specific family of single-mode subsystems.

Notice that for large separations $|\Delta \vec x|H\gg 1$, the expansion of the logarithm in \eqref{Ilog} produces $ \mathcal{I}(A,B) {\sim}  |\Delta \vec x|^{-4\mu^2}$, consistently recovering the almost scale invariance at super-Hubble separations. 
\end{proof}

\section{Entanglement between two local degrees of freedom}\label{sec:LN2dof}

In the previous section we showed that de Sitter curvature enhances correlations between pairs of compactly supported, non-overlapping modes. In this section, we turn to quantum correlations, focusing specifically on entanglement. In contrast to correlations, we find that de Sitter curvature actually {\em decreases} the entanglement between compactly supported modes.  

As in the previous section, we begin with two simple yet illustrative examples—the same ones considered in the previous section—in which the entanglement can be computed analytically, and then proceed to provide a general proof.

\subsection{Examples\label{subsec:LN2dof_examples}}

As explained in Sec.~\ref{entsubs}, in the Bunch--Davies vacuum the entanglement between any two modes is nonzero if and only if \(\tilde{\nu}_- < 1\), and it decreases monotonically with \(\tilde{\nu}_-\) (see Eq.~\eqref{eq:symplecticeigvalsall} for the definition of \(\tilde{\nu}_-\)).\\

\begin{exmp} \label{ex:ballballLN}{\bf Two modes supported in non-overlapping balls (cont.)}\\

After a tedious calculation we conclude that, in  the small mass limit $\mu \ll 1$, $\tilde{\nu}_{-}$ can be written as
\begin{equation} \label{eq:nuPTdS}
\tilde \nu_{-} = \sqrt{(\tilde \nu_-^{\mathrm{Mink}})^2 + (RH)^2 \tilde{\mathcal{F}}_{-}^{(\delta)}} \,,
\end{equation}
where
\begin{widetext}
\begin{equation} \label{eq:ftpm_example}
\begin{split}
R^{2-2\mu^2}\tilde{\mathcal{F}}_{-}^{(\delta)} =& \left(\|{g}_A^{(\delta)}\|_{\frac{1}{2}}^2 +\mathrm{Re}({g}^{(\delta)}_A, {g}^{(\delta)}_B )_{\frac{1}{2} } \right)\left(\|f_A^{(\delta)}\|_{-\frac{3}{2} + \mu^2}^2 -  \mathrm{Re}(f^{(\delta)}_A, f^{(\delta)}_B )_{-\frac{3}{2} +\mu^2} \right) \\
& - \left(\ { \mathrm{Re}\left[(f^{(\delta)}_A, g^{(\delta)}_A)_{-\frac{1}{2} + \frac{\mu^2}{2}}\right]^2 } - {\mathrm{Re}\left[(f^{(\delta)}_A, g^{(\delta)}_B )_{-\frac{1}{2} +\frac{\mu^2}{2}} \right]^2}\right)\\ 
& -   2 \, \frac{\left(\|{g}_A^{(\delta)}\|^2_{\frac{1}{2}}\,\mathrm{Re}(f^{(\delta)}_A, {g}^{(\delta)}_B )_{-\frac{1}{2} + \frac{\mu^2}{2}} - {\mathrm{Re}(f_A^{(\delta)},g_A^{(\delta)})_{-\frac{1}{2} + \frac{\mu^2}{2}}}\,\mathrm{Re}({g}^{(\delta)}_A, {g}^{(\delta)}_B )_{\frac{1}{2} }\right)}{\mathrm{Re}(f^{(\delta)}_A, f^{(\delta)}_B )_{-\frac{3}{2}  + \mu^2} \|{g}_A^{(\delta)}\|^2_{\frac{1}{2}} - \|f_A^{(\delta)}\|_{-\frac{3}{2} + \mu^2}^2\,\mathrm{Re}({g}^{(\delta)}_A,{g}^{(\delta)}_B )_{\frac{1}{2} } } \\
& \quad \times \left( \|f_A^{(\delta)}\|^2_{-\frac{3}{2} + \mu^2}\,\mathrm{Re}(f^{(\delta)}_A, {g}^{(\delta)}_B )_{-\frac{1}{2} +\frac{\mu^2}{2}} -  \mathrm{Re}(f^{(\delta)}_A, f^{(\delta)}_B )_{-\frac{3}{2} +\mu^2}  {\mathrm{Re}(f_A^{(\delta)}, g_A^{(\delta)})_{-\frac{1}{2} +\frac{ \mu^2}{2}}} \right) \\
& + \mathcal{O}(\mu^2)\,.
\end{split}
\end{equation}
\end{widetext}
In Eq.~\eqref{eq:nuPTdS}, $\tilde \nu_-^{\mathrm{Mink}}$ is the value of $\tilde \nu_-$ for a massless field in the Minkowski vacuum. Thus, this equation conveniently separates the contribution to $\tilde \nu_-$ that is genuinely due to spacetime curvature. 

Substituting into this expression the Sobolev norms and inner products from Eqs.~\eqref{eq:SN1o2B},~\eqref{eq:SNm1o2B},~\eqref{eq:SNm3o2AB},~\eqref{eq:SNm1o2pmu2AB}, and \eqref{eq:SIPs1o2BABB}–\eqref{eq:SIPsm1o2BABB}, we obtain a closed expression for $\tilde \nu_-$ in the Bunch--Davies vacuum.

For the example of two modes supported in non-overlapping spherical regions, we find that \(\tilde{\nu}_-^{\mathrm{Mink}}\) is larger than one for all separations \(|\Delta \vec x| > 2R\) and all values of \(\delta\). This implies that the two modes are \emph{not} entangled in the Minkowski vacuum~\cite{ubiquitous}. Nevertheless, the modes are still correlated (as shown in the previous section), indicating that these correlations do not correspond to entanglement. Note also that the entropy of each individual mode is nonzero as we showed in Example~\ref{ex:singlemodeball_cont}, which means that each mode must be entangled with \textit{some} other modes.  

In the Bunch--Davies vacuum, whether the modes are entangled depends on the sign of \(\tilde{\mathcal{F}}_-^{\delta}\) given in \eqref{eq:ftpm_example}. For entanglement to occur, this quantity must be negative, ensuring that \(\tilde{\nu}_- < 1\). However, we find that \(\tilde{\mathcal{F}}_-^{\delta}\) is strictly positive in the regime \(\mu \ll 1\), for all \(\delta\) and for all separations \(|\Delta \vec x| > 2R\). This behavior is illustrated in Fig.~\ref{fig:modification_nuPT_deSitter}.

We conclude that curvature effects do not generate entanglement between the specific two modes considered in this example. On the contrary, curvature pushes the system further away from being entangled, since \(\tilde{\mathcal{F}}_-^{\delta}\) is positive, and $\tilde \nu_-$ grows with \(H\)—as follows from \eqref{eq:nuPTdS}, together with the fact that \(\tilde{\mathcal{F}}_-^{\delta}\) is independent of \(H\). Note that curvature pushes the system away from being entangled even though it simultaneously enhances the correlations between modes.

\begin{figure}
    \centering
    \includegraphics[width=0.5\textwidth]{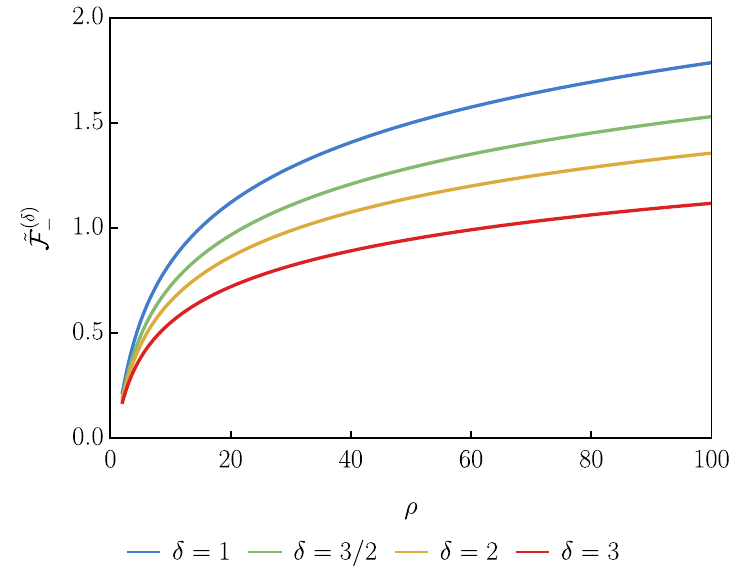}
    \caption{$\tilde{\mathcal{F}}^{(\delta)}_-$ as a function of $\rho=|\Delta \vec x|/R$, where 
   $|\Delta \vec x|$ denotes separation between the centers of the two modes ($RH =1$ and $\mu= 10^{-20}$).}
    \label{fig:modification_nuPT_deSitter}
\end{figure}
\end{exmp}

\begin{exmp}\label{ex:ballshellLN}{\bf Two modes supported in a ball and a surrounding spherical shell, respectively (cont.).}\\

This is a continuation of Example~\ref{ex:ballshellcorrelations}. Equations~\eqref{eq:nuPTdS} and~\eqref{eq:ftpm_example} also apply here, with the Sobolev norms and inner products in Eq.~\eqref{eq:ftpm_example} replaced by those given in Examples~\ref{ex:singlemodeball},~\ref{ex:singlemodeshell},   and~\ref{ex:ballshellcorrelations}.

This case is interesting because $\tilde \nu_-^{\mathrm{Mink}} < 1$ for a certain range of $\Delta x$ (see Fig.~\ref{fig:LN_BallShell}), meaning that the two modes are entangled in the Minkowski vacuum. In this ball-shell configuration, the two modes are ``closer together'' compared to the ball-ball configuration, since the shell completely surrounds the ball, which intuitively explains why there is more entanglement.

We find that the function $\tilde{\mathcal{F}}^{(\delta)}_-$ in Eq.~\eqref{eq:nuPTdS} is always positive in this configuration (and $H$-independent). It then follows from \eqref{eq:nuPTdS} that entanglement decreases monotonically as $H$ increases. This behavior is opposite to that of correlations, which increase with $H$. This example highlights a subtle and non-intuitive interplay between curvature, correlations, and entanglement.

Figure~\ref{fig:LN_BallShell} shows the logarithmic negativity (LN) of the system as a function of the radial separation between the ball and the shell, for various values of $H$. A reference curve corresponding to the Minkowski vacuum result is also included.

This figure displays the following behaviors:
\begin{itemize}
    \item[(i)] LN decays exponentially with the radial distance $\Delta x$ between the shell and ball modes, vanishing beyond a certain threshold distance;
    \item[(ii)] LN approaches the Minkowski result in the limit $RH \to 0$;
    \item[(iii)] LN decreases with increasing $H$, and vanishes entirely beyond a certain threshold value of $RH$.
\end{itemize}

\begin{figure}
    \centering
    \begin{tikzpicture}
       \node (C) at ([xshift=-0.25\textwidth]current page.center){ \includegraphics[width=0.495\textwidth]{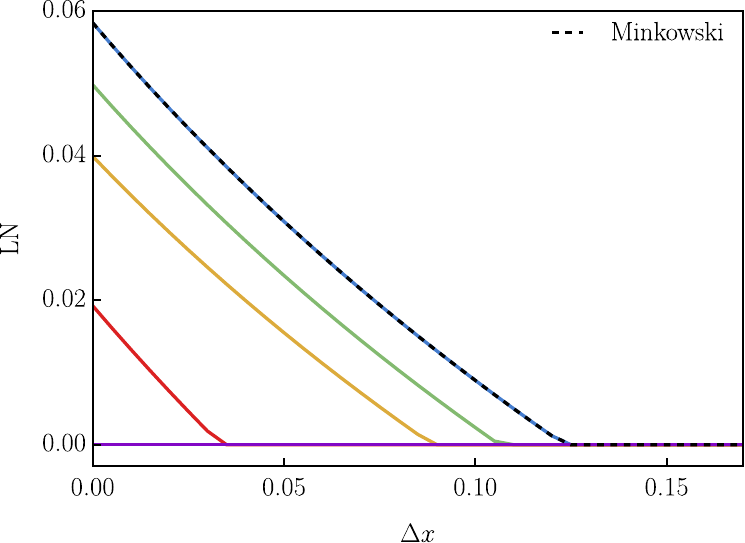}};
       \node at ([xshift=1.5cm,yshift=1.2cm]C.center) {\color{white}$\Delta x$}; 
       \node[fill=white,rotate=-45,inner sep=0pt] at ([xshift=-1.5cm,yshift=1.25cm]C.center) {\color{b1} $RH = 10^{-5}$ };
       \node[fill=white,rotate=-37,inner sep=0pt] at ([xshift=.cm,yshift=-0.75cm]C.center) {\color{g2} $RH = 10^{-2}$ };
       \node[fill=white,rotate=-44,inner sep=0pt] at ([xshift=-1.75cm,yshift=0.05cm]C.center) {\color{y3} $RH = 10^{-1}$ };
       \node[fill=white,rotate=-45,inner sep=0pt] at ([xshift=-2.35cm,yshift=-1.4cm]C.center) {\color{r4} $RH = 1$ };
       \node[fill=white,inner sep=1pt] at ([xshift=3cm,yshift=-2cm]C.center) {\color{l5} $RH = 10$ };
 \end{tikzpicture}
    \caption{Logarithmic negativity between two single modes compactly supported, respectively,  within a sphere of radius $R$ and a surrounding spherical shell with inner and outer radii $R_S \pm d$, with $R_S = R + \Delta x$ and $d = 0.3R$.  We use $\mu = 10^{-2}$ in this plot. The plot shows that LN decreases with both $H$ and $|\Delta  x|$ and vanish beyond a certain threshold values.}
    \label{fig:LN_BallShell}
\end{figure}

\end{exmp}

\subsection{General results\label{sec:generalLN}} 

\begin{prop}\label{prop:LN_general}
    Consider a real scalar field of mass $m \ll H$ in the Bunch--Davies vacuum, and let $A$ and $B$ be \textit{any} two non-overlapping single-mode subsystems compactly supported in super-Hubble regions, i.e., $RH \gg 1$. Then the entanglement between $A$ and $B$ increases monotonically with $H$.

\end{prop}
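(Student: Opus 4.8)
The plan is to reduce the claim entirely to the $H$-monotonicity of the single symplectic invariant $\tilde\nu_-$ that governs the logarithmic negativity. By Eq.~\eqref{LN}, $\mathrm{LN}(\hat\rho_{AB})=\max\{0,-\log_2\tilde\nu_-\}$ is a strictly decreasing function of $\tilde\nu_-$ on the entangled branch $\tilde\nu_-<1$. Hence the statement ``entanglement increases monotonically with $H$'' is equivalent to ``$\tilde\nu_-$ decreases monotonically with $H$,'' and the entire proposition reduces to controlling the sign of $\partial_H\tilde\nu_-$ in the regime $RH\gg1$, $m/H\ll1$.

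To obtain $\tilde\nu_-$ for an \emph{arbitrary} (non-special) pair of non-overlapping modes---rather than just the ball/ball family of Example~\ref{ex:ballballLN}---I would cast the joint covariance metric $\sigma_{AB}$ into Simon's normal form (Appendix~\ref{app:simon_form}) and derive the general analog of Eqs.~\eqref{eq:nuPTdS}--\eqref{eq:ftpm_example}. The goal is a decomposition $\tilde\nu_-^2=(\tilde\nu_-^{\mathrm{Mink}})^2+(RH)^2\,\tilde{\mathcal F}_-+\dots$ in which the $H$-independent Minkowski value is isolated and every remaining term is organized as a monomial in $RH$ whose coefficient is built solely from the Sobolev products appearing in Eqs.~\eqref{phiphi}--\eqref{phipi}. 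Since $\tilde{\mathcal F}_-$ does not depend on $H$, the $H$-dependence of $\tilde\nu_-$ is then fully explicit, and its monotonicity is dictated entirely by the \emph{sign} of the curvature coefficient $\tilde{\mathcal F}_-$: the stated monotonic increase of entanglement corresponds precisely to $\tilde{\mathcal F}_-<0$.

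To fix this sign for arbitrary modes I would follow the functional-analytic strategy used in the proof of Prop.~\ref{prop:vNgeneral}. There, the leading curvature coefficient of $\nu_A^2-(\nu_A^{\mathrm{Mink}})^2$ reduced to a Gram determinant of Sobolev inner products and was controlled by a single Cauchy--Schwarz/H\"older inequality. Here the analog of $\tilde{\mathcal F}_-$ is a less rigid object: it mixes Sobolev products of three distinct orders ($\tfrac12$, $-\tfrac12+\tfrac{\mu^2}{2}$, and $-\tfrac32+\mu^2$) and appears as a difference of products that is \emph{not} a single Gram determinant. The plan is therefore to recast $\tilde{\mathcal F}_-$ as a quadratic form in the mode data $\{g_I,f_I\}$ and to bound it using Cauchy--Schwarz/H\"older estimates across these Sobolev spaces, exploiting that for $RH\gg1$ one has $\nu_A^2,\nu_B^2\gg1$ (established in the proof of Prop.~\ref{prop:vNgeneral}), which simplifies $\tilde\nu_-$ and fixes the dominant balance of terms. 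The saturation cases---where the relevant inequalities become equalities---should, as in Prop.~\ref{prop:vNgeneral}, single out the zero-average family of Eq.~\eqref{zeroaverage}, to be treated separately along the lines of Appendix~\ref{app:proofs_special_functions}.

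The main obstacle is exactly this sign determination. Unlike the von Neumann case, where positivity followed from a clean Gram-determinant structure, the curvature coefficient $\tilde{\mathcal F}_-$ is a multi-term combination whose sign is not manifest and which---as the explicit ball/ball and ball/shell computations of Examples~\ref{ex:ballballLN} and~\ref{ex:ballshellLN} illustrate---is highly sensitive to the geometric arrangement of the supports. Establishing that $\tilde{\mathcal F}_-$ carries the required sign \emph{uniformly} over all admissible mode pairs, rather than case by case, is where the real content lies; I expect it to hinge on identifying the correct pairing of Sobolev orders $(s,s')$ in the H\"older step so that the curvature term assembles into a definite quadratic form, together with the super-Hubble hierarchy $\nu_{A,B}^2\gg1$ to suppress the subleading powers of $RH$.
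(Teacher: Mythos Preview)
Your overall strategy is right: reduce to the $H$-monotonicity of $\tilde\nu_-$, isolate the Minkowski value, and control the sign of the leading curvature coefficient $\tilde{\mathcal F}_-$. But two things are off.

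First, the target sign. The proposition as worded (``increases'') is a typo; the whole surrounding section, including Examples~\ref{ex:ballballLN}--\ref{ex:ballshellLN} and the proof itself, establishes that $\tilde{\mathcal F}_->0$, so $\tilde\nu_-$ \emph{grows} with $H$ and entanglement \emph{decreases}. You were aiming at $\tilde{\mathcal F}_-<0$, which is the opposite of what is actually proved (and of what the examples show).

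Second---and this is the substantive gap---your plan to fix the sign via Cauchy--Schwarz/H\"older across several Sobolev orders is not how the paper does it, and you correctly sensed it would be messy. The paper avoids that entirely by working at the level of the full $4\times4$ leading curvature block. In the generic case one writes $\sigma_{AB}-(\sigma_{AB})_{\mathrm{Mink}}=(RH)^{2-2\mu^2}\,\bm F+\ldots$ with $\bm F=\begin{psmallmatrix}\bm F_A & \bm F_C\\ \bm F_C^{\mathrm T} & \bm F_B\end{psmallmatrix}$, where all entries are Sobolev products of order $-\tfrac32+\mu^2$ only (no mixing of orders at leading power). Cauchy--Schwarz gives $\det\bm F_A,\det\bm F_B>0$. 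The Gaussian separability lemma then disposes of the case $\det\bm F_C\ge0$ (the state is separable, LN vanishes, nothing to prove), so one may assume $\det\bm F_C<0$. In that case $\tilde\nu_-^2-(\tilde\nu_-^{\mathrm{Mink}})^2=(RH)^{4-4\mu^2}\tilde{\mathcal F}_-+\ldots$ with
\[
\tilde{\mathcal F}_-=\tfrac12\Big(\det\bm F_A+\det\bm F_B-2\det\bm F_C-\sqrt{(\det\bm F_A+\det\bm F_B-2\det\bm F_C)^2-4\det\bm F}\Big),
\]
and positivity follows in one stroke from Fischer's inequality $\det\bm F\le\det\bm F_A\det\bm F_B$ (using that $\bm F$ is positive semi-definite). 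The two ingredients you were missing are precisely the separability lemma (to restrict to $\det\bm F_C<0$) and Fischer's inequality on the block matrix; together they replace the multi-order H\"older argument you anticipated and make the sign determination clean rather than the ``main obstacle.''
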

\begin{proof}
As explained above, this  Proposition is equivalent to saying  $\tilde \nu_-$ grows monotonically with $H$. Let $(\gamma_A^{(1)},\gamma_A^{(2)})$ and $(\gamma_B^{(1)},\gamma_B^{(2)})$ be
    bases of the symplectic subspaces $\Gamma_A$ and $\Gamma_B$ defining the two modes under consideration.

    If $\gamma_I^{(i)}=(g_I^{(i)},f_I^{(i)})$, $I=A,B$, $i=1,2$, the proof must  differentiate  three different cases: 
    \begin{enumerate}
        \item $f^{(i)}_I \neq 0$ and $f^{(1)}_I \neq f^{(2)}_I$, $I=A,B$.
        \item Either $f^{(1)}_{I} = 0$, or $f^{(2)}_I =0$ for both $I =A$ and $I=B$. 
        \item $f_I^{(1)}=f_I^{(2)}$, $I=A,B$, 
    \end{enumerate}
    
    {\bf Case 1:} $\tilde{\nu}_-$  can be written in terms of  symplectic invariants $\tilde{\Delta}$ and $\det \sigma_{AB}$ as in Eq.~\eqref{eq:symplecticeigvalsall}. In the regime $RH \gg 1$, the leading order contributions to these symplectic invariants can be obtained as follows. The components of the covariance metric can be written as 
\begin{equation}
    {\bm\sigma}_{AB}\! -\! ({\bm\sigma}_{{AB}})_{\mathrm{Mink}} \!=\!(RH)^{{ 2-2\mu^2}}  {\bm F} + \mathcal{O}[(RH)^{  { 1-\mu^2}}]\,,
\end{equation}
where \be 
{\bm F} : = \left(\begin{matrix}
        {\bm F_A} & {\bm F_C}\\ 
        {\bm F_C}^{\mathrm{T}}&  {\bm F_B}
    \end{matrix}\right)
\ee
is a $4\times 4$ matrix made of the following blocks:
\begin{widetext}
    \begin{equation}\label{eq:FI_general_case1}
    {\bm F_I} = { R^{-2+2\mu^2}} \left(\begin{matrix}
        ||f_I^{(1)}||_{-\frac{3}{2}+\mu^2}^2 & \mathrm{Re}\left[(f_I^{(1)},f_{I}^{(2)})_{-\frac{3}{2}+\mu^2}\right] \\ 
        \mathrm{Re}\left[(f_I^{(1)},f_I^{(2)})_{-\frac{3}{2}+\mu^2}\right] & ||f_I^{(2)}||_{-\frac{3}{2}+\mu^2}^2
    \end{matrix}\right) (1+\mathcal{O}(\mu^2))\,,
\end{equation}
with $I = A,B$, and 
\begin{equation}
     {\bm F_C} = { R^{-2+2\mu^2}} \left(\begin{matrix} 
    \mathrm{Re}\left[(f_A^{(1)},f_{B}^{(1)})_{-\frac{3}{2}+\mu^2}\right] & \mathrm{Re}\left[(f_A^{(1)},f_{B}^{(2)})_{-\frac{3}{2}+\mu^2}\right]\\ 
    \mathrm{Re}\left[(f_A^{(2)},f_{B}^{(1)})_{-\frac{3}{2}+\mu^2}\right] & \mathrm{Re}\left[(f_A^{(2)},f_{B}^{(2)})_{-\frac{3}{2}+\mu^2}\right]
    \end{matrix}\right) (1+\mathcal{O}(\mu^2))\,.
\end{equation}
\end{widetext}
(The functions $g_I^{(i)}$ contribute to \eqref{eq:FI_general_case1}  at next-to-leading order.) On the other hand, 
\begin{equation}\begin{split}
    \tilde{\Delta}\! -\! \tilde{\Delta}_{\mathrm{Mink}} \!=&(RH)^{4-4\mu^2}(\det {\bm F_A} + \det {\bm F_B} \\
    &- 2 \det {\bm F_{C}}) + \mathcal{O}[(RH)^{3-3\mu^2}]\,. 
    \end{split}
\end{equation}
It follows from the Cauchy-Schwarz inequality that the matrices ${\bm F_A}$ and ${\bm F_B}$ are positive definite and, consequently, $\det {\bm F_A}>0$ and $\det {\bm F_B}>0$. 

If $\det{\bm F_C}$ were equal or larger than zero, then the Gaussian separability lemma (see Appendix~\ref{app:sobolev}) would guarantee that the two modes are not entangled, and the Proposition is trivially satisfied.  Therefore, we restrict attention to configurations for which $\det {\bm F_{C}} <0$. Using the expressions above we find
\be \label{eq:nutpm2_minus_Mink}\begin{split}
    \tilde \nu_-^2-(\tilde \nu^{\rm Mink}_-)^2&=(RH)^{4-4{\mu}^2} \tilde{\mathcal{F}}_-\, (1 + \mathcal{O}({\mu}^2)) \\ 
    &+ \mathcal{O}[(RH)^3]\,,
\end{split}
\ee
when $RH\gg 1$, where 
\begin{widetext}
    \be \label{eq:mathcalFtm_general} \tilde{\mathcal{F}}_- = \frac{1}{2}\left(\det {\bm F_A} + \det {\bm F_B} - 2 \det {\bm F_{C}} - \sqrt{(\det {\bm F_A} + \det {\bm F_B} - 2 \det {\bm F_{C}})^2 - 4 \det {\bm F} }\right) \,.\ee 
\end{widetext}
It follows from the positivity of ${\bm F_I}$ with $I=A,B$ and $\det {\bm F_{C}}<0$ that the first three terms in \eqref{eq:mathcalFtm_general} are all positive.

Since the block-matrix ${\bm F}$ is positive semi-definite (as can be shown using Simon's normal form and the requirement that the symplectic eigenvalues of the two-mode system be real), one can use Fischer's inequality (see Appendix~\ref{app:sobolev}) to find the upper bound $\det {\bm F} \leq \det {\bm F_A}\det {\bm F_B}$, with the inequality being saturated if and only if ${\bm F_C } =0$. Using this upper bound, we conclude that the argument inside the square root is positive and the square root is smaller than the sum of the first three terms in Eq.~\eqref{eq:mathcalFtm_general}. Thus, $\tilde{\mathcal{F}}_- >0$ and $H$-independent, so  $\tilde \nu_-^2-(\tilde \nu^{\rm Mink}_-)^2$  is positive and grows with $H$ in the regime $RH\gg 1$. \\

{\bf Cases 2 and 3:} As shown in the proof of Prop.~\ref{prop:vNgeneral}, Cases 2 and 3 yield the same leading-order behavior in the regime $RH \gg 1$ for the symplectic eigenvalue $\nu_I$ of any single-mode subsystem. The same argument applies here:  From the symplectic invariants in Eq.~\eqref{eq:symplecticeigvalsall}, one sees that Cases 2 and 3 also produce identical leading-order contributions to the logarithmic negativity. Therefore, we restrict attention to Case~2.  

The proof of the proposition in Case~2 proceeds along the same lines as in Case~1, with the only difference being that the leading-order terms in ${\bm \sigma}_{AB}$ and $\tilde{\Delta}$, in an expansion for small $RH$, scale as $(RH)^{2-2\mu^2}$ (rather than $(RH)^{4-4\mu^2}$). To establish the positive definiteness of the matrices analogous to those in Eq.~\eqref{eq:FI_general_case1}, one uses Prop.~10 in Appendix~\ref{app:sobolev}, instead of relying on the Cauchy--Schwarz inequality.

There exist other cases not covered in the classification, like when only one of the functions $f^{(i)}_I$ with $i=1,2$ and $I=A,B$ vanishes or when the functions used to define the two modes subsystems are identical in shape (although keeping them non-overlapping). One can easily extend this proof using very similar tools.

See Appendix~\ref{app:proofs_special_functions} for details of this proof when the functions $f_I^{(i)}$ belong to the special family defined in \eqref{zeroaverage}.

\end{proof}

As in previous sections, although our formal proof is restricted to modes with super-Hubble support, we conjecture that the result applies to any pair of non-overlapping modes. Appendix~\ref{app:extension_LN_allRH} shows that this conjecture holds for single-mode subsystems of the form 
\(\gamma_{A}^{(1)} = (0, f(\vec{x}))\) and \(\gamma_{A}^{(2)} = (g(\vec{x}), 0)\), corresponding to “pure-field’’ and “pure-momentum’’ smeared operators, respectively.

\section{Partner systems and entanglement distribution\label{sec:partners}}

The results obtained so far may appear to be in tension. On the one hand, we have shown in Sec.~\ref{sec:vNEntropy} that increasing $H$ enhances the von Neumann entropy of localized modes. Since the entropy of a mode quantifies its entanglement with the rest of the field degrees of freedom, a larger $H$ implies that the field theory as a whole contains more entanglement. On the other hand, we have found in the last section that increasing $H$ reduces the entanglement between pairs of localized modes.

In this section, we argue that these two results are compatible with each other. Even more, we argue that the former is in fact a direct consequence of the latter.

The apparent contradiction is resolved by analyzing how correlations and entanglement are spatially distributed in the Bunch--Davies vacuum. In particular, we show that the notion of a \emph{partner mode} \cite{hotta2015partner,Trevison_2019,hackl_minimal_2019,partnerformula} provides a convenient framework to characterize the spatial structure of entanglement in quantum field theory in the cosmological patch of de Sitter spacetime. Partner modes were previously discussed in de Sitter spacetime in \cite{Nambu_2023} for a subset of special modes belonging to the family introduced in a Remark on page \pageref{remark_IR}.  Here, we discuss universal aspects of partner modes in the cosmological patch of de Sitter spacetime.

Consider a single-mode subsystem $A$ compactly supported, and assume that the field  is prepared in a Gaussian state, either pure or mixed. Because $A$ is compactly supported, its  reduced state $\hat \rho^{\text{red}}_A$ is mixed. This in turn implies that $A$ is entangled with the rest of the system.  Interestingly,  there exists a unique single-mode $A_p$, different from and independent of $A$, that encodes all entanglement with $A$. In particular, when the state of the field is pure---e.g the Bunch-Davies vacuum---$A_p$ ``purifies'' $A$, in the sense that  the reduced state of the two-mode system $(A, A_p)$ is pure---and the von Neumann entropy of $A$ quantifies the entanglement between $A$ and $A_p$.

The concept of a partner mode was introduced in~\cite{hotta2015partner} in the context of Hawking radiation, and was further developed in~\cite{Trevison_2019} and \cite{hackl_minimal_2019}. A formal treatment, including a proof of existence and uniqueness as well as generalizations to mixed Gaussian states, is given in~\cite{partnerformula}. In this section, we follow the approach of~\cite{partnerformula} and restrict to pure Gaussian states.

Naturally, the partner mode $A_p$ depends on the specific choice of mode $A$. Our interest, however, lies in the universal features of $A_p$ that hold for any localized mode $A$. In particular, we focus on the asymptotic behavior of the partner mode at large distances from the region where $A$ is supported. We argue that these fall-off properties are universal and encode invariant information about the spatial distribution of correlations and entanglement.

The partner of a given mode $A$ can be determined as follows. Let $A$ be a single-mode subsystem characterized by the classical symplectic subspace $\Gamma_A$. Let $\Pi_A$ be the symplectic-orthogonal projector onto $\Gamma_A$ in the classical phase space.\footnote{Because $\Gamma_A$ is symplectic, the phase space $\Gamma_{\sigma}$ can be decomposed as $\Gamma_{\sigma} = \Gamma_A \oplus \Gamma_{\bar A}$, where $\Gamma_{\bar A}$ is the symplectic orthogonal complement of $\Gamma_A$. The direct sum structure $\Gamma_{\sigma} = \Gamma_A \oplus \Gamma_{\bar A}$ allows us to define a ``symplectic-orthogonal projector'' $\Pi_A$ onto $\Gamma_A$. 
An expression for $\Pi_A$ can be obtained as follows. Lest $\Gamma^{\mathbb{C}}_A$ be the complexified version of $\Gamma_A$, and let $\gamma_A \in \Gamma^{\mathbb{C}}_A$ be a vector normalized with respect to the Klein-Gordon inner product, i.e., $\langle \gamma_A, \gamma_A \rangle = 1$, where $\langle \gamma, \gamma' \rangle := -i \bm{\omega}(\gamma^*, \gamma')$. Then the pair $(\gamma_A, \gamma_A^*)$ forms a basis for $\Gamma^{\mathbb{C}}_A$, and the projector $\Pi_A$ can be written as
\[
\Pi_A = \gamma_A \langle \gamma_A, \cdot \rangle - \gamma_A^* \langle \gamma_A^*, \cdot \rangle.
\] $\Pi_A$ is independent of the basis  $(\gamma_A,\gamma_A^*)$ chosen in this construction. See \cite{partnerformula} for further details. \label{KG}} 
Then, $\Pi_A^\perp = 1 - \Pi_A$ projects onto the symplectic orthogonal complement of $\Gamma_A$.

The partner mode $A_p$ is defined by the symplectic subspace \cite{partnerformula}
\begin{equation}\label{partner}
\Gamma_{A_p} = \Pi_A^\perp[J \Gamma_A],
\end{equation}
where $J$ is the complex structure associated with the pure Gaussian state in which the field theory is prepared.

Given basis vectors $(\gamma^{(1)}_A,\gamma^{(2)}_A)$ in $\Gamma_A$, 
 \be \frac{\Pi_A^\perp[J\gamma^{(1)}_A]}{\sqrt{|\det J_A|-1}},\quad \frac{\Pi_A^\perp[J\gamma^{(2)}_A]}{\sqrt{|\det J_A|-1}}\ee
form a (normalized) basis in $\Gamma_{A_p}$. Here, $J_A = \Pi_A J\Pi_A$ denotes the reduced complex structure in subsystem $A$. 

For the Bunch--Davies and Minkowski vacuum, we can use the expressions for the complex structures $J_{BD}$ and $J_{M}$ given in \eqref{JdS} and \eqref{JM}, respectively, to compute the partner of any given mode. (An example is shown below.)

\subsection{Asymptotic behavior of the partner of a local mode in de Bunch--Davies  vacuum\label{subsubsec:partners_general}}

Given any basis vector $\gamma_A \in \Gamma_A$, the action of $J$ is obtained by computing the integral
\begin{equation}
J_{BD}\gamma_A(\vec{x}) = \int d^3x'\, J_{BD}(\vec{x},\vec{x}')\, \gamma_A(\vec{x}')\, .
\end{equation}
A simple inspection of this expression reveals two important facts. First, because $J_{BD}(\vec{x}, \vec{x}') \neq 0$ for all $\vec{x}$ and $\vec{x}'$, the function $(J_{BD}\gamma_A)(\vec{x})$ is generically non-zero for all $\vec{x}$---even when $\gamma_A(\vec{x})$ has compact support. This implies, in particular, that the support of the partner mode overlaps with the support of the original mode $\gamma_A$. This poses no problem, as the two modes $A$ and $A_p$ are distinct and independent.\footnote{Two subsystems with classical state spaces $\Gamma_A$ and $\Gamma_B$ are said to be independent when they are symplectically orthogonal, that is,
\[
\boldsymbol{w}(\gamma_A,\gamma_B)=0,
\]
for all $\gamma_A \in \Gamma_A$ and $\gamma_B \in \Gamma_B$.
} In physical terms, this indicates that the mode $A$ is correlated and entangled with field degrees of freedom in the same region of space as $A$ itself.

Second, even if the support of $A$ is compact, the partner mode is generically not compactly supported---this is in tune with  Reeh-Schlieder's theorem \cite{reehschlieder}. Moreover, for regions far away from the support of any function in $\Gamma_A$, the behavior of the partner is entirely dictated by $J_{BD}(\vec{x}, \vec{x}')$. In other words, the asymptotic form of the partner mode of any mode $A$ is universal and determined solely by the complex structure $J_{BD}(\vec{x}, \vec{x}')$, or equivalently, by the long-distance correlations present in the quantum state.

The following proposition formalizes this statement.

\begin{prop}\label{prop:asymptotics_partner}
Let $A$ be a compactly supported single-mode subsystem, and let $A_p$ be its partner mode. When the field is prepared in the Bunch--Davies vacuum, the partner mode $A_p$ is not compactly supported and exhibits asymptotic decay at large distances $r$ (measured from any point in the support of $A$)  at least as $r^{-2\mu^2}$ as $r \to \infty$. 

More precisely, let $\gamma_{A_p}$ be any vector in $\Gamma_{A_p}^{\mathbb{C}}$---the complex version of  $\Gamma_{A_p}$---with unit Klein--Gordon norm (see footnote~\ref{KG}), so that the pair $(\gamma_{A_p}, \gamma_{A_p}^*)$ forms a symplectically-orthonormal basis of $\Gamma^{\mathbb{C}}_{A_p}$, from which we can obtain a basis in $\Gamma_{A_p}$. If we write $\gamma_{A_p}(\vec{x}) = (g_p(\vec{x}), f_p(\vec{x}))$, then $g_p(\vec{x})$ falls off  {at least} as $r^{-2\mu^2}$ and $f_p(\vec{x})$ falls off as $r^{-1-\mu^2}$,   as $r \to \infty$.
    
For comparison, in the Minkowski vacuum, $g_p(\vec{x}) \sim r^{-2}$ and $f_p(\vec{x}) \sim r^{-4}$, for $m = 0$, and  both components decay as $e^{-m r}$ for $m \neq 0$.
\end{prop}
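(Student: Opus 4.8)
The plan is to compute the partner mode directly from the defining formula \eqref{partner}, $\Gamma_{A_p} = \Pi_A^\perp[J_{BD}\Gamma_A]$, and to extract the large-$r$ tail of $J_{BD}\gamma_A(\vec x) = \int d^3x'\, J_{BD}(\vec x,\vec x')\,\gamma_A(\vec x')$ for $\gamma_A$ of compact support. The point is that subtracting the $\Gamma_A$-component via $\Pi_A^\perp$ only modifies $J_{BD}\gamma_A$ on the (compact) support of $\Gamma_A$ and by a finite linear combination of the fixed basis vectors of $\Gamma_A$; hence for $r$ outside the support of $A$ the tail of any vector in $\Gamma_{A_p}$ equals the tail of $J_{BD}\gamma_A$ up to the $(\det J_A - 1)^{-1/2}$ normalization, which is $H$-dependent but $r$-independent. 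So the whole problem reduces to the asymptotics of the convolution kernel $J_{BD}(\vec x,\vec x')$ acting on a compactly supported source.

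First I would read off from \eqref{JdS} the position-space form of the four blocks of $J_{BD}(\vec x,\vec x')$. Each block is the Fourier transform of a radial power $k^{s}$ (specifically $k^{1}$, $k^{-1+\mu^2}$, $k^{-1}$, $k^{-3+2\mu^2}$), so each is, up to constants and the $(RH)$-prefactors $K$ and $K^2$, a homogeneous distribution $\sim |\vec x-\vec x'|^{-3-s}$ in $\mathbb{R}^3$ (using the standard identity $\widehat{|\vec k|^{s}} \propto |\vec x|^{-3-s}$ for the relevant range of $s$, with the usual care about the non-integer/negative cases). Thus the $f_p$-component of $J_{BD}\gamma_A$ is governed by the $(2,1)$ and $(1,1)$ entries, whose kernels fall off as $|\vec x-\vec x'|^{-4}$ and $|\vec x-\vec x'|^{-2+\mu^2}$; convolving with a compactly supported source and expanding $|\vec x-\vec x'|^{-\alpha} = r^{-\alpha}(1+O(|\vec x'|/r))$ gives a leading tail $r^{-2+\mu^2}$ — wait, this needs the multipole/cancellation bookkeeping: the off-diagonal $-1/k$ piece in the $(1,2)$ slot and the $k^{-3+2\mu^2}$ correction feed $g_p$, while the $f_p$ tail is set by the $k$-block (giving $r^{-4}$) plus the $k^{-1+\mu^2}$ block (giving $r^{-2+\mu^2}$) — so I expect $f_p \sim r^{-1-\mu^2}$ to actually come from the $(1,2)$ entry $-1/k - K^2(RH)^{2-2\mu^2}k^{-3+2\mu^2}$ acting on the $f$-part of $\gamma_A$, whose Fourier transform is $\sim r^{-2}$ and $\sim r^{-2\mu^2}$ respectively; the stated $r^{-1-\mu^2}$ for $f_p$ and $r^{-2\mu^2}$ for $g_p$ then follow by matching which block multiplies which component of $\gamma_A$ and taking the slowest-decaying surviving term. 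The careful step is exactly this index bookkeeping — tracking through \eqref{partner} which Sobolev-order kernel acts on the $g$- versus the $f$-entry of $\gamma_A$, and checking that the naively-dominant integer-power tails ($r^{-2}$, $r^{-4}$) either cancel against the $\Pi_A^\perp$ subtraction or are genuinely subleading to the fractional powers $r^{-2\mu^2}$, $r^{-1-\mu^2}$ once $\mu^2>0$. I would lean on the asymptotic-harmonic-analysis estimates already invoked for \eqref{almostscinv} and Appendix~\ref{app:asymptotic_sobolev}, since the relevant Sobolev products $\mathrm{Re}(f,f')_{-3/2+\mu^2}\sim |\Delta\vec x|^{-2\mu^2}$ are precisely the moments that control these tails.

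For the Minkowski comparison I would repeat the computation with $J_M$ from \eqref{JM}: the blocks are $\mathrm{diag}$-off kernels $-1/k$ and $k$, whose Fourier transforms are $\sim r^{-2}$ and $\sim r^{-4}$, immediately giving $g_p\sim r^{-2}$, $f_p\sim r^{-4}$ for $m=0$; for $m\neq 0$ the kernels acquire the massive dispersion $\sqrt{k^2+m^2}$, whose position-space transform decays as $e^{-mr}$, yielding the exponential fall-off. The main obstacle is not any single estimate but keeping the $H$-dependent prefactors ($K$, $K^2$, and the normalization $(\det J_A-1)^{-1/2}$) from contaminating the $r$-dependence: I must argue these are $r$-independent constants (finite and nonzero because $A$ is compactly supported and hence $\det J_A>1$ by Reeh--Schlieder, as noted after \eqref{eq:S}), so that the \emph{rate} of decay is universal even though the coefficient is not. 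A secondary subtlety is that \eqref{JdS} holds only to leading order in $\mu^2$ and $RH\gg1$; I would note that the $\mathcal{O}(\mu^2)$ corrections and subleading $RH$ terms change the coefficient but not the exponent, which is why the statement is phrased as ``at least as $r^{-2\mu^2}$'' — the inequality direction is exactly what guards against an accidental cancellation of the fractional-power term for special $\gamma_A$, linking back to the ``special family'' \eqref{zeroaverage} where \eqref{almostscinv} fails and the tail is faster.
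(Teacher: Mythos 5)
Your strategy is essentially the paper's: reduce everything to the large-$r$ tail of $J_{BD}\gamma_A$ for a compactly supported source, and observe that $\Pi_A^\perp$ only subtracts a finite combination of the compactly supported basis vectors of $\Gamma_A$, so it cannot affect the asymptotics. Where you differ is technique: the paper stays in Fourier space, writing each component as a radial oscillatory integral $\frac{1}{r}\int dk\,k\sin(kr)(\cdots)$ and extracting the tail by repeated integration by parts plus a non-stationary-phase/step-function treatment of the delicate $\int dk\,k^{-1+2\mu^2}\tilde F(k)\,\mathrm{sinc}(kr)$ term (which produces the $\tilde F(0)\,\mu^{-2}(R/r)^{2\mu^2}$ coefficient explicitly); you instead invoke the homogeneous-distribution identity $\widehat{|\vec k|^{s}}\propto|\vec x|^{-3-s}$ and a multipole expansion of the convolution. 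For $s=-3+2\mu^2$ with $\mu^2>0$ this is legitimate and arguably cleaner, and it makes transparent both why the fractional power $r^{-2\mu^2}$ dominates the integer tails and why the coefficient blows up as $\mu^2\to 0$. The paper additionally handles non-spherically-symmetric $\gamma_A$ via a spherical-harmonic expansion and explicitly excludes the special family \eqref{zeroaverage} (for which $\tilde f_A(0)=0$ and the leading coefficient vanishes, deferring to Appendix~\ref{app:proofs_special_functions}); you gesture at the latter but should state it as a hypothesis rather than a safeguard built into the phrase ``at least as.''

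The one concrete defect is the component bookkeeping, which you flag but then get wrong in both attempts. Reading \eqref{JdS} (or \eqref{eq:Jgamma_dS_approx_radial}): the \emph{first} component of $J_{BD}\gamma_A$ --- which becomes $g_p$ --- is $K(RH)^{1-\mu^2}k^{-1+\mu^2}\tilde g_A-\bigl(k^{-1}+K^2(RH)^{2-2\mu^2}k^{-3+2\mu^2}\bigr)\tilde f_A$, so the $(1,2)$ entry with its $k^{-3+2\mu^2}$ piece feeds $g_p$, not $f_p$, and yields the $r^{-2\mu^2}$ tail (dominating the $r^{-2}$ from $k^{-1}$ and the $r^{-2-\mu^2}$ from $k^{-1+\mu^2}\tilde g_A$). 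The \emph{second} component --- $f_p$ --- is $k\,\tilde g_A-K(RH)^{1-\mu^2}k^{-1+\mu^2}\tilde f_A$, whose slowest term comes from the $k^{-1+\mu^2}$ kernel acting on $\tilde f_A$. Your statement that ``the $f_p$-component is governed by the $(2,1)$ and $(1,1)$ entries'' and your later attribution of the $f_p$ tail to the $(1,2)$ entry are both transpositions of this; since the entire content of the proposition is which fractional power lands in which slot, this needs to be fixed rather than left as something to ``check.'' Once corrected, no cancellations against $\Pi_A^\perp$ are needed anywhere --- the integer-power tails are simply subleading to the fractional ones for $0<\mu^2\ll 1$ --- and the rest of your argument (including the $r$-independence of $K$, $K^2$, and the $(|\det J_A|-1)^{-1/2}$ normalization) goes through as in the paper.
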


\begin{proof}
The proof rests on standard asymptotic techniques, as follows.

Let $\gamma_{A}(\vec{x})$ be an element in $\Gamma^{\mathbb{C}}_A$ with unit Klein-Gordon norm, so that the pair $(\gamma_A, \gamma_A^*)$ forms an orthonormal basis in $\Gamma_A^{\mathbb{C}}$. { (In this proof we use a complex basis for subsystem $A$ merely for convenience---it makes the proof more compact, since the subsystem is characterized by a single complex vector $\gamma_A$ rather than two real ones. It is obvious that the arguments apply equally well to a real basis—which can be obtained by separating the complex basis vector $\gamma_A$ into its real and imaginary parts.)}

Each $\gamma_A(\vec{x})\in \Gamma^{\mathbb{C}}_{\sigma_{\rm BD}}$ consists of a pair of complex-valued functions, which we assume to be compactly supported. Let us begin by considering spherically symmetric functions, i.e., functions that depend only on the radial distance \( r := |\vec{x}| \) from the center of the support of $A$.

Let $\gamma_A\in\Gamma_A^{\mathbb{C}}$ be of the form $\gamma_A = (g_A(\vec x), f_A(\vec x))$. Using Eq.~\eqref{JdS}, we find
\begin{widetext}
    \bea\label{eq:Jgamma_dS_approx_radial}
 (J_{BD} \gamma_A)(r) &=& \frac{1}{2\pi^2 r } \int_0^{\infty} dk\, k\, \sin\left( k\,r\right) \begin{pmatrix} K\left(\frac{{ R}H}{k}\right)^{1- \mu ^2} \tilde{g}_A (k) - \left(\frac{1}{k} + K^2 (RH)^{2-2\mu^2}k^{-3+ 2\mu ^2} \right) \tilde f_A(k) \\ k \,\tilde{g}_A(k) - K \left(\frac{{ R}H}{k}\right)^{1-\mu^2} \tilde{f}_{A} (k)   \end{pmatrix}\nonumber \\  & +& { \mathcal{O}(\mu^2)} \,,
\eea
\end{widetext}
where $K = \frac{2^{1-\mu^2} \sqrt{\pi} R^{-1+\mu^2}}{\cos(\pi \mu^2)\Gamma\left(-\frac{1}{2} + \mu^2\right)}$.

To study the behavior of the first two terms in each component as $r/R \gg 1$, we express $\sin(x)$ using complex exponentials and apply the identity
\[
e^{\pm i k r} = \mp i r^{-1} \partial_k e^{\pm i kr}
\]
followed by repeated integration by parts. This procedure ultimately yields integrals of the form
\begin{equation}\label{eq:oscillatory_integrals}
I(\omega)= \int_0^b x^{-\alpha} F(x) e^{i \omega G(x)}\, dx
\end{equation}
with $F(x)$ and $G(x)$ functions with $G'(x) \neq 0$, and $\alpha \in (0,1)$. The asymptotic behavior of  integrals of this form is well-known (see, e.g.,~\cite{JCM-39-2}), and given by $I(\omega) \sim \mathcal{O}(\omega^{-(1 - \alpha)})$.

It remains to analyze the last term in the first component of Eq.~\eqref{eq:Jgamma_dS_approx_radial}, which is of the form
\begin{equation}\label{eq:problematic_dS}
\int_0^{\infty} dk\, k^{-1+2\mu^2} \tilde{F}(k)\, \mathrm{sinc}(k\,r)\,.
\end{equation}

To proceed, we perform the change of variables ${  R}k = t^{1/(2\mu^2)}$, yielding \[\frac{1}{2\mu^2} \int_0^{\infty} dt\, \mathrm{sinc}(t^{1/(2\mu^2)} \frac{r}{{  R}}) \tilde{F}(t^{1/(2\mu^2)})\,.\]

The sinc function is highly oscillatory for large arguments, making the method of the non-stationary phase appropriate to determine the asymptotic behavior of the integral in the limit \( r \gg 1 \). In this regime, the sinc function can be approximated by a step function, leading to  
\[
\begin{split}
    \frac{1}{2\mu^2}&\int_0^{\infty} \! dt\, \mathrm{sinc}\!\left(t^{1/(2\mu^2)} r/{ R}\right) \tilde{F}\!\left(t^{1/(2\mu^2)}\right)\\
    &\sim \tilde{F}(0) \int_0^{(\pi{R}/2r)^{2\mu^2}} \! dt + \mathcal{O}(r^{-1-2\mu^2}) \\
    &= \frac{\tilde{F}(0)}{2\mu^2} \left(\frac{\pi{ R}}{2r}\right)^{2\mu^2} + \mathcal{O}(r^{-1-2\mu^2})\,,
\end{split}
\]
where we have used that the upper limit of the integral, \( (\pi{ R} / 2r)^{2\mu^2} \), tends to zero for \( r/{  R} \gg 1 \), so $\tilde F$ can be approximated by the constant $\tilde F(0)$ in the interval of integration.

Putting all these results together, we find that the upper and lower components of $J\gamma_A$ behave asymptotically as:
\bea\label{eq:Jgamma_asymptotics}
(J_{BD}\gamma_A)_1 \!&=&\! - \frac{(HR)^{2-2\mu^2}}{2\pi^2} \frac{\tilde{\gamma}^{(2)}(0)}{2\mu^2}\! \left(\frac{\pi{  R}}{2r}\right)^{2\mu^2} \nonumber \\
&& + \mathcal{O}(r^{-1 - \mu^2}),\\ 
(J_{BD}\gamma_A)_2 \!&=&\! \mathcal{O}(r^{-2 - \mu^2})\,.
\eea
To obtain a basis for the partner mode, one needs to act with the symplectic projector $\Pi_A^{\perp}$ on $(J_{BD}\gamma_A)$. The action of $\Pi_A^{\perp}$ reduces to the identity outside the region of support of $A$. Since $A$ is compactly supported, the asymptotic behavior given in Eq.~\eqref{eq:Jgamma_asymptotics} remains unaffected by the projection.

A similar analysis applies to phase space elements $\gamma_A$ that are not spherically symmetric. In that case, one first expands $\gamma_A$ in spherical harmonics. Following~\cite{fokas2012fouriertransformschebyshevlegendre}, the angular integrals introduce spherical Bessel functions. Using their standard properties (e.g., oscillatory behavior, asymptotics, and location of zeros), one can verify that the leading-order decay in $r$ remains the same as in Eq.~\eqref{eq:Jgamma_asymptotics}.

Finally, this proof remains valid even if the mode $\Gamma_A$ is not compactly supported but made of functions  that decay faster than any polynomial. In particular, the Proposition holds if the functions in $\Gamma_A$ are in Schwartz space. 
\end{proof}

{ As in previous sections, our proof is restricted to modes outside the special family defined in Eq.~\eqref{zeroaverage}. The analysis of this special family of modes is presented in Appendix~\ref{app:proofs_special_functions}.}

\begin{exmp}\label{ex:singlemodeball_cont2}{\bf Single-mode subsystem supported in a ball (cont.)}

This is a continuation of Examples~\ref{ex:singlemodeball} and \ref{ex:singlemodeball_cont}.
The single mode subsystem $A$ is defined by $\Gamma_A={\rm span}[{\bm \gamma}_A^{(1)},{\bm \gamma}_A^{(2)}]$, with
\begin{equation}\label{eq:ex3vNS}
    {\bm \gamma}^{(1)} = \left(\begin{matrix} 0\\ f^{(\delta)} (\vec{x}) \end{matrix}\right)\,, \quad {\bm \gamma}^{(2)} = \left(\begin{matrix} - g^{(\delta)} (\vec{x})\\ 0 \end{matrix}\right)\,,
\end{equation}
where the real functions $g^{(\delta)} (\vec{x})$ and $f^{(\delta)} (\vec{x})$ have compact support within a ball of radius $R$ (their functional form is given in \eqref{eq:fdelta_family}. 

Alternatively, we combine ${\bm \gamma}^{(1)}$ and ${\bm \gamma}^{(2)}$ in a complex mode ${\bm \gamma_A} = \frac{1}{\sqrt{2}}\, \left({\bm \gamma}_A^{(1)}+i\, {\bm \gamma}_A^{(2)}\right)$. 

The vector 
\be {\bm \gamma}_{Ap}\equiv\frac{\Pi_A^\perp[J_{BD}\gamma_A]}{\sqrt{|\det J_A|-1}},\ee
together with its complex conjugate, form a normalized basis of 
$\Gamma^{\mathbb{C}}_{Ap}$. (Their real and imaginary part form a basis in $\Gamma_{Ap}$.)

Let us write ${\bm \gamma}_{Ap}$ as
\be {\bm \gamma}_{Ap}=\left(\begin{matrix} g_p(\vec{x})\\ f_p(\vec{x}) \end{matrix}\right)\, .\ee We compute complex functions $g_p$ and $f_p$  analytically  and plot them in Fig.~\ref{fig:partnerexampleball}. For comparison, the figure shows the shape of the partner mode both in the Bunch--Davies and in the Minkowski vacuum (the latter corresponding to a massless field). The behavior of the partner mode at large $r$ differs significantly in the two cases: in the Bunch--Davies vacuum, some components of the partner basis vectors are nearly scale-invariant, whereas in the Minkowski vacuum they decay at least as $r^{-2}$.

\begin{figure*}
    \centering
      \includegraphics[width=\textwidth]{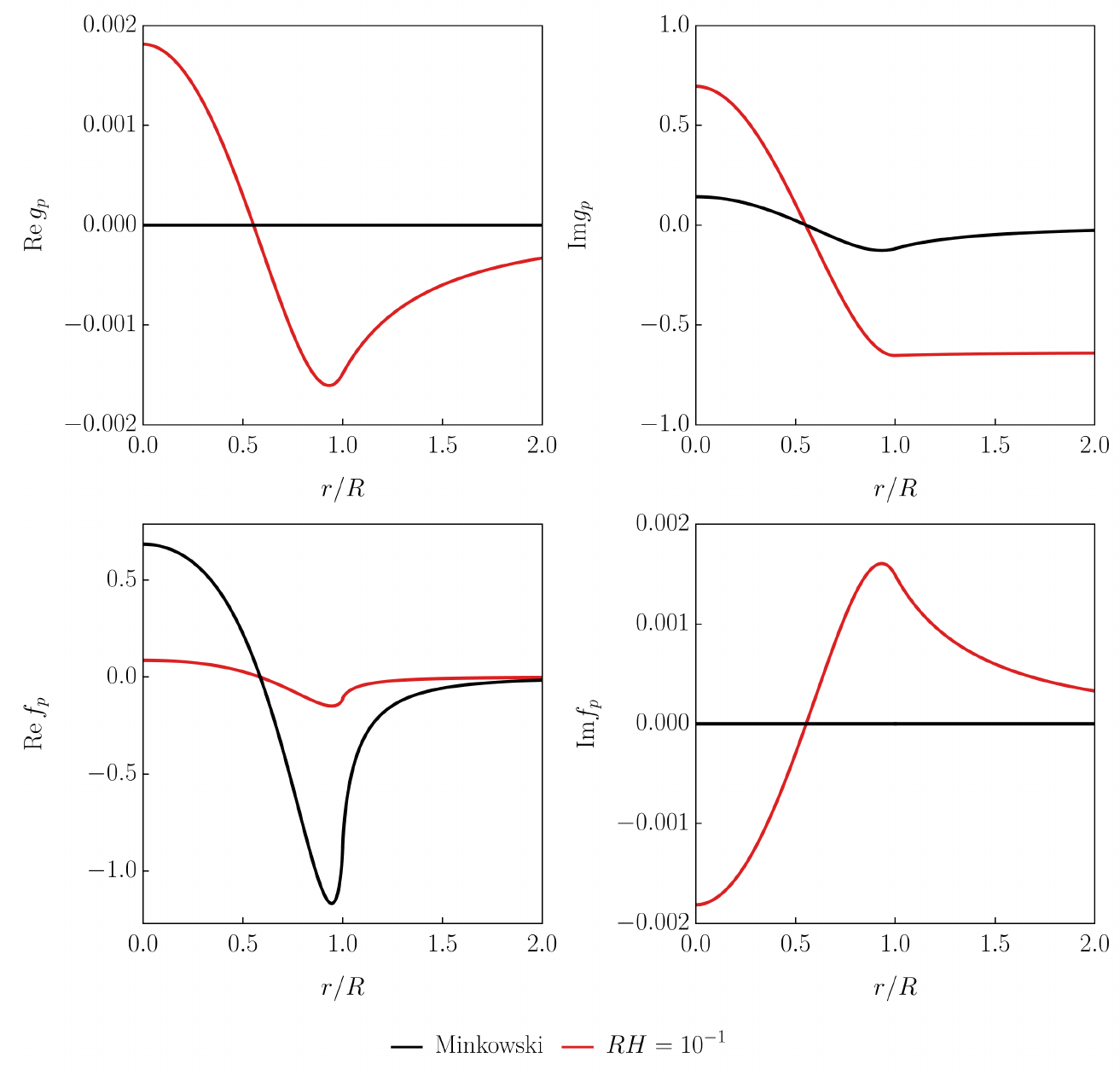}
    \caption{{  Real and imaginary parts of the two components of the complex partner mode, $\bm \gamma_{Ap} = (g_p, f_p)$, introduced in Example~\ref{ex:singlemodeball_cont2}, as a function of the radial coordinate, $r/R$. Results are presented for a massless scalar field in Minkowski spacetime (black line) and for a field in de Sitter with $\mu^2 = 10^{-4}$ supported in a spherical region of radius $RH = 10^{-1}$. In de Sitter space, the imaginary part of $g_p$ becomes nearly scale-invariant, whereas in Minkowski space it decays as $r^{-2}$. In Minkowski, the real part of $g_p$ and the imaginary part of $f_p$ are zero, but they are non-zero in de Sitter due to the non-vanishing field–momentum correlations in the Bunch-Davies vacuum}}   \label{fig:partnerexampleball}
\end{figure*}

\end{exmp}

\subsection{Where is the entanglement?\label{subsec:where_is_entanglement}}

Proposition~\ref{prop:asymptotics_partner} reveals that the distribution of correlations and entanglement in the Bunch--Davies vacuum differs substantially from that in Minkowski spacetime. In particular, correlations and entanglement extend over much larger distances in the Bunch--Davies vacuum, owing to the nearly scale-invariant support of the partner mode.

This carries an important message: a nonzero Hubble rate $H$, even if small, drastically alters the spatial distribution of entanglement in the vacuum.

The consequences of this are significant. Although a large value of $H$ makes any local mode strongly entangled with its partner mode, the latter is not accessible to local observers. The resulting long-range correlations manifest instead as large entropies for compactly supported modes. Physically, this entropy is perceived as local ``thermal noise,'' which in turn precludes local, compactly supported modes to get entangled with one another.

\section{Discussion\label{sec:discussion}}

The generation of entanglement during inflation, and more generally, entanglement in Friedmann–Lemaître–Robertson–Walker (FLRW) spacetimes, has been mainly studied by focusing on Fourier modes~{\cite{Grishchuk:1990bj,Albrecht:1992kf,Lesgourgues:1996jc,Kiefer:1998pb,Kiefer:1998qe,Kiefer:2008ku,Polarski:1995jg,Martin:2015qta}}, i.e., single subsystems with well-defined wavenumber $\vec{k}$—see~\cite{Martin:2021xml,Martin:2021qkg,K:2023oon,PhysRevD.78.044023,PhysRevD.80.124031} for notable exceptions. It has been further argued that inflation squeezes pairs $(\vec{k}, -\vec{k})$, and that this squeezing is responsible for the apparent classicality of the primordial perturbations. Several limitations of this strategy have been pointed out {\cite{Hsiang:2021kgh,Agullo:2022ttg,ack}}. In particular, in FLRW spacetime, due to the time-dependence of the spacetime metric, there is a large ambiguity in what a “single subsystem with well-defined wavenumber $\vec{k}$” means, which goes hand-in-hand with the ambiguity in the definition of particles. Namely, if $\hat{a}^{\dagger}_{\vec{k}}$ and $\hat{a}_{\vec{k}}$ are the creation and annihilation variables characterizing a $\vec{k}$ mode, then so are $\hat{a}'^{\dagger}_{\vec{k}} = \alpha \hat{a}^{\dagger}_{\vec{k}} + \beta \hat{a}_{-\vec{k}}$ and its adjoint, provided $|\alpha|^2 - |\beta|^2 = 1$. This ambiguity is large enough to arbitrarily change any conclusion regarding squeezing in the $(\vec{k}, -\vec{k})$ pairs. Moreover, the impact of any squeezing on the quantum nature—or lack thereof—of cosmological perturbations has  been  criticized in~\cite{Hsiang:2021kgh,Agullo:2022ttg}.)

Even if one devises a clever argument to select a preferred set of Fourier modes, it is important to keep in mind that such modes are supported in the entire universe.
This inherently global character implies that many aspects of these modes are not accessible to local observers. Locally, it is possible to estimate certain features of Fourier modes, such as their amplitude,  via a local Fourier transform. However, accessing the entanglement stored in pairs of such modes would require access to their full spatial support---an impossible endeavor.

This motivates the strategy followed in this article which, building upon previous works~\cite{Martin:2021qkg,K:2023oon,Martin:2021xml,ubiquitous}, focuses on compactly supported modes. In quantum field theory, \emph{all} compactly supported modes have \emph{mixed} reduced states, regardless of the field’s state \cite{Hollands:2017dov}. This is due to the unavoidable presence of correlations between any region and its complement in any physical (i.e., Hadamard) state in quantum field theory. Importantly, the mixedness of subsystems effectively acts as a decohering agent that reduces the entanglement with other  subsystems. Hence, the study of local modes is considerably richer.

In this article, we have followed a local approach to characterize the distribution of entanglement in the cosmological patch of de Sitter space and how it varies with the value of the de Sitter curvature. Our approach is formulated in slightly more geometric and invariant terms than in previous works, as most of our results are derived from the structure and properties of an inner product on the classical phase space defined by the Bunch–Davies vacuum. Indeed, one of our goals is to promote these tools, as we believe they strike a useful balance between conceptual and mathematical clarity, and the ease with which they can be applied to compute quantities of direct physical interest.

The main findings of this work can be summarized as follows. First, we highlight the fact that introducing a cosmological constant, even a tiny one, drastically changes the structure of correlations and entanglement in the vacuum state. Mathematically, this occurs because the covariance metric (or the associated complex structure) defined from the Bunch–Davies vacuum acquires a term proportional to a Sobolev product of index $-\frac{3}{2} + \mu^2$, with ${\mu}^2 \equiv \frac{3}{2} - \sqrt{\frac{9}{4} - \frac{m^2}{H^2}} \ll 1$, where $m$ denotes the mass of the field. This term introduces almost scale-invariant correlations that dominate at large separations—along with an infrared divergence in the massless limit.

It is well-known that the presence of a cosmological constant, even if tiny, produces qualitative changes in several aspects of physics, such as a change in the asymptotic structure of spacetime~\cite{Hawking:1973uf}—resulting in drastic modifications to the definition of gravitational radiation~\cite{abk1,Ashtekar:2015ooa}—making it impossible to smoothly connect with the $H = 0$ case. Analogously, $H \neq 0$ qualitatively alters the structure of correlations and entanglement in the quantum vacuum.

To study these changes, we have followed an approach in which subsystems are defined in an invariant manner, and quantifying  correlations and entanglement in a similarly invariant way. 

One of the main messages of this analysis is that increasing the de Sitter curvature increases correlations between local modes but \emph{decreases} their entanglement. This may seem counterintuitive at first, highlighting the subtle relationship between correlations and entanglement, which are commonly thought to always go hand-in-hand.

In the Bunch--Davies vacuum, correlations and entanglement with any locally supported modes are distributed over arbitrarily large distances. This contrasts with the Minkowski vacuum, where correlations fall off polynomially for a massless field and exponentially for a massive one.  Furthermore, increasing the de Sitter curvature enhances the total amount of correlations and entanglement any local mode has with its complement—the rest of the field modes. But these correlations and entanglements are spread across the entire universe in an almost scale-invariant manner. We have made this  aspects precise using the concept of \emph{partner mode}, discussed in Section~\ref{sec:partners}. 

Stronger entanglement, on the other hand, makes the von Neumann entropy of any locally supported subsystem to grow monotonically with the Hubble rate. This large entropy comes at the expense of local entanglement, which therefore decreases when the de Sitter curvature increases.

Heuristically, the reduction of local entanglement can be attributed to a kind of entanglement monogamy. Local modes are more entangled with their partners, but the partner modes are not accessible to cosmological observers. The strong entanglement with the partner mode prevents the local mode from becoming substantially entangled with other local modes. The link between entanglement monogamy and the local thermal properties of the Bunch Davies vacuum has been pointed out in \cite{Nambu_2023} based on the analysis of a rather special family of non-compactly supported field modes (see the Remark on page \pageref{remark_IR} for the definition of the family to which these modes belong to).

The results of this article have direct implications for cosmology and the long-standing debate on how much entanglement is generated by inflation. Our view is that a satisfactory answer to this question must focus on local, accessible modes—in contrast to Fourier modes—and on invariant aspects of the primordial perturbations that do not depend on  choices or conventions. When doing so, we find that the cosmological perturbations that later become accessible to observers are, at the end of inflation, \emph{less} entangled than they would have been had the field been in the Minkowski vacuum. Furthermore, any entanglement present decreases monotonically with the value of the Hubble rate during inflation. 

 Our analysis relies on the Gaussian nature of the Bunch–Davies vacuum, which is widely regarded as an excellent approximation to the quantum state of cosmological perturbations. It is therefore natural to ask how our conclusions might be affected by the presence of non-Gaussian features. In general, non-Gaussian states can exhibit entanglement properties that differ qualitatively from those of Gaussian states, and in principle could lead to an enhancement, or even more degradement, of entanglement between local modes. However, observational constraints on primordial non-Gaussianity from the cosmic microwave background and large-scale structure indicate that any such effects must be very small~\cite{Planck:2019kim}. We therefore expect that non-Gaussian corrections compatible with current observational bounds would not qualitatively modify the results presented here. A detailed quantitative analysis of genuinely non-Gaussian states, however, lies beyond the scope of the present work.

We do not have access to the freshly generated perturbations at the end of inflation—they must evolve through the complexities of the cosmos, a process that will likely decohere them and further reduce their entanglement. This complicated process has been modeled and discussed in many references{~\cite{Halliwell:1989vw,Calzetta_1995,Polarski:1995jg,Martineau:2006ki,Burgess_2008,Kiefer:2008ku,Nelson_2016,Martin_2018,Hsiang:2021kgh,Colas:2022kfu,Bhattacharyya:2024duw}}. The statements in this article are complementary to these analyses, and point that the correlations produced by inflation do \emph{not} come with a corresponding enhancement of entanglement, if one only has access to a finite portion of the universe. 

More generally, this article provides an example of how curvature and symmetries shape the entanglement content of field theory, and how this content can be characterized using tools from symplectic and Kähler geometry, along with the concept of partner modes.

\acknowledgments
The content of this paper has benefited from discussions with A. Ashtekar, E.Bianchi, A. Delhom, J. Martin, E. Martin-Martinez, S. Nadal,  W. van Suijlekom, V. Vennin and K. Yamaguchi. P.R.-M. thanks the Perimeter Institute for Theoretical Physics for their support as well as LSU for hosting her for a  visit during which part of this work was done.  P.R.-M. also thanks the Julian Schwinger Foundation for financial support at the 2025 Peyresq Spacetime Meeting, where useful discussions influenced this work. P.R.-M. is supported by the 1851 Research Fellowship.  This research has been partially supported by the Blaumann Foundation. I.A. is supported by the NSF grants PHY-2409402 and PHY-2110273, by the RCS program of Louisiana
Boards of Regents through the grant LEQSF(2023-25)-RD-A-
04,  by the Hearne Institute for Theoretical Physics and by Perimeter Institute of Theoretical Physics through the Visitor fellow program. Research at Perimeter Institute is supported in part by the Government of Canada through the Department of Innovation, Science and Industry Canada
and by the Province of Ontario through the Ministry of
Colleges and Universities.
This article was finished while B.B. was visiting the Okinawa Institute of Science and Technology (OIST) through the Theoretical Sciences Visiting Program (TSVP).

\appendix
\section{Useful definitions and propositions}
\label{app:sobolev}
This appendix provides a summary of useful definitions and results from functional analysis, matrix analysis, and Gaussian quantum information that are used throughout the article.

 \begin{defn} (\textit{Definition 1.18 in Ref.~\cite{bahouri_fourier_2011}})
    The Schwartz space $\mathcal{S}(\mathbb{R}^n)$ is the vector space of all complex-valued, infinitely differentiable functions that, together with all of their derivatives, decay faster than any inverse polynomial as $|x|\to\infty$. Formally, 
     \begin{equation}\begin{split}
         \mathcal{S}(\mathbb{R}^n) = \{&f \in C^{\infty}(\mathbb{R}^n)\,|\, \forall \alpha, \beta \in \mathbb{N}^{n}, \\&\,\mathrm{sup}_{x\in \mathbb{R}^n}\, |x^{\alpha} (D^{\beta} f) (\vec{x})| < \infty\}\,, 
     \end{split}
     \end{equation}
      where $\mathrm{sup}$ denotes the supremum, $\alpha,\beta\in \mathbb{N}^n$ are multi-indices,   $x^\alpha = x_1^{\alpha_1}\cdots x_n^{\alpha_n}$, and  $D^{\beta} = \partial_1^{\beta_1} \cdots\partial_n^{\beta_n}$. 
\end{defn}

The Schwartz space is commonly used to define the classical phase-space of fields in Minkowski spacetime because of the following properties:
\begin{enumerate}
    \item If  $f \in \mathcal{S}(\mathrm{\mathbb{R}^n})$, then its Fourier transform  $\tilde{f}$ exists and belongs to the Schwartz space: $\tilde{f}\in \mathcal{S}(\mathbb{R}^n)$. Thus, the Fourier transform is a bijection on $\mathcal{S}(\mathbb{R}^n)$. 
    \item The space of smooth functions of compact support $C^\infty_0(\mathbb{R}^n)$ is a subspace of the Schwartz space, $\mathcal{S}(\mathrm{\mathbb{R}^n}).$
\end{enumerate}

\begin{defn} (\textit{Definition 1.20 in \cite{bahouri_fourier_2011}}) A \textbf{tempered distribution} on $\mathbb{R}^n$ is a continuous linear functional on $\mathcal{S}(\mathbb{R}^n)$.\footnote{Continuity is understood with respect to the standard locally convex topology on $\mathcal{S}(\mathbb{R}^n)$, induced by the family of Schwartz seminorms (see, e.g. Ref.~\cite{treves_topological_1967}).} The space of tempered distributions is denoted by $\mathcal{S}' (\mathbb{R}^n)$.
\end{defn}

\begin{defn} The \textbf{space of locally integrable functions}, denoted as $ L^1_{\mathrm{loc}}(\mathbb{R}^n)$, consists of functions that are integrable over every compact subset of $\mathbb{R}^n$. That is, 
\begin{equation*}
    L^1_{\mathrm{loc}} (\mathbb{R}^n)\! =\! \{ f:\mathbb{R}^n\! \to\! \mathbb{C} |f\in L^1(K)~\forall~\text{compact}~K\}.
\end{equation*}

\end{defn}

\begin{defn}
     (\textit{Definition 1.31 in~\cite{bahouri_fourier_2011}}\label{def:hom_sobolev_spaces}) 
Let $s\in \mathbb{R}$. The \textbf{homogeneous Sobolev space }$\dot{H}^s(\mathbb{R}^D)$ is the space of tempered distributions $u$ over $\mathbb{R}^D$, the Fourier transform of which belongs to $L^1_{\mathrm{loc}}(\mathbb{R}^D)$ and satisfies
\begin{equation}\label{eq:sobolev_norm}
    || u||^2_{\dot{H}^s} := \int_{\mathbb{R}^D} |k|^{2s} |\tilde{u}(k)|^2 \, dk < \infty,
\end{equation}
where, as before, $\tilde{u}$ denotes the Fourier transform of $u$. In other words,
\begin{equation}
    \dot{H}^s(\mathbb{R}^D) := \left\{ u \in \mathcal{S}'\!  :\! \tilde{u} \in L^{1}_{\mathrm{loc}}, \text{ and } ||u||^2_{\dot{H}^{s}} \!<\! \infty  \right\}.
\end{equation}
\end{defn}

{Note that the Schwartz space $\mathcal{S}(\mathbb{R}^n)$ is canonically embedded in the space of tempered distributions $\mathcal{S}'(\mathbb{R}^n)$ via $f \mapsto T_f$, where $T_f(\phi) = \int dx f(x) \phi(x)$. This embedding explains why the Cauchy completion of the classical phase space $\Gamma$ (chosen as  $\mathcal{S}(\mathbb{R}^n)$) with respect to the seminorm $\sigma$ yields a homogeneous Sobolev space $\dot H_s(\mathbb{R}^n)$, whose elements are tempered distributions.  }

\begin{prop} (\textit{Prop. 1.1 in~\cite{bahouri_fourier_2011}}) \textbf{H\"older's inequality:} Let $(X,\mu)$ be a measure space and $(p,q,r)$ in $[1,\infty]^3$ be such that 
$$ \frac{1}{p} + \frac{1}{q} = \frac{1}{r}\, .$$

If $(f,g)$ belongs to $L^p(X,\mu) \times L^q(X,\mu)$, then $fg$ belongs to $L^r(X,\mu)$ and 
$$||fg||_{L^r} \leq ||f||_{L^p} ||g||_{L^q}\,.$$
\end{prop}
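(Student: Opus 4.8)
The plan is to reduce to the case $r=1$ and then invoke Young's inequality for products, whose proof is a one-line consequence of the convexity of the exponential. First I would dispose of the degenerate situations: if $\|f\|_{L^p}=0$ or $\|g\|_{L^q}=0$ then $fg=0$ $\mu$-almost everywhere and the claim is trivial, while if either norm is $+\infty$ the right-hand side is $+\infty$ and there is nothing to prove. When $r=\infty$, the constraint $\tfrac1p+\tfrac1q=0$ with $p,q\in[1,\infty]$ forces $p=q=\infty$, and $|f(x)g(x)|\le\|f\|_{L^\infty}\|g\|_{L^\infty}$ holds for $\mu$-a.e.\ $x$, which gives the result directly. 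Hence I may assume $1\le r<\infty$ and that both norms are finite and nonzero.

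Next I would treat the core case $r=1$, so that $p,q\in[1,\infty]$ satisfy $\tfrac1p+\tfrac1q=1$; the sub-cases $p=1,q=\infty$ and $p=\infty,q=1$ are again pointwise-trivial, so take $p,q\in(1,\infty)$. By homogeneity, normalize $\|f\|_{L^p}=\|g\|_{L^q}=1$. The key pointwise estimate is Young's inequality: for $a,b\ge 0$,
\[
ab\le\frac{a^p}{p}+\frac{b^q}{q},
\]
which follows (for $a,b>0$, the case $a=0$ or $b=0$ being obvious) from writing $ab=\exp\!\bigl(\tfrac1p\ln a^p+\tfrac1q\ln b^q\bigr)$ and applying convexity of $\exp$ together with $\tfrac1p+\tfrac1q=1$. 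Applying this pointwise with $a=|f(x)|$, $b=|g(x)|$ and integrating against $\mu$ yields
\[
\int_X|fg|\,d\mu\le\frac1p\int_X|f|^p\,d\mu+\frac1q\int_X|g|^q\,d\mu=\frac1p+\frac1q=1,
\]
so in particular $fg\in L^1(X,\mu)$ and, undoing the normalization, $\|fg\|_{L^1}\le\|f\|_{L^p}\|g\|_{L^q}$.

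Finally I would bootstrap to a general exponent $r\in[1,\infty)$ by a scaling argument. From $\tfrac1p+\tfrac1q=\tfrac1r$ and $\tfrac1q\ge0$ one gets $p\ge r$ and likewise $q\ge r$, so $P:=p/r$ and $Q:=q/r$ lie in $[1,\infty]$ and satisfy $\tfrac1P+\tfrac1Q=r\bigl(\tfrac1p+\tfrac1q\bigr)=1$, i.e.\ $(P,Q)$ are conjugate. Moreover $|f|^r\in L^P$ with $\| \, |f|^r \, \|_{L^P}=\|f\|_{L^p}^r$, and similarly for $g$. Applying the $r=1$ case to the pair $(|f|^r,|g|^r)$ gives $\| \, |fg|^r \, \|_{L^1}\le\|f\|_{L^p}^r\|g\|_{L^q}^r$; taking $r$-th roots yields $fg\in L^r$ and the stated bound. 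There is no genuine obstacle here: the only points requiring care are the $\infty$ endpoints and the degenerate norms, which I have isolated at the outset, and everything else rests on the single elementary fact that $\exp$ is convex.
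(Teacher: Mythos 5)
Your proof is correct and complete: the reduction of the general exponent $r$ to the conjugate-exponent case $r=1$ via $P=p/r$, $Q=q/r$, combined with Young's inequality $ab\le a^p/p+b^q/q$ from convexity of the exponential, is the standard argument, and you have handled the endpoint and degenerate cases carefully. Note that the paper itself offers no proof of this statement --- it is quoted verbatim as Proposition~1.1 of the cited reference --- so there is nothing to compare against; your argument is essentially the one found in that reference and in any standard text.
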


\begin{prop} (\textit{Proposition 1.32 in~\cite{bahouri_fourier_2011}}\label{prop:inclusions})
Let $s_0 \leq s \leq s_1$. Then, $\dot{H}^{s_0} \cap \dot{H}^{s_1}$ is included in $\dot{H}^s$,  and we have $$||u||_{\dot{H}^s} \leq ||u ||^{1-\theta}_{\dot{H}^{s_0}}||u ||^{\theta}_{\dot{H}^{s_1}} \text{ with }  s= (1-\theta)s_0 + \theta s_1\,. $$ 
Using the Fourier–Plancherel formula, one finds $L^2=\dot{H}^0$.
\end{prop}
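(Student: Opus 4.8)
The plan is to reduce the whole statement to a single application of H\"older's inequality in Fourier space, using that by Definition~\ref{def:hom_sobolev_spaces} the norm $\|u\|_{\dot H^s}$ is a weighted $L^2$ norm of $\tilde u$ whose weight exponent depends affinely on $s$.

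First I would dispose of the endpoints: if $\theta=0$ (so $s=s_0$) or $\theta=1$ (so $s=s_1$) the claimed inclusion and inequality are trivial, so assume $\theta\in(0,1)$. Let $u\in\dot H^{s_0}\cap\dot H^{s_1}$. Membership in either space already forces $\tilde u\in L^1_{\mathrm{loc}}(\mathbb{R}^D)$, so in order to conclude $u\in\dot H^s$ it only remains to check that $\|u\|_{\dot H^s}$ is finite, which will fall out of the inequality itself.

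The key algebraic observation is the pointwise factorization, valid for a.e.\ $k$,
\[
|k|^{2s}\,|\tilde u(k)|^2 \;=\; \bigl(|k|^{2s_0}\,|\tilde u(k)|^2\bigr)^{1-\theta}\;\bigl(|k|^{2s_1}\,|\tilde u(k)|^2\bigr)^{\theta},
\]
which uses exactly $s=(1-\theta)s_0+\theta s_1$ together with $(1-\theta)+\theta=1$. I would then apply H\"older's inequality with conjugate exponents $p=1/(1-\theta)$ and $q=1/\theta$ (Prop.~1.1 of~\cite{bahouri_fourier_2011}, recalled above, with $r=1$) to the two nonnegative measurable functions $k\mapsto(|k|^{2s_0}|\tilde u(k)|^2)^{1-\theta}$ and $k\mapsto(|k|^{2s_1}|\tilde u(k)|^2)^{\theta}$: the first lies in $L^p$ with $L^p$-norm $\|u\|_{\dot H^{s_0}}^{2(1-\theta)}$, the second in $L^q$ with $L^q$-norm $\|u\|_{\dot H^{s_1}}^{2\theta}$, both finite by hypothesis. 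This yields
\[
\|u\|_{\dot H^s}^2 \;=\; \int_{\mathbb{R}^D}|k|^{2s}\,|\tilde u(k)|^2\,dk \;\le\; \|u\|_{\dot H^{s_0}}^{2(1-\theta)}\,\|u\|_{\dot H^{s_1}}^{2\theta},
\]
and taking square roots gives the stated bound; in particular the left-hand side is finite, so $u\in\dot H^s$, which proves the inclusion. The identity $L^2=\dot H^0$ is then the case $s=0$: for $u\in L^2$, Plancherel gives $\tilde u\in L^2\subset L^1_{\mathrm{loc}}$ and $\|u\|_{\dot H^0}^2=\int|\tilde u(k)|^2\,dk=\|u\|_{L^2}^2$ up to the normalization constant fixed by the chosen Fourier convention, so the two spaces coincide with equivalent norms.

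I do not expect a genuine obstacle here; the only points needing a moment's care are that the integrands are nonnegative and measurable (so H\"older applies with no integrability provisos) and that the $L^1_{\mathrm{loc}}$ condition defining $\dot H^s$ is inherited automatically from the intersection $\dot H^{s_0}\cap\dot H^{s_1}$ --- both are routine once the definitions are unpacked.
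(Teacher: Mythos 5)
Your proof is correct and is exactly the standard interpolation argument (pointwise factorization of the weight plus H\"older with exponents $1/(1-\theta)$ and $1/\theta$) used in the cited source; the paper itself does not reprove this proposition but simply quotes it from Ref.~\cite{bahouri_fourier_2011}, whose Proposition~1.32 is established in the same way. No gaps.
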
 

\begin{prop} (\textit{Proposition 1.34 in~\cite{bahouri_fourier_2011}}\label{prop:hilbert_space})
$\dot{H}^s(\mathbb{R}^D)$ is a Hilbert space if and only if $s< D/2$, with inner product

\begin{equation} \label{eq:innerprod}
    (f,g)_{s} = \int_{\mathbb{R}^D} \frac{d^Dk}{(2\pi)^D} \, |\vec k|^{2s} \tilde{f}(\vec k) \, \tilde{g}^*(\vec k)\, ,
\end{equation}
where the asterisk denotes complex conjugation.
\end{prop}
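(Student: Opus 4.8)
The plan is to reduce the statement to a completeness question and then settle it by transporting the problem to a weighted Lebesgue space via the Fourier transform. First I would check that the bilinear form $(f,g)_s$ in Eq.~\eqref{eq:innerprod} is a genuine inner product on $\dot H^s(\mathbb{R}^D)$: sesquilinearity is immediate, finiteness follows from the Cauchy--Schwarz inequality applied to $\tilde f,\tilde g$ in the weighted space $L^2_s := L^2(\mathbb{R}^D,|k|^{2s}\,d^Dk)$, and positive-definiteness holds because $\|u\|_{\dot H^s}=0$ forces $\tilde u=0$ almost everywhere, and since $\tilde u\in L^1_{\mathrm{loc}}$ this means $u=0$ in $\mathcal S'(\mathbb{R}^D)$. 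Because this inner product induces the norm \eqref{eq:sobolev_norm} (with the $(2\pi)^{-D}$ fixed by the paper's Fourier convention), $\dot H^s(\mathbb{R}^D)$ is always a pre-Hilbert space, so the claim collapses to: $\dot H^s(\mathbb{R}^D)$ is complete if and only if $s<D/2$. I would then note that the Fourier transform is an isometry from $\dot H^s(\mathbb{R}^D)$, with norm \eqref{eq:sobolev_norm}, onto
\[
\mathcal V_s := \{\, v\in L^2_s \ : \ v\in L^1_{\mathrm{loc}}(\mathbb{R}^D)\,\},
\]
the surjectivity using that any $v\in\mathcal V_s$ is automatically a tempered distribution: away from the origin $|k|^{2s}$ is bounded below, so Cauchy--Schwarz gives $\int |v|(1+|k|)^{-m}\,d^Dk<\infty$ for $m$ large, while near the origin one uses $v\in L^1_{\mathrm{loc}}$. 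Since $L^2_s$ is a Hilbert space, $\dot H^s(\mathbb{R}^D)$ is complete precisely when $\mathcal V_s$ is a \emph{closed} subspace of $L^2_s$.

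For $s<D/2$ I would show $\mathcal V_s=L^2_s$, which makes it trivially closed. Indeed, any $v\in L^2_s$ is locally integrable: away from the origin this is immediate, and near it
\[
\int_{|k|<1}|v|\,d^Dk \ \le\ \Big(\int_{|k|<1}|k|^{-2s}\,d^Dk\Big)^{1/2}\Big(\int_{|k|<1}|k|^{2s}|v|^2\,d^Dk\Big)^{1/2},
\]
with $\int_{|k|<1}|k|^{-2s}\,d^Dk<\infty$ exactly when $2s<D$. Hence $\dot H^s(\mathbb{R}^D)$ is complete, hence a Hilbert space, and restricting to the dense subspace of Schwartz functions and applying Parseval yields the inner product \eqref{eq:innerprod} by continuity. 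For $s\ge D/2$ I would exhibit a gap by constructing $v\in L^2_s\setminus L^1_{\mathrm{loc}}$, for instance $v(k)=|k|^{-D/2-s}\,|\log|k||^{-1}\,\mathbf 1_{\{|k|<1/2\}}$, for which (in polar coordinates) $\int |k|^{2s}|v|^2\,d^Dk$ is a constant times $\int_0^{1/2} r^{-1}(\log(1/r))^{-2}\,dr<\infty$ while $\int |v|\,d^Dk$ is a constant times $\int_0^{1/2} r^{D/2-s-1}(\log(1/r))^{-1}\,dr=\infty$. Then the truncations $u_n:=\mathcal F^{-1}\!\big(v\,\mathbf 1_{\{|k|>1/n\}}\big)$ lie in $\dot H^s(\mathbb{R}^D)$ and form a Cauchy sequence, since $\|u_n-u_m\|_{\dot H^s}^2=\int_{1/n<|k|<1/m}|k|^{2s}|v|^2\,d^Dk\to0$; but a limit $u\in\dot H^s(\mathbb{R}^D)$ would have $\tilde u=v$ a.e.\ together with $\tilde u\in L^1_{\mathrm{loc}}$, contradicting $v\notin L^1_{\mathrm{loc}}$. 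Hence $\mathcal V_s$ is not closed, $\dot H^s(\mathbb{R}^D)$ is not complete, and therefore not a Hilbert space.

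The main obstacle is the ``only if'' direction: one must recognize that the defining constraint $\tilde u\in L^1_{\mathrm{loc}}$ in Definition~\ref{def:hom_sobolev_spaces} is \emph{not} stable under limits in the $\dot H^s$-norm once $s\ge D/2$, and then produce a weighted-$L^2$ function that fails to be locally integrable at the origin. The threshold case $s=D/2$ is the delicate one, since there a pure power is either in both or neither space and the obstruction only appears through the logarithmic refinement used above; this is also the only place where the borderline $s<D/2$ versus $s\ge D/2$ dichotomy genuinely bites.
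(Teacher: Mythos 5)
Your proof is correct. Note that the paper does not prove this statement at all---it is quoted as Proposition~1.34 of the cited reference (Bahouri--Chemin--Danchin) in an appendix of background material---and your argument is essentially the standard one found there: reduce to closedness of the image $\mathcal V_s$ under the Fourier isometry inside the weighted space $L^2(|k|^{2s}\,d^Dk)$, show $\mathcal V_s$ is all of $L^2_s$ when $s<D/2$ via Cauchy--Schwarz near the origin, and for $s\ge D/2$ break completeness with a logarithmically refined power-law profile whose truncations form a Cauchy sequence with no limit satisfying the $L^1_{\mathrm{loc}}$ constraint. Two cosmetic remarks: the $(2\pi)^{-D}$ mismatch between Eq.~\eqref{eq:sobolev_norm} and Eq.~\eqref{eq:innerprod} is an inconsistency of the paper's own normalization, not of your argument; and the closing appeal to density of Schwartz functions is unnecessary, since the inner product formula is defined directly on all of $\dot H^s$ and recovers the norm by polarization.
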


\begin{prop}(\textit{Chapter 7 in~\cite{serafini2017quantum}}) \label{prop:separability} \textbf{Gaussian separability lemma}:  Let $\sigma$ be the covariance matrix for a two-mode state of the form 
\be 
{\bm \sigma}_{AB} = \left( \begin{matrix}
    {\bm\sigma}_A & \bm C \\ 
    \bm C^T & \bm\sigma_B
 \end{matrix}\right) \,,
\ee 
where $\sigma_A$ and $\sigma_B$ are the local covariance matrices of the single-modes subsystems $A$ and $B$, and $C$ represents their correlations. Two-mode Gaussian states with $\mathrm{Det}\, C \geq 0$ are separable.   
\end{prop}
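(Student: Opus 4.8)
The plan is to invoke the positivity of the partial transpose (PPT) criterion for separability, which is both necessary and sufficient for two-mode Gaussian states, and then to show that the hypothesis $\det C \geq 0$ forces the partial-transposed covariance matrix to satisfy the required positivity (uncertainty) inequality, so that the state is physical and hence separable. First I would recall that a two-mode Gaussian state $\hat\rho_{AB}$ with covariance matrix $\bm\sigma_{AB}$ is separable if and only if $\tilde{\bm\sigma}_{AB} := \Lambda \bm\sigma_{AB}\Lambda \geq i\bm\Omega$, where $\Lambda = \mathrm{diag}(1,1,1,-1)$ implements the partial transposition (a mirror reflection of $B$'s momentum) and $\bm\Omega$ is the symplectic form; equivalently, in terms of the quantity $\tilde\nu_-$ defined in Eq.~\eqref{eq:symplecticeigvalsall}, the state is separable iff $\tilde\nu_- \geq 1$, i.e.\ iff $\mathrm{LN}(\hat\rho_{AB}) = 0$ by Eq.~\eqref{LN}.

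Next I would express $\tilde\nu_-^2$ through the local symplectic invariants. From Eq.~\eqref{eq:symplecticeigvalsall}, $\tilde\nu_-^2 = \tfrac{1}{2}\bigl(\tilde\Delta - \sqrt{\tilde\Delta^2 - 4\det\bm\sigma_{AB}}\bigr)$ with $\tilde\Delta = \det\bm\sigma_A + \det\bm\sigma_B - 2\det C$. The key step is then the observation that replacing $\det C$ by $-\det C$ (which is what the partial transpose does, since $\Lambda$ flips the sign of one off-diagonal block entry and $\det(\Lambda C'\Lambda') = \det C' \cdot (\pm 1)$ in the $2\times 2$ case—concretely the partial transpose sends $\det C \mapsto -\det C$ while leaving $\det\bm\sigma_A$, $\det\bm\sigma_B$, $\det\bm\sigma_{AB}$ unchanged) can only \emph{increase} the relevant symplectic eigenvalue when $\det C \geq 0$. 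More precisely: the physical (genuine, untransposed) state must satisfy the bona-fide uncertainty relation $\nu_- \geq 1$, which in invariant form reads $\Delta - \sqrt{\Delta^2 - 4\det\bm\sigma_{AB}} \geq 2$ with $\Delta = \det\bm\sigma_A + \det\bm\sigma_B + 2\det C$. When $\det C \geq 0$ we have $\tilde\Delta \leq \Delta$, and since the map $x \mapsto \tfrac{1}{2}(x - \sqrt{x^2 - 4D})$ is monotonically decreasing in $x$ for $x \geq 2\sqrt{D}$ (with $D = \det\bm\sigma_{AB}$ fixed), it would appear that $\tilde\nu_-^2 \geq \nu_-^2 \geq 1$—but one must be slightly careful, because decreasing $\tilde\Delta$ too far could push it below $2\sqrt{D}$ and make the square root imaginary. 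The clean way around this is to argue directly that $\tilde{\bm\sigma}_{AB} \geq i\bm\Omega$: since $\bm\sigma_{AB} \geq i\bm\Omega$ (physicality) and, for $\det C \geq 0$, one can write $\tilde{\bm\sigma}_{AB} = \bm\sigma_{AB} + (\text{correction})$ whose effect on the smallest symplectic eigenvalue is non-negative; alternatively, and most transparently, note that $\det C \ge 0$ means $C$ can be brought by local symplectic (orthogonal) transformations to a form $\mathrm{diag}(c_+, c_-)$ with $c_+ c_- \geq 0$, and then the partial transpose merely exchanges the roles of $c_+$ and $-c_-$; checking the two symplectic eigenvalues of the transposed matrix explicitly and comparing with the untransposed ones shows $\{\tilde\nu_+,\tilde\nu_-\}$ is a permutation-stable rearrangement that keeps the smaller one $\geq 1$.

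The main obstacle I anticipate is precisely this last monotonicity/physicality bookkeeping: one has to ensure that the inequality $\tilde\nu_- \geq 1$ follows without accidentally landing in the regime where the discriminant $\tilde\Delta^2 - 4\det\bm\sigma_{AB}$ becomes negative (which would signal that the transposed matrix fails to be a valid covariance matrix in a way unrelated to entanglement). The cleanest resolution is to work at the level of the matrix inequality $\tilde{\bm\sigma}_{AB} \geq i \bm\Omega$ rather than with the scalar invariants, using Simon's normal form (Appendix~\ref{app:simon_form}) to reduce $\bm\sigma_{AB}$ to standard form $\mathrm{diag}(a,a,b,b)$ for the diagonal blocks and $\mathrm{diag}(c_+,c_-)$ for $C$, writing down the two $2\times 2$ conditions that encode $\tilde{\bm\sigma}_{AB}\geq i\bm\Omega$, and verifying that they are implied by $\bm\sigma_{AB}\geq i\bm\Omega$ together with $c_+c_-\geq 0$. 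This is a short finite computation once the normal form is in place, and it is essentially the argument given in Ch.~7 of Ref.~\cite{serafini2017quantum}, which I would cite for the details while presenting the structure above.
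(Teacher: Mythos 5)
The paper offers no proof of this lemma; it is quoted verbatim from Ch.~7 of Ref.~\cite{serafini2017quantum}, so there is nothing internal to compare against. Your proposal reconstructs the standard argument correctly: Simon's theorem makes PPT necessary and sufficient for $1\times 1$-mode Gaussian states, partial transposition acts on the local symplectic invariants only through $\det\bm C\mapsto-\det\bm C$, and for $\det\bm C\geq 0$ this gives $\tilde\Delta\leq\Delta$, whence $\tilde\nu_-\geq\nu_-\geq 1$ by monotonicity of $x\mapsto\tfrac12\bigl(x-\sqrt{x^2-4\det\bm\sigma_{AB}}\bigr)$. The one obstacle you flag — that $\tilde\Delta$ might drop below $2\sqrt{\det\bm\sigma_{AB}}$ and render the discriminant negative — does not actually arise: $\tilde{\bm\sigma}_{AB}=\Lambda\bm\sigma_{AB}\Lambda$ is still a real positive-definite matrix, so its symplectic eigenvalues $\tilde\nu_\pm$ are automatically real and positive, which forces $\tilde\Delta=\tilde\nu_+^2+\tilde\nu_-^2\geq 2\sqrt{\det\bm\sigma_{AB}}$. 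With that observation the scalar-invariant version of your argument closes on its own, and the retreat to the matrix inequality $\tilde{\bm\sigma}_{AB}\geq i\bm\Omega$ in Simon normal form, while perfectly valid, is not needed.
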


\begin{prop} (\textit{Thm. 7.8.5 in Ref.~\cite{Horn_Johnson_2012}}) \textbf{Fischer's inequality:}  Suppose that the partitioned Hermitian matrix 
\begin{equation}
    {\bm D} = \left(\begin{matrix}
        {\bm A} & {\bm C} \\ 
        {\bm C}^* & {\bm B}
    \end{matrix}\right)
\end{equation}
is a positive-definite $(p+q)\times (p+q)$ matrix, ${\bm A}$ is a $p\times p$ matrix, and ${\bm B}$ a $q\times q$ matrix. Then 
\begin{equation}
    \det {\bm D} \leq (\det {\bm A})(\det {\bm B}).
\end{equation}
    
\end{prop}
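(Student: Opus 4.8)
The plan is to give the standard Schur-complement proof of this classical (Hadamard--)Fischer inequality. Since ${\bm D}$ is positive definite, every principal submatrix is positive definite; in particular ${\bm A}$ is positive definite, hence invertible. First I would record the block-triangular factorization obtained from block Gaussian elimination,
\begin{equation}
{\bm D} =
\begin{pmatrix} \mathbb{I} & 0 \\ {\bm C}^{*} {\bm A}^{-1} & \mathbb{I} \end{pmatrix}
\begin{pmatrix} {\bm A} & 0 \\ 0 & {\bm S} \end{pmatrix}
\begin{pmatrix} \mathbb{I} & {\bm A}^{-1} {\bm C} \\ 0 & \mathbb{I} \end{pmatrix}\,,
\qquad {\bm S} := {\bm B} - {\bm C}^{*} {\bm A}^{-1} {\bm C}\,,
\end{equation}
where ${\bm S}$ is the Schur complement of ${\bm A}$ in ${\bm D}$. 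Taking determinants and using that the two triangular factors have unit determinant gives $\det {\bm D} = \det {\bm A}\,\det {\bm S}$.

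Next I would observe that the middle (block-diagonal) matrix is a congruence transform of ${\bm D}$ by the rightmost invertible triangular factor above, hence it is positive definite; therefore both ${\bm A}$ and ${\bm S}$ are positive definite. Because ${\bm A} \succ 0$ implies ${\bm A}^{-1} \succ 0$, the matrix ${\bm C}^{*}{\bm A}^{-1}{\bm C}$ is positive semidefinite, so in the Loewner order ${\bm S} = {\bm B} - {\bm C}^{*}{\bm A}^{-1}{\bm C} \preceq {\bm B}$, with equality precisely when ${\bm C} = 0$.

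The last ingredient is monotonicity of the determinant under the Loewner order for positive matrices: if $0 \prec {\bm X} \preceq {\bm Y}$ then $\det {\bm X} \leq \det {\bm Y}$. I would prove this in one line by writing $\det {\bm X}/\det {\bm Y} = \det({\bm Y}^{-1/2}{\bm X}{\bm Y}^{-1/2})$ and noting that ${\bm Y}^{-1/2}{\bm X}{\bm Y}^{-1/2} \preceq \mathbb{I}$, so all its eigenvalues lie in $(0,1]$ and its determinant is at most $1$. Applying this with ${\bm X} = {\bm S}$, ${\bm Y} = {\bm B}$ gives $\det {\bm S} \leq \det {\bm B}$, whence $\det {\bm D} = \det {\bm A}\,\det {\bm S} \leq \det {\bm A}\,\det {\bm B}$, with equality if and only if ${\bm C} = 0$.

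There is no genuine obstacle here---every step is textbook linear algebra. The only points I would be careful to state explicitly rather than wave through are (i) that positive definiteness is inherited by the Schur complement ${\bm S}$ via the congruence identity above, and (ii) the determinant-monotonicity lemma; both are elementary but easy to take for granted. As an alternative I would mention that Fischer's inequality also follows by induction on the number of diagonal blocks, with Hadamard's inequality $\det {\bm D} \leq \prod_i d_{ii}$ as the all-$1\times1$-blocks base case, but the Schur-complement route is the most direct.
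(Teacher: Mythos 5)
Your proof is correct and complete: the block factorization, the inheritance of positive definiteness by the Schur complement via congruence, the Loewner-order comparison ${\bm S} \preceq {\bm B}$, and the determinant-monotonicity lemma are all stated and justified properly, and together they give $\det {\bm D} = \det{\bm A}\,\det{\bm S} \leq \det{\bm A}\,\det{\bm B}$ with the correct equality case ${\bm C}=0$. The paper itself offers no proof --- it records Fischer's inequality in an appendix of auxiliary results and simply cites Theorem 7.8.5 of Horn and Johnson --- so there is nothing to compare against; your Schur-complement argument is the standard textbook derivation and fully suffices.
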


\section{Proof of Prop.~\ref{prop:BDcorrelations}\label{app:proofprop1}}
In this appendix, we show that the functionals $\mathcal{N}_{\Phi\Phi}$, $\mathcal{N}_{\Pi\Pi}$, and $\mathcal{N}_{\Phi\Pi}$ in Eqs.~(\ref{phiphi})-(\ref{phipi}) are bounded above by quantities of order $\mu^2\ll 1$. This ensures that they provide subleading corrections {to the terms in Eqs.~\eqref{phiphi}-\eqref{phipi}} in the small-$\mu^2$ regime. We also discuss their asymptotic behavior in the limits $RH \gg 1$ and $RH \ll 1$. 

The   functionals $\mathcal{N}_{\Phi\Phi}$, $\mathcal{N}_{\Pi\Pi}$, and $\mathcal{N}_{\Phi\Pi}$ in Eqs.~(\ref{phiphi})-(\ref{phipi}) are defined given by
\begin{widetext}
\begin{align}\label{eq:Nphiphi}
\mathcal{N}_{\Phi\Phi} :=\; &(RH)^{-1} \int \frac{d^3k}{(2\pi)^3}  \left[ 
  \frac{\pi}{2} \left| H^{(1)}_{\frac{3}{2} - \mu^2} \left( \frac{k}{H} \right) \right|^2 
  - \frac{2^{2 - 2\mu^2} \pi \left( \frac{k}{H} \right)^{-3 + 2\mu^2}}{ 
      \cos^2(\pi \mu^2)\, \Gamma\left( \mu^2 - \tfrac{1}{2} \right)^2 } 
  - \left( \frac{k}{H} \right)^{-1} 
\right] \nonumber \\
&\times R \,\mathrm{Re}\left( \tilde{f}_I(\vec{k})\, \bar{\tilde{f}}_J(\vec{k}) \right)\,,
\end{align}

\begin{equation}\label{eq:Npipi}
        \mathcal{N}_{\Pi\Pi} := RH \int \frac{d^3k}{(2\pi)^3}\left[\frac{\pi}{2}   \left(\frac{k}{H}\right)^2 \left|H^{(1)}_{\frac{1}{2}-\mu ^2}\left(\frac{k}{ H}\right)\right|^2-\frac{k}{H}\right] R^{-1}\mathrm{Re}(\tilde{g}_I(\vec{k})\bar{\tilde{g}}_J(\vec{k}))\,,
    \end{equation}
    and 
\begin{align} \label{eq:Nphipi}
        \mathcal{N}_{\Phi\Pi}:=\; & -\int \frac{d^3k}{(2\pi)^3}\left[\frac{\pi}{2}\frac{k }{ H}\mathrm{Re}\left(H_{\frac{3}{2}-\mu ^2}^{(1)}\left(\frac{k}{ H}\right) H_{\frac{1}{2}-\mu ^2}^{(2)}\left(\frac{k}{ H}\right)\right)+\frac{\sqrt{\pi } 2^{1-\mu ^2} \left(\frac{k}{ H}\right)^{-1+\mu ^2}}{\cos \left(\pi  \mu ^2\right) \Gamma \left(\mu ^2-\frac{1}{2}\right)}\right] \nonumber \\  & \times  \mathrm{Re}(\tilde{f}_I(\vec{k})\bar{\tilde{g}}_J(\vec{k}))\,. 
    \end{align}
\end{widetext}
Here,  $H^{(1)}_{\alpha} (\vec x)$, $H^{(2)}_{\alpha} (\vec x)$ denote  Hankel functions of order $\alpha$ of the first and second kinds, respectively. 

\subsection{Upper bounds \label{subapp:upperbounds_Nall}}
\subsubsection{Upper bound on $\left|\mathcal{N}_{\Phi\Phi}\right|$}
Using the triangle inequality and the integral representation of the Hankel's function modulus square~\cite{freitas_boundsHankel},
\begin{equation*}\begin{split}
x^{2\alpha}|H_{\alpha}^{(1)}(x)|^2 &= \frac{8}{\pi^2} \lim_{\epsilon \to 0^+} \int^x_\epsilon dy\, y^{2\alpha -1}  \\ 
& \times\int_0^{\infty} dt K_0(2y \sinh t)\cosh t\, h_{\alpha}(t),
\end{split} \end{equation*}
 valid for real positive arguments and $1/2 <\nu < 3/2$, one can bound $|H^{(1)}_{3/2-\mu^2}|$  by bounding $h_{\alpha}(t)$.  Here,  $K_0$ is the modified Bessel function of the second kind  and $h_{\alpha}(t)$  is a positive and  monotonically decreasing function.\footnote{The explicit form of the function $h_{\alpha} (t)$  can be found in { Eq.~(4.1) in}~\cite{freitas_boundsHankel}.}    Carrying out this procedure yields 
\begin{equation} \begin{split}
     \left|\mathcal{N}_{\Phi\Phi}\right| \leq & \mu^2 \Bigg(||\mathfrak{F}_{IJ}||^2_{-1/2} + \frac{2}{\pi} H ||\mathfrak{F}_{IJ}||_{-1}^2\} \Bigg),
\end{split}
\end{equation} 
where $||\cdot||_{s}^2$ indicates the norm defined from the Sobolev inner product of order $s$ and $\mathfrak{F}_{IJ}$ is defined as the inverse Fourier transform of the function $\sqrt{|\tilde{f}_I||\tilde{f}_J|}$. 

\subsubsection{Upper bound on $\left|\mathcal{N}_{\Phi\Pi}\right|$}
Using the identity~\cite{freitas_boundsHankel} 
\begin{equation}
    J_{\alpha}J_{\alpha-1} + Y_{\alpha}Y_{\alpha-1}\! =\! \frac{1}{2x^{2\alpha}} \frac{d}{dx} \left( x^{2\alpha}|H^{(1)}_{\alpha}(x)|^2\right),
\end{equation}
valid for $x>0$ and $\alpha \in \mathbb{R}$, and following a similar strategy as the one outlined above, we obtain 
\begin{equation}\begin{split}
    |\mathcal{N}_{\Phi\Pi}|& \leq \mu^2 \bigg(H||\mathfrak{K}_{IJ}||^{2}_{-1/2}+ H^2 ||\mathfrak{K}_{IJ}||^{2}_{-1} \\
    &+ H^{1/2} ||\mathfrak{K}_{IJ}||^{2}_{-1/4}\bigg)\,, 
\end{split}
\end{equation}
where $\mathfrak{K}_{IJ}$ is defined as the inverse Fourier transform of the function $\sqrt{|\tilde{f}_I| |\tilde g_J|}$. 

\subsubsection{Upper bound on $|\mathcal{N}_{\Pi\Pi}|$}
Applying the triangle inequality and the bound
$$\left|\frac{q}{RH} - \frac{\pi}{2}\left(\frac{q}{RH} \right)^2 \left| H^{(1)}_{1/2-\mu^2}\left(\frac{q}{RH} \right) \right|^2   \right| < \frac{\mu^2}{2}\,, $$
we  find
\begin{equation}
     |\mathcal{N}_{\Pi\Pi}| \leq H \frac{\mu^2}{2} ||\mathfrak{G}_{IJ}||_0^2,  
\end{equation} where $\mathfrak{G}_{IJ}$ is defined as the inverse Fourier transform of $\sqrt{|\tilde g_I||\tilde g_J|}$.

Hence, the functionals $\mathcal{N}_{\Phi\Phi}$, $\mathcal{N}_{\Phi\Pi}$, and $\mathcal{N}_{\Pi\Pi}$ are $\mathcal{O}(\mu^2)$ for smooth, square-integrable smearing functions. While the bounds above are sufficient for the statements made in the main text,  they are not sharp in $RH$. In the next subsection, the behavior of these functionals is analyzed as a function of $RH$ in the regimes $RH \gg 1$ and $RH \ll 1$. 

\subsection{Asymptotic behavior for large and small  $RH$\label{subapp:asymptoticRH}}

For $RH\gg 1$, assume that the Fourier transforms $\tilde{f}$ and $\tilde{g}$ decay sufficiently fast---e.g. exponentially---so that they can be effectively approximated by functions compactly supported in a ball in Fourier space $\mathcal{B}(\vec 0,\Lambda)$, for some finite $\Lambda> 0$. When ${R}H \gg 1$,  the integrals in  Eqs.~\eqref{eq:Nphiphi}-\eqref{eq:Nphipi} only probe the Hankel functions at small arguments $k/H \ll1$ in $\mathcal{B}$.\footnote{The argument can be extended to functions $f$ and $g$ belonging to the appropriate Sobolev spaces. In this case, one examines the integrand separately in a ball $\mathcal{B}(\vec 0, \Lambda)$ and in its complement, finding that the leading dependence on $RH$ is governed by the contribution from $\mathcal{B} $ for sufficiently large $\Lambda$. } Hence, one can use the small-argument expansion of Bessel functions to find the asymptotic behavior 
\begin{align}
\mathcal{N}_{\Phi\Phi}&\underset{RH\gg1}{\sim}\mathcal{O}\big(\mu^2\log(RH)\big),\nonumber \\ 
\mathcal{N}_{\Pi\Pi}&\underset{RH\gg1}{\sim}\mathcal{O}\big(\mu^2\log(RH)\big),\nonumber\\
\mathcal{N}_{\Phi\Pi}&\underset{RH\gg1}{\sim}\mathcal{O}\big(\mu^2(RH)^{1-\mu^2}\big).
\label{app:RHgg1scaling}
\end{align}
The logarithmic growth in $\mathcal{N}_{\Phi\Phi}$ and $\mathcal{N}_{\Pi\Pi}$ arises from the integration of the small-argument Hankel asymptotics; $\mathcal{N}_{\Phi\Pi}$ carries an extra power of $k/H$ and hence scales as $(RH)^{1-\mu^2}$. The perturbative treatment in the main text is consistent provided $\mu^2 \log(RH) \ll 1$, so the $\mathcal{O}(\mu^2)$ suppression is effective even for large $RH$. 

For $RH\ll 1$, it is convenient to split the $k$-integrals into low- and high-momentum regions. For a fixed $\lambda { H} \gg 1$ such that $\lambda H \ll 1$, 
$$ \mathcal{N}_{\Xi\Xi'} =   \int_{|\vec{k}|\leq \lambda H} [\dots]+\int_{|\vec{k}|> \lambda H}\left[\dots\right]\,, $$
with $\Xi, \Xi' \in \{\Phi, \Pi\} $. A scaling analysis shows that the second term decays more slowly with $RH$ and thus dominates the small-$RH$ behavior of  $\mathcal{N}_{\Xi,\Xi' }$. Evaluating the second contribution, where the Hankel functions can be approximated by their large-argument expansion $|H^{(1)}_{\alpha}(x)| \sim \left(\frac{2}{\pi x}\right)^{1/2} + \mathcal{O}(x^{-3/2})$, we find 
\begin{flalign*}\mathcal{N}_{\Phi\Phi} &\sim \mathcal{O}(\mu^2 (RH)^2)\,, \\
\mathcal{N}_{\Pi\Pi} &\sim \mathcal{O}(\mu^2 (RH)^2)\,, \\
 \mathcal{N}_{\Phi\Pi} &\sim \mathcal{O} (\mu^2 RH)\,,  \end{flalign*}
{in the regime $RH \ll 1$ and $\mu^2 \ll 1$. }
In particular, this implies that all three functionals vanish in the Minkowski limit $RH \to 0$.

\section{Asymptotic behavior of Sobolev products\label{app:asymptotic_sobolev}}

This appendix contains the calculation of the asymptotic behavior of Sobolev inner products used in the main text to quantify how correlations between spatially separated modes fall off with their separation. In particular, we analyze the asymptotic behavior of Sobolev inner products $(h_A,h_B)_s$ for the values of $s$ relevant for this article. The functions $h_A$ and $h_B$ are assumed to be smooth---this will be relaxed later---and compactly supported in disjoint regions $A,B \subset \mathbb{R}^n$. Additionally, we assume that a meaningful notion of distance between $A$ and $B$ can be defined---namely, that the supports of the modes in $A$ and $B$ can be separated continuously without deforming or intersecting them.  The goal is to show that, under these conditions, the decay of the Sobolev products is determined by the order $s$ and the spatial dimension $n$, and a mild dependence on the specific shape of the functions $h_A$ and $h_B$.

We are interested in the asymptotic regime when the distance between the supports of the modes $A$ and $B$ is much larger than their sizes. Then, a notion of distance between $A$ and $B$ can be defined as
\begin{equation}
|\Delta \vec x| = |\vec x_B - \vec x_A|,
\end{equation} 
for any reference points $\vec x_A \in A$ and $\vec x_B \in B$.  In this setting,  the translation property of the Fourier transform allows us to express the Sobolev inner product as
\begin{equation}\label{c2}
(h_A,h_B)_s
 = \int \frac{d^n k}{(2\pi)^n}\,
   \tilde h_A(\vec k)\, \tilde h_B^*(\vec k)\,
   |\vec k|^{2s}\, e^{-i\vec k \cdot \Delta \vec x},
\end{equation}
where $\tilde h_A$ and $\tilde h_B$ denote the Fourier transforms of $h_A$ and $h_B$.  Since both functions are smooth and compactly supported, their Fourier transforms belong to the Schwartz space (see Appendix~\ref{app:sobolev} for the definition). It follows that, for $s > -n/2$, the product $|\vec k|^{2s}\tilde h_A(\vec k)\tilde h_B^*(\vec k)$ is locally integrable in any open neighborhood containing $\vec k=0$, and the  integral \eqref{c2} is infrared-finite.

To extract the asymptotic dependence when $|\Delta \vec x|\to \infty$, it is convenient to use the identity 
$$ e^{-i \vec k \cdot \Delta \vec x} = -i /(\hat n \cdot \Delta \vec x) \,\hat n \cdot\vec \nabla_{\vec k} e^{-i \vec k \cdot \Delta \vec x}, $$ 
where $\hat n := \Delta \vec x/|\Delta \vec x|$.  Integration by parts $2s+n$ times shows that for half-integer $s$, the asymptotic behavior of $(h_A, h_B)_s$ is dominated by a boundary contribution near $\vec k=0$, yielding
$$ (h_A, h_B)_s \sim C_{n,s} \tilde{h}_A(0)\tilde{h}_B^\star(0) |\Delta \vec x|^{-(n + 2s)}, $$
{with $C_{n,s}$ a constant.} For $n=3$, this implies the asymptotic behavior $\mathrm{Re}(h_A, h_B)_{-1/2} \sim (|\Delta \vec x|)^{-2} $ and $\mathrm{Re}(h_A, h_B)_{1/2} \sim (|\Delta \vec x|)^{-4} $ used in the main text.  For general (non-half-integer) values of $s$, one can perform integration by parts $N$ times, where $N$ is the smallest integer satisfying $N> n+2s$. The resulting integral is within the family of ``oscillatory singular'' integrals \cite{JCM-39-2}, whose asymptotic behavior is well-known \cite{JCM-39-2}, leading to the same asymptotic behavior, i.e.,  $ (h_A, h_B)_s \sim (|\Delta \vec x|)^{-(n + 2s)} $.

For $s<-1$ the tools in \cite{JCM-39-2} cannot be applied. In particular, a separate analysis is required for $s = -\tfrac{3}{2} + \mu^2$ with $\mu^2 \ll 1$. In this case, an argument analogous to that in  Proposition~\ref{prop:asymptotics_partner} applies. When $\tilde f_A$ and $\tilde f_B$ are spherically symmetric, one can set $\tilde F(k)=\tilde f_A(k)\tilde f_B^*(k)$ and  $r=|\Delta \vec x|$ in Eq.~\eqref{eq:problematic_dS}, yielding 
\begin{equation}
(h_A, h_B)_{-3/2 + \mu^2}
\sim C_{\mu}\, \tilde h_A(0)\tilde h_B^*(0)\, |\Delta \vec x|^{-2\mu^2},
\end{equation}
where $C_{\mu}$ tends to a non-zero constant in the limit $\mu\ll1$. The decay thus becomes arbitrarily slow as $\mu \to 0$. For non-spherically symmetric cases, the same reasoning applies after expanding $\tilde f_A(\vec k)\tilde f_B^*(\vec k)$ in spherical harmonics and using the expansion of $e^{-i\vec k \cdot \Delta \vec x}$ in spherical waves, together with the asymptotic behavior of the spherical Bessel functions.

Finally, some remarks are in order to clarify the generality of these results. 
\begin{enumerate}
\item We have assumed that $\tilde h_A(0)\tilde h_B^*(0) \neq 0$. If instead $\tilde h_A(0)\tilde h_B^*(0)=0$, one may expand the product near $|\vec k|=0$ as $\tilde h_A(\vec k)\tilde h_B^*(\vec k)\sim |\vec k|^m$, with $m>0$ the first non-vanishing power. The leading asymptotic behavior then becomes
\begin{equation}
(h_A,h_B)_s \sim (|\Delta \vec x|)^{-(n+2s+m)}.
\end{equation}
Thus, correlation functions decay faster in this case.  This arises if $h_A$ or $h_B$ belong to the family of ``special functions'' defined in Appendix~\ref{app:proofs_special_functions}. 

\item If the shape of the two modes is such that distance between them cannot be defined without deforming them-----e.g. in the shell configuration in Example~\ref{ex:ballshellcorrelations}---no universal asymptotic behavior exists, and one needs to analyze the configuration case by case.  In the ball--shell configuration in Example~\ref{ex:ballshellcorrelations} (see Fig.~\ref{fig:geometric_ball-shell} for a graphic representation of the geometry), we find
$$\mathrm{Re}(f_A,f_B)_{-1/2}\!\sim\! (\Delta   x)^{-1},$$
$$\mathrm{Re}(f_A,f_B)_{1/2}\!\sim\! (\Delta   x)^{-2},$$
and 
$$\mathrm{Re}(f_A,f_B)_{-3/2+\mu^2}\!\sim\! (\Delta   x)^{-2\mu^2}.$$

\item The assumption of smoothness of the functions $h_A$ and $h_B$ is made only for clarity. The argument extends to compactly supported functions belonging to appropriate homogeneous Sobolev spaces. In particular, it suffices that $f_A,f_B \in \dot H_{-1/2}(\mathbb{R}^3)$ and $g_A,g_B \in \dot H_{1/2}(\mathbb{R}^3)$, ensuring sufficient decay of their Fourier transforms when $|\vec k|\to \infty $ for all required integrations by parts.
\end{enumerate}

\section{Proofs for the family of special functions\label{app:proofs_special_functions}}

This appendix provides additional details on the class of functions referred to as ``special'' in the main text,   and explains how the proofs of the main propositions extend to this family.

This appendix is organized as follows. Section~\ref{subapp:def_special_functions} reviews the definition of special functions and identifies a subclass that requires separate treatment. Section~\ref{subapp:vNE} completes the proof of Prop.~\ref{prop:vNgeneral} by showing that it remains valid when the relevant modes belong to the special family. Section \ref{subapp:entanglement} addresses the corresponding subtleties for Prop.~\ref{prop:LN_general}.
Finally, Sec.~\ref{subapp:partners} discusses the asymptotic behavior of the partner of single-mode subsystems  associated with special functions.

\subsection{Definition and classification of special functions\label{subapp:def_special_functions}}

The set of special functions was introduced in the Remark on page \pageref{remark_IR} as functions with vanishing spatial average.
They are called ``special'' because they do not probe the infrared divergence of the Bunch–Davies vacuum in the massless limit and field-field correlations  built from them do not show the distinctive almost-scale invariant behavior. 

Formally, these functions are characterized by belonging to the homogeneous Sobolev space $f \in \dot H_{-\frac{3}{2}} (\mathbb{R}^3)$, meaning that their Sobolev norm $||f||_{-3/2 + \mu^2}$ remains finite in the limit $\mu^2 \to 0$.

To see why this condition implies that $f$ has vanishing average, consider its Sobolev norm of order $-3/2$: 
\be \label{32}  ||f||_{-\frac{3}{2}}^2 = \int \frac{d^3k}{(2\pi)^3} |\vec k|^{-3} |\tilde f(\vec k)|^2  < \infty. \ee

For this integral to converge when $|\vec k|\to0$, the Fourier transform $\tilde f(\vec k)$ must vanish in the limit $|\vec k| \to 0$. 
Since $\tilde f$ is locally integrable, this condition implies that $f$ has zero spatial average.  Conversely, if $f(\vec x)$ is in the Schwartz space and has zero average, then Eq.~\eqref{32} holds.

The proofs in this article require separate consideration when the modes $\gamma = (g,f)$ contain these special functions.

Additional subtleties arise when $\gamma$ ``saturates H\"older's inequality'', that is when  $$\mathrm{Re}(f|g)_{-\frac{1-\mu^2}{2}}^2 = ||f||_{-\frac{3}{2} + \mu^2}^2 ||g||_{\frac{1}{2}}^2.$$

{\bf Remark: } If $\gamma$ saturates H\"older's inequality, the function $f$ necessarily belongs to the special family.  Indeed, saturation implies $\tilde f (\vec k)  = \pm |\vec k|^{2-\mu^2} R^{1-\mu^2} \tilde{g}(\vec k)$. Since $g \in \dot H_{1/2}$, thus it diverges at most as $|\vec k|^{-1}$ near the origin, it follows that {$\tilde f (0)=0$}, so  $f$ belongs to the special family.

\subsection{von Neumann entropy\label{subapp:vNE}}

We now complete the proof of Prop.~\ref{prop:vNgeneral} by considering the case where the mode $\gamma$ saturates H\"older’s inequality. This situation is relevant for Cases 2 and 3 of the proof of Prop.~\ref{prop:vNgeneral}, since for such modes the coefficients $\mathfrak{b}$ and $\mathfrak{c}$ vanish, and we must verify that the remaining terms still yield $\nu_A^2 - (\nu^{\mathrm{Mink}})^2$ monotonically increasing with $H$.
For brevity, we focus on Case 2, but the same reasoning applies to Case 3.  
   
Let $\gamma_A^{(1)} = (g_A^{(1)},0)$ and $\gamma_A^{(2)} = (g_A^{(2)},f_A^{(2)})$ define a single-mode subsystem, with these functions satisfying H\"older's inequality. Hence, $\mathfrak{b}=0$ and 
 \begin{widetext}
\begin{equation}\label{eq:nuI_saturation}\begin{split}
     \nu^2_A  &=(\nu^{\mathrm{Mink}}_A)^2+ 2(RH)^{1-\mu^2}\left(||g_A^{(1)}||_{\frac{1}{2}}^2 - \mathrm{Re}(g_{A}^{(1)} , g_A^{(2)})_{\frac{1}{2}}\right) \, {R^{-1+\mu^2 }} \mathrm{Re}(f_{A}^{(2)} , g_A^{(2)})_{-\frac{1}{2} + \frac{\mu^2}{2}} \\
   &  \times(1 + \mathcal{O}(\mu^2)) + \mathcal{O}(\mu^2).
 \end{split}
 \end{equation}

 \end{widetext}

Furthermore, the saturation of Holder's inequality implies $||g_A^{(1)}||^2_{\frac{1}{2}} = \mathrm{Re}(g_A^{(1)}, g_A^{(2)})_{\frac{1}{2}}$, meaning that the leading order term in the previous equation vanishes.

Moreover, for functions in the special family, the limit $\mu^2 \to 0$ exists and is finite. In this limit, the symplectic eigenvalue of such fine-tuned single-mode subsystems in de Sitter exactly coincides with its Minkowski counterpart,
$$ \lim_{\mu^2 \to 0}\nu_A^2 = (\nu_A^{\mathrm{Mink}})^2,  $$
independently of the value of $H$.

Physically, these fine-tuned modes are insensitive to spacetime curvature.  As we will show in Sec.~\ref{sec:partners}, the same conclusion extends to the partners of these modes, whose behavior in the $\mu^2 \to 0$ limit coincides with that of their corresponding modes in Minkowski spacetime.

\subsection{Entanglement\label{subapp:entanglement}}
 While the classification of cases in the proof of Prop.~\ref{prop:LN_general} was not exhaustive, the remaining cases can be handled using the same techniques as in Cases 1 and 2. However, additional subtleties arise when the modes in both regions $A$ and $B$ involve smearing functions that saturate H\"older's inequality.  The aim of this section is to complete the proof of Prop.~\ref{prop:LN_general} by examining this case.

The modes defined from functions that saturate H\"older's inequality are curvature insensitive so that for this set of modes, the comparison between the symplectic invariants in the Bunch-Davies vacuum with those of the corresponding modes in Minkowski should be done in the massless limit $\mu^2 \to 0$. Then, following Sec.~\ref{subapp:vNE}, one can show that all symplectic invariants characterizing the two-mode system  are independent of $RH$ and their values coincide exactly with those for the corresponding modes of a massless field in the vacuum of Minkowski spacetime, that is, 
 \begin{equation}
     \lim_{\mu^2\to 0}\nu_I = \nu_I^{\mathrm{Mink}}\,, \quad I=A,B\,,
\end{equation}
and 
\begin{equation}
     \lim_{\mu^2\to 0}\tilde \nu_{\pm} = \tilde \nu_{\pm}^{\mathrm{Mink}}\,.
\end{equation}
Thus, the logarithmic negativity between two compactly supported, spacelike separated special modes saturating H\"older's inequality in the Bunch–Davies vacuum coincides with that of the corresponding modes in Minkowski spacetime in the massless limit.
Combining this with the results for generic (non-special) functions, we conclude that any two compactly supported, spacelike separated modes of a light field in the Bunch–Davies vacuum are less or equally entangled than their Minkowski counterparts.

\subsection{Asymptotic behavior of the partner\label{subapp:partners}}

Finally, we analyze the behavior of the partner mode associated with a compactly supported field mode $\gamma_A = (g_A(\vec{x}),f_A(\vec x))$ when $f_A(\vec x)$ belongs to the special family. 

 It follows from Eq.~\eqref{eq:Jgamma_dS_approx_radial}, that the leading-order behavior of the partner’s second component, denote as $f_{p}$ in the main text, is identical to that discussed in the main text; we therefore focus on its first component. Since $f_A$ is special, its Fourier transform must vanish as $|\vec k| \to 0$. To facilitate the analysis,  we write the special function $\tilde f_A$ as  $\tilde f_A(\vec k) = |\vec k|^\alpha \tilde F(\vec k)$, with $\alpha \in \mathbb{Z}_+$, and $\tilde F (\vec k) \to C \neq 0$ in the limit $|\vec k| \to 0 $. Expressing the $\sin(kr)$ in Eq.~\eqref{eq:Jgamma_dS_approx_radial} as complex exponentials and integrating by parts, one obtains integrals of the type introduced in Eq.~\eqref{eq:oscillatory_integrals}. The asymptotic behavior of integrals of this type can be obtained using standard techniques of asymptotic analysis (see, e.g.,~\cite{bender1999advanced}), from which we conclude that 
 $$(J_{DB}\gamma_A)_1 \sim \mathcal{O}(r^{-\alpha - 2\mu^2}),$$ 
 for $\alpha >0$. Hence, when $f_A$ is special, the partner mode decays with an additional power $r^{-\alpha}$ with $\alpha >0$ relative to the generic case, breaking the near scale invariance found for typical modes.

\section{Validity of Prop.~\ref{prop:vNgeneral} for arbitrary $RH$ for restricted modes}
\label{app:vNspecial}

Proposition~\ref{prop:vNgeneral} applies to any single-mode of a light field whose support is large compared to the Hubble radius, \( RH \gg 1 \)---the regime of interest in cosmology. In this appendix, we show its validity can be extended for all values of $RH$ for single-mode subsystems whose classical state space \( \Gamma_A \) admits a Darboux basis of the form \( \gamma_{A}^{(1)} = (0, -f(\vec{x})) \) and \( \gamma_{A}^{(2)} = (g(\vec{x}), 0) \), so that the corresponding observables $(\hat \Phi[f], \hat \Pi[g])$ represent a canonical pair built from pure field and momentum smeared operators. For such family of subsystems, Prop.~\ref{prop:vNgeneral} can be extended to a broader range of values of  \( RH \).

\begin{prop}\label{prop:vNrestricted}
    The von Neumann entropy of a single-mode subsystem, whose operator algebra is generated by a pure-field and a pure-momentum operator, \( (\hat{\Phi}[f], \hat{\Pi}[g]) \), with \( f \) and \( g \) chosen such that they do \emph{not} saturate H\"older's inequality, grows monotonically with $RH$ for all \( RH \geq  0 \). 
\end{prop}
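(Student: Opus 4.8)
The plan is to reduce the statement to a monotonicity property of the single symplectic eigenvalue and then establish that property from the \emph{exact} Bunch--Davies two-point functions, with the Hankel functions left un-expanded in $RH$.

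Since $S(\nu)$ in Eq.~\eqref{eq:S} is strictly increasing on $\nu\ge1$ (with $S(1)=0$), it suffices to show that $\nu_A^{2}$ is nondecreasing in $RH$. For a mode generated by a pure-field/pure-momentum canonical pair $(\hat\Phi[f],\hat\Pi[g])$, Eq.~\eqref{eq:nuI_O1O2} gives $\nu_A^{2}=PQ-M^{2}$ with $P=\braket{\{\hat\Phi[f],\hat\Phi[f]\}}$, $Q=\braket{\{\hat\Pi[g],\hat\Pi[g]\}}$, $M=\braket{\{\hat\Phi[f],\hat\Pi[g]\}}$. The first step is to record the all-orders forms that fall out of the computation in Appendix~\ref{app:proofprop1}:
\begin{equation}\label{eq:PQMexact}
\begin{aligned}
P&=\frac{\pi}{2H}\!\int\!\frac{d^{3}k}{(2\pi)^{3}}\,\big|H^{(1)}_{3/2-\mu^{2}}(k/H)\big|^{2}\,|\tilde f(\vec k)|^{2},\\
Q&=\frac{\pi H}{2}\!\int\!\frac{d^{3}k}{(2\pi)^{3}}\,\Big(\frac{k}{H}\Big)^{2}\big|H^{(1)}_{1/2-\mu^{2}}(k/H)\big|^{2}\,|\tilde g(\vec k)|^{2},\\
M&=-\frac{\pi}{2}\!\int\!\frac{d^{3}k}{(2\pi)^{3}}\,\frac{k}{H}\,\mathrm{Re}\!\Big(H^{(1)}_{3/2-\mu^{2}}(k/H)\,\overline{H^{(1)}_{1/2-\mu^{2}}(k/H)}\Big)\,\mathrm{Re}\!\big(\tilde f(\vec k)\,\overline{\tilde g(\vec k)}\big).
\end{aligned}
\end{equation}
Rescaling $\vec k\to\vec k R$ shows that $P,Q,M$ depend on $R$ and $H$ only through $u:=RH$ (the explicit factors $H^{\mp1}$ cancel in the product $PQ$ and are absent in $M$), so the goal is $\partial_{u}(PQ-M^{2})\ge0$; working at fixed $R$ this is the same as $\partial_{H}(PQ-M^{2})\ge0$.

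Next I would differentiate \eqref{eq:PQMexact} under the integral sign (legitimate thanks to the fast decay of $\tilde f,\tilde g$) using $\partial_{H}\!\big[\tfrac1H|H^{(1)}_{\nu}(k/H)|^{2}\big]=-\tfrac1{H^{2}}\tfrac{d}{dx}\!\big[x|H^{(1)}_{\nu}(x)|^{2}\big]_{x=k/H}$ and $\partial_{H}\!\big[H(k/H)^{2}|H^{(1)}_{\nu}(k/H)|^{2}\big]=-(k/H)^{2}\tfrac{d}{dx}\!\big[x|H^{(1)}_{\nu}(x)|^{2}\big]_{x=k/H}$, so that the signs of $\partial_{H}P$ and $\partial_{H}Q$ are controlled by the sign of $\tfrac{d}{dx}\!\big[x|H^{(1)}_{\nu}(x)|^{2}\big]$. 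I would then invoke the classical monotonicity of $x\mapsto x\,|H^{(1)}_{\nu}(x)|^{2}=x\big(J_{\nu}(x)^{2}+Y_{\nu}(x)^{2}\big)$ --- strictly decreasing on $(0,\infty)$ for $|\nu|>\tfrac12$, strictly increasing for $|\nu|<\tfrac12$, with limit $2/\pi$. With $\nu=\tfrac32-\mu^{2}>\tfrac12$ this gives $\partial_{H}P\ge0$ (the field--field sector grows with curvature); with $\nu=\tfrac12-\mu^{2}<\tfrac12$ it gives $\partial_{H}Q\le0$. For the cross term I would use $H^{(1)}_{3/2-\mu^{2}}=\tfrac{1/2-\mu^{2}}{x}H^{(1)}_{1/2-\mu^{2}}-\tfrac{d}{dx}H^{(1)}_{1/2-\mu^{2}}$ to express the kernel of $M$ (and of $\partial_{H}M$) through $\chi(x):=x|H^{(1)}_{1/2-\mu^{2}}(x)|^{2}$ and its derivatives alone, and exploit the pointwise \emph{per-mode purity} relation: the three prefactors in \eqref{eq:PQMexact}, call them $\hat\sigma_{\Phi\Phi}(k),\hat\sigma_{\Pi\Pi}(k),\hat\sigma_{\Phi\Pi}(k)$, obey $\hat\sigma_{\Phi\Phi}\hat\sigma_{\Pi\Pi}-\hat\sigma_{\Phi\Pi}^{2}=1$ for every $k$, which is just the Wronskian/cross-product identity $J_{\nu}(x)Y_{\nu-1}(x)-Y_{\nu}(x)J_{\nu-1}(x)=-\tfrac{2}{\pi x}$ written for the Bunch--Davies mode functions.

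The final and hardest step is to assemble these facts into $\partial_{H}(PQ-M^{2})=(\partial_{H}P)\,Q+P\,(\partial_{H}Q)-2M\,(\partial_{H}M)\ge0$: because $P$ grows while $Q$ decreases, the sign is not manifest, and one must show the growth of the field sector dominates. I would bound $M$ and $M\,\partial_{H}M$ by Cauchy--Schwarz and Hölder's inequality applied to the integrals in \eqref{eq:PQMexact}, feeding in the pointwise identity $\det\hat\sigma(k)=1$, exactly in the spirit of Cases~1--2 in the proof of Prop.~\ref{prop:vNgeneral}. The hypothesis that $(g,f)$ does \emph{not} saturate Hölder's inequality is precisely what prevents the positive and negative contributions from cancelling at leading order in $u$ --- it is the same condition (the quantity $\mathfrak b$ of Prop.~\ref{prop:vNgeneral}, Case~2, being strictly positive) that separates the typical modes from the special family of Appendix~\ref{app:proofs_special_functions}. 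I expect this inequality step --- turning a pointwise Wronskian constraint into a uniform-in-$(f,g)$ lower bound on the $u$-derivative of $PQ-M^{2}$ valid for \emph{all} $RH\ge0$ --- to be the main obstacle; the endpoints are benign, since at $u=0$ one recovers the Minkowski value $\nu_A^{\mathrm{Mink}}$, and the large-$u$ growth $\nu_A^{2}-(\nu_A^{\mathrm{Mink}})^{2}\sim\mathfrak b\,(RH)^{2-2\mu^{2}}$ with $\mathfrak b>0$ is already established in the proof of Prop.~\ref{prop:vNgeneral}.
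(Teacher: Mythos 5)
Your reduction to showing that $\nu_A^{2}=PQ-M^{2}$ is nondecreasing in $u=RH$ is the right starting point, and the structural facts you collect (the exact Hankel-function kernels, the monotonicity of $x\mapsto x\,|H^{(1)}_{\nu}(x)|^{2}$ on either side of $|\nu|=\tfrac12$, the per-mode Wronskian/purity identity) are correct. But the proposal has a genuine gap exactly where you flag it: the inequality $(\partial_{H}P)\,Q+P\,(\partial_{H}Q)-2M\,\partial_{H}M\ge 0$ \emph{is} the proposition, and you do not establish it. The two pieces you do control pull in opposite directions ($\partial_{H}P\ge0$ but $\partial_{H}Q\le0$), and neither Cauchy--Schwarz nor the pointwise relation $\det\hat\sigma(k)=1$ obviously converts into the required global statement once the $k$-integrals carry three different weights $|\tilde f|^{2}$, $|\tilde g|^{2}$, $\mathrm{Re}(\tilde f\,\overline{\tilde g})$; note also that the matrix-valued kernel $\partial_{H}\hat\sigma(k)$ is \emph{not} positive semidefinite (its diagonal entries have opposite signs), so there is no mode-by-mode positivity to integrate. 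As written, the argument terminates at an acknowledged obstacle rather than at a proof.

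The paper's proof (Appendix~\ref{app:vNspecial}) takes a different and much shorter route that sidesteps differentiating the exact kernels. Using the small-$\mu^{2}$ expansion of Prop.~\ref{prop:BDcorrelations}, it writes $\nu_A^{2}=(\nu_A^{\mathrm{Mink}})^{2}+(RH)^{2-2\mu^{2}}\mathcal{F}_A+(\text{corrections})$, where $\mathcal{F}_A=K^{2}\bigl(\|f\|^{2}_{-3/2+\mu^{2}}\|g\|^{2}_{1/2}-\mathrm{Re}(f,g)^{2}_{-1/2+\mu^{2}/2}\bigr)$ is $RH$-independent and strictly positive precisely because H\"older's inequality is not saturated. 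The entire $H$-dependence is thereby isolated in one manifestly increasing term, and the only remaining task is to check that the residual corrections $\mathcal{N}_{\Phi\Phi}$, $\mathcal{N}_{\Pi\Pi}$, $\mathcal{N}_{\Phi\Pi}$ cannot compete at any $RH$; this follows from the bounds of Appendix~\ref{app:proofprop1} (overall $\mathcal{O}(\mu^{2})$ suppression, with $\mathcal{N}_{\Phi\Phi},\mathcal{N}_{\Pi\Pi}=\mathcal{O}(\mu^{2}(RH)^{2})$ and $\mathcal{N}_{\Phi\Pi}=\mathcal{O}(\mu^{2}RH)$ as $RH\to0$, so they vanish at least as fast as the leading term). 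Your exact, non-expanded approach would be stronger and independent of the $\mu\ll1$ assumption if completed, but completing it requires a new lemma controlling the cross term $M\,\partial_{H}M$ against $(\partial_{H}P)Q+P\,\partial_{H}Q$, which you have not supplied.
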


\begin{proof}
For such a single-mode system, Eqs.~\eqref{phiphi}-\eqref{phipi} yield
\begin{equation}
    \nu_A^2 =(\nu^{\mathrm{Mink}}_A)^2 +  (RH)^{2-2\mu^2} \mathcal{F}_A + \mathcal{O}(\mu^2)\,, 
\end{equation}
where 
\begin{equation}\begin{split}
    \mathcal{F}_A=& K^2\big(||f||_{-3/2+\mu^2}^2||g||^2_{1/2}\\
    &- \mathrm{Re}(f,g)_{-1/2+\mu^2/2}^2\big)+ \mathcal{O}(\mu^2)\,,
\end{split}
\end{equation}
with $K^2 = \frac{2^{2-2\mu^2} \pi R^{-2+2\mu^2} }{\cos^2(\pi \mu^2) \Gamma \big(-\frac{1}{2} + \mu^2\big)^2 }>0$. Since H\"older's inequality is not saturated by assumption, $\mathcal{F_A} >0$, and thus $ \nu_A^2 >(\nu^{\mathrm{Mink}}_A)^2$ for sufficiently large $RH$.  

To assess whether this inequality holds for all $RH \geq 0$, we need to analyze the possible impact of the correction terms  $\mathcal{N}_{\Phi\Phi}$, $\mathcal{N}_{\Pi\Pi}$, and $\mathcal{N}_{\Phi\Pi}$. While these are small compared to $\mathcal{F}_A$ when $RH \sim 1$, when $RH \sim \mu^2$ $(RH)^{2-2\mu^2} K \mathcal{F}_A$ the contributions $\mathcal{N}_{\Phi\Phi}$, $\mathcal{N}_{\Pi\Pi}$, and $\mathcal{N}_{\Phi\Pi}$ could in principle compete with the leading term.
However, as shown in Appendix~\ref{app:proofprop1}, $\mathcal{N}_{\Phi\Phi}$ and $\mathcal{N}_{\Pi\Pi}$ vanish at least as $\mathcal{O}((RH)^2)$ and $\mathcal{N}_{\Phi\Pi}$ scales as $\mathcal{O}(RH)$ for $RH \ll 1$. Consequently, none of these contributions can become larger than the term  with $\mathcal{F_A}$, and the inequality $\nu^2_{A} \geq (\nu_A^{\mathrm{Mink}})^2$ remains valid also when $RH \ll 1$. 

\end{proof}

\section{Extension of Prop.~\ref{prop:LN_general} to arbitrary $RH$ for restricted modes\label{app:extension_LN_allRH}}
For a restricted class of modes---specifically, pairs of identical single-mode subsystems generated by operators $(\hat \Phi[f], \hat \Pi[g])$ with $f,g$ not belonging to the special family---we show that the result of Prop.~\ref{prop:LN_general} holds for all $RH\geq 0$, thus   extending the original result of Prop. \ref{prop:LN_general},  which was restricted to $RH\gg 1$.  These subsystems are special because they are generated by a ``pure field'' operator and a ``pure momentum'' operator—by contrast with the most general subsystem, which is generated by two operators each corresponding to a linear combination of a pure field and a pure momentum operator.

\begin{prop}
Let $A$ and $B$ denote two single-mode subsystems with supports in spacelike-separated regions, generated by  $(\hat{\Phi}[f_A], \hat{\Pi}[g_A])$ and $(\hat{\Phi}[f_B], \hat{\Pi}[g_B])$, where $f_A = f_B := f$ and $g_A = g_B := g$ (up to a spatial translation), and assume that $f$ has a non-vanishing average. Their entanglement decreases monotonically with $H$. 
\end{prop}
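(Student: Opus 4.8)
The plan is to prove that the partial‑transpose symplectic eigenvalue $\tilde\nu_-$ of the two‑mode system grows monotonically with $H$; since the logarithmic negativity is $\max\{0,-\log_2\tilde\nu_-\}$, this is exactly the claimed monotonic decrease of entanglement, and it suffices to show $\tilde\nu_-^2$ is non‑decreasing in $H$ on the set where $\tilde\nu_-^2<1$ (elsewhere the logarithmic negativity already vanishes). Two structural features will be exploited. First, because $A$ and $B$ are identical up to a translation and the Bunch--Davies vacuum is translation invariant, $\sigma_A=\sigma_B$, so Eq.~\eqref{eq:symplecticeigvalsall} collapses to the closed form
\[
\tilde\nu_-^2=g-\sqrt{g^2-h}\,,\qquad g:=\det\sigma_A-\det C\,,\quad h:=\det\sigma_{AB}\,,
\]
and when in addition $C=C^{T}$ (true, e.g., for centrally symmetric smearings or $g\propto f$ as in the ball example) the swap‑symmetric covariance metric block‑diagonalizes under the orthogonal‑symplectic rotation $\tfrac{1}{\sqrt2}\bigl(\begin{smallmatrix}\mathbb I&\mathbb I\\\mathbb I&-\mathbb I\end{smallmatrix}\bigr)$ into $\mathrm{diag}(\sigma_A+C,\sigma_A-C)$, whence $h=\det(\sigma_A+C)\det(\sigma_A-C)$ with $\det(\sigma_A\pm C)=\nu_\pm^2\ge1$. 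Second, for a subsystem generated by a \emph{pure‑field} and a \emph{pure‑momentum} operator the momentum--momentum correlator carries \emph{no} curvature correction (Eq.~\eqref{pipi}): the $(2,2)$ entries of $\sigma_A$ and $C$ are $H$‑independent, the $(1,1)$ entries are linear in $u:=(RH)^{2-2\mu^2}$ (times the universal positive prefactor of Eq.~\eqref{phiphi}), and the field--momentum entries are $\propto\sqrt u$; taking determinants, $\det\sigma_A$, $\det C$ and $\det(\sigma_A\pm C)$ are \emph{affine} in $u$ while $h$ is quadratic in $u$.

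Carrying this out, I would record $\det\sigma_A=(\nu_A^{\mathrm{Mink}})^2+u\,\mathsf e_A$ with $\mathsf e_A=\|f\|_{-\frac32+\mu^2}^2\|g\|_{\frac12}^2-\mathrm{Re}(f,g)_{-\frac12+\frac{\mu^2}{2}}^2$, and $\det(\sigma_A\pm C)=\det(\sigma_A\pm C)\big|_{\mathrm{Mink}}+\tfrac14 u\bigl(\|f_A\pm f_B\|_{-\frac32+\mu^2}^2\|g_A\pm g_B\|_{\frac12}^2-\mathrm{Re}(f_A\pm f_B,g_A\pm g_B)_{-\frac12+\frac{\mu^2}{2}}^2\bigr)$. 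All $u$‑coefficients are nonnegative by the interpolation (Cauchy--Schwarz/H\"older) inequality between homogeneous Sobolev spaces used in the proof of Prop.~\ref{prop:vNgeneral}, and $\mathsf e_A>0$ \emph{strictly}, since saturation would force $\tilde f(\vec k)\propto|\vec k|^{2-\mu^2}\tilde g(\vec k)$ and hence $\tilde f(0)=0$, contradicting the hypothesis that $f$ has nonvanishing spatial average. Therefore $\det\sigma_A$ is strictly increasing in $u$, $\det(\sigma_A\pm C)$ are non‑decreasing affine functions bounded below by $1$, so $h$ (a product of two positive non‑decreasing functions) is non‑decreasing and strictly positive in $u$. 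Dropping the terms $\mathcal N_{\Phi\Phi},\mathcal N_{\Pi\Pi},\mathcal N_{\Phi\Pi}$ is harmless here: by Appendix~\ref{app:proofprop1} they are $\mathcal O(\mu^2)$ near $RH\sim1$, $\mathcal O(\mu^2(RH)^2)$ and $\mathcal O(\mu^2RH)$ as $RH\to0$, and $\mathcal O(\mu^2\log RH)$ as $RH\to\infty$, hence uniformly subleading in magnitude and in $RH$‑derivative relative to the curvature contributions $\propto(RH)^{2-2\mu^2}$, so they cannot reverse any monotonicity statement.

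Differentiating the closed form gives
\[
\frac{d}{du}\tilde\nu_-^2=g'\!\left(1-\frac{g}{\sqrt{g^2-h}}\right)+\frac{h'}{2\sqrt{g^2-h}}\,,
\]
and since $h>0$ the bracket is strictly negative. If $g'=\tfrac{d}{du}(\det\sigma_A-\det C)\le0$ both terms are nonnegative and monotonicity follows at once. The remaining case $g'>0$ is the \textbf{main obstacle}: there $\tfrac{d}{du}\tilde\nu_-^2\ge0$ is equivalent to $h'\ge 2g'\tilde\nu_-^2$, and on the relevant set $\tilde\nu_-^2\le1$ it suffices to show $h'\ge 2g'$; using $\det(\sigma_A\pm C)\ge1$ one has $h'\ge(\det(\sigma_A+C))'+(\det(\sigma_A-C))'=2(\mathsf e_A+\mathsf e_C)$, which settles the subcase $\mathsf e_C\ge0$. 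When $\mathsf e_C<0$ (the generic situation for well‑separated modes, because $\mathrm{Re}(g_A,g_B)_{1/2}<0$) the crude bound is insufficient and one must use the explicit affine data: writing $g^2-h=Q(u)$, a quadratic in $u$ that is nonnegative for all $u\ge0$ (so its leading coefficient $Q_2=g'^{2}-\beta_+\beta_-\ge0$, with $\beta_\pm:=(\det(\sigma_A\pm C))'\ge0$), I would establish $2g'\sqrt{Q(u)}\ge Q'(u)$ for all $u\ge0$ by squaring and checking that the resulting quadratic in $u$ has nonnegative leading coefficient ($\propto Q_2\beta_+\beta_-$), nonnegative constant term---this is precisely the $u\to0$ statement $\tilde{\mathcal F}_->0$, which follows from the positive‑definiteness plus Fischer's‑inequality argument already established in the proof of Prop.~\ref{prop:LN_general}---and no intermediate sign change, the last being an elementary but tedious estimate resting again only on the interpolation inequality. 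Combining the two cases gives $\tfrac{d}{du}\tilde\nu_-^2\ge0$; since $u$ is a strictly increasing function of $H$, $\tilde\nu_-$ grows with $H$ and the logarithmic negativity decreases monotonically. The excluded functions---those with vanishing average---are exactly the ``special'' ones of Appendix~\ref{app:proofs_special_functions}, for which the two‑mode invariants are $H$‑independent and the entanglement coincides with its Minkowski value.
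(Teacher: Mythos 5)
Your strategy is genuinely different from the paper's: you work with the exact closed form $\tilde\nu_-^2=g-\sqrt{g^2-h}$ for the swap-symmetric two-mode state and try to prove monotonicity in $u:=(RH)^{2-2\mu^2}$ by elementary calculus on affine/quadratic functions of $u$, whereas the paper writes $\tilde\nu_-^2=(\tilde\nu_-^{\mathrm{Mink}})^2+(RH)^{2-2\mu^2}\tilde{\mathcal F}_-$ and shows $\tilde{\mathcal F}_->0$ separately in the regimes $RH\gtrsim 1$ (where $\|f\|^2_{-3/2+\mu^2}\sim\mu^{-2}$ for non-special $f$ makes the curvature term dominant) and $x:=RH/\mu\ll1$ (via Eq.~\eqref{eq:mathcalFtm_approx} and a chain of inequalities among $\mathcal F_\pm$, $\mathcal F_{\tilde\Delta}$ and the Minkowski symplectic eigenvalues). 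Your structural observations---the affine dependence of $\det\sigma_A$, $\det C$ and $\det(\sigma_A\pm C)$ on $u$, the strict positivity of $\mathsf e_A$ for non-special $f$ via the saturation argument, and the easy subcases $g'\le0$ and $\mathsf e_C\ge0$---are correct, and if completed this route would give a cleaner, regime-free proof.

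However, there is a genuine gap exactly where the proposition has content. In the case $g'>0$, $\mathsf e_C<0$---which you yourself identify as the generic configuration of well-separated modes---the required inequality $2g'\sqrt{Q(u)}\ge Q'(u)$ for all $u\ge0$ is not proved: you verify the leading coefficient and the constant term of the squared inequality (the latter being essentially the $u\to0$ positivity of $\tilde{\mathcal F}_-$) and then assert ``no intermediate sign change'' as ``an elementary but tedious estimate resting only on the interpolation inequality.'' That assertion is the whole theorem in this case, and it is not evident that Cauchy--Schwarz/H\"older alone delivers it: the paper's own treatment of the intermediate regime $x\ll1$ needs additional input, namely definite signs of the cross-correlators and the orderings $\mathcal F_-\ll\mathcal F_+$, $\mathcal F_-<\mathcal F_{\tilde\Delta}<\mathcal F_+$ and $(\tilde\nu_-^{\mathrm{Mink}})^2<(\nu_-^{\mathrm{Mink}})^2\le(\nu_+^{\mathrm{Mink}})^2$. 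Until that estimate is carried out the monotonicity claim is not established. Two smaller points: (i) the factorization $h=\det(\sigma_A+C)\det(\sigma_A-C)$ requires $C=C^T$, which for translated copies of a real mode holds only under extra symmetry assumptions (e.g.\ $\tilde f\,\tilde g^{\,*}$ real), so you have also narrowed the class of modes relative to the statement; (ii) the claim that the terms $\mathcal N_{\Phi\Phi},\mathcal N_{\Pi\Pi},\mathcal N_{\Phi\Pi}$ are subleading ``in $RH$-derivative'' is not contained in Appendix~\ref{app:proofprop1}, which only bounds their magnitude, though this is a minor repair.
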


\begin{proof}
    In Prop.~\ref{prop:LN_general}, we established that the symplectic eigenvalues $\tilde{\nu}_{-}$ (defined in \eqref{eq:symplecticeigvalsall}) can be expressed as a Minkowski contribution plus an additional functional whose leading behavior is that of a positive polynomial in $RH$ when $RH \gg 1$. We now extend this result to all $RH\geq0$ for the aforementioned family of modes.

As discussed in Sec.~\ref{entsubs}, the LN of a two-mode system depends only on $\tilde{\nu}_{-}$, which can be written as\begin{equation}\label{eq:tildenupm_LNrestrictedproof}
        \tilde{\nu}^2 _{-} = (\nu_{-}^{\mathrm{Mink}})^2 \pm (RH)^{2-2\mu^2}\tilde{\mathcal{F}}_{-}\,,
    \end{equation}
where the expression for $\tilde{\mathcal{F}}_{-}$, for the case under consideration of two identical single-mode subsystems separated from each other, follows directly from Eq.~\eqref{eq:mathcalFtm_general}.

The Gaussian separability lemma (see Appendix~\ref{app:sobolev}) ensures that the LN vanishes whenever $\det C = \det C_{\mathrm{Mink}} + \det C_{dS} \geq 0$. Hence we focus on modes satisfying $\det C <0$ for all $RH\geq0$. This requires specific sign combinations among the Sobolev inner products. For definiteness, we choose $\mathrm{Re}(g_A|g_B)_{1/2}<0$, $\mathrm{Re}(f_A|f_B)_{-1/2}>0$, and $\mathrm{Re}(f_A|f_B)_{-3/2+\mu^2}>0$, which correspond to non-negative smearing functions. This choice is made for convenience only: any alternative sign configuration satisfying $\det C<0$ would lead to the same conclusion, with minor changes in intermediate inequalities.

For any sufficiently differentiable, compactly supported function with non-vanishing average, the Sobolev norm  $||f||_{-3/2+\mu^2}^2$ satisfies $||f||_{-3/2+\mu^2}^2 \sim \mathcal{O}(\mu^{-2})$ when $\mu^2\ll 1$, ensuring that in this regime the de Sitter correction in Eq.~\eqref{eq:tildenupm_LNrestrictedproof} dominates even when $RH \sim 1$. Thus, the result of Prop.~\ref{prop:LN_general} remains valid up to $RH \sim 1$.

To study smaller values of $RH$, it is convenient to introduce $x:=RH/\mu$ and analyze three regimes: $x\gg1$, $x\sim1$, and $x\ll1$. While the first regime follows directly from Prop.~\ref{prop:LN_general}, the cases $x\sim 1$ and $x \ll 1$ require a dedicated discussion. 

 In the regime $x \ll 1$, $\tilde{\mathcal{F}}_-$ can be approximated by 
\begin{widetext}
\begin{equation}\label{eq:mathcalFtm_approx}
    (RH)^{2-2\mu^2} \tilde{\mathcal{F}}_{-}\approx - (RH)^{2-2\mu^2} \left(\frac{2\mathcal{F}_{\tilde{\Delta}} (\tilde{\nu}_-^{\mathrm{Mink}})^2 - \mathcal{F}_+ (\nu_-^{\mathrm{Mink}})^2 - \mathcal{F}_- (\nu_+^{\mathrm{Mink}})^2 }{2\mathcal{A}} \right)  + \mathcal{O}(\mu^2) \,, 
\end{equation}
where the correction terms $\mathcal{F}_{\pm}$ are given by \begin{equation*}\begin{split}
    \mathcal{F}_{\pm}  =& K^2 \Bigg[  \left(||f||_{-\frac{3}{2}+\mu^2}^2 \pm \mathrm{Re}\left[(f_A|f_B)_{-\frac{3}{2} + \mu^2}\right]\right)\left(||g||_{\frac{1}{2}}^2 \pm  \mathrm{Re}\left[(g_A|g_B)_{\frac{1}{2} }\right]\right) \\ 
    &- \left(\mathrm{Re}\left[(f|g)_{-\frac{1}{2} + \frac{\mu^2}{2}}\right]  \pm \mathrm{Re}\left[(f_A|g_B)_{-\frac{1}{2} + \frac{\mu^2}{2}}  \right] \right)^2  \Bigg], 
\end{split}
\end{equation*}
\end{widetext}  $K =\frac{2^{2-2\mu^2} \pi R^{-2+2\mu^2} }{\cos^2(\pi \mu^2) \Gamma \big(-\frac{1}{2} + \mu^2\big)^2 } $, and     \begin{equation}
        (RH)^{2-2\mu^2}\mathcal{F}_{\tilde{\Delta}}\! =\! (RH)^{2-2\mu^2}\frac{\mathcal{F}_+ + \mathcal{F}_-}{2} - 2 \det C_{dS}. 
    \end{equation}
{Finally, $\mathcal{A}$ is defined as  \begin{equation}
    \mathcal{A} = \frac{1}{2} \sqrt{\tilde{\Delta}_{\mathrm{Mink}}^2 - \det {\sigma}_{\mathrm{Mink}}}\,,
\end{equation}
where $\tilde{\Delta}_{\mathrm{Mink}}$ and ${\sigma}_{\mathrm{Mink}}$ denote the symplectic invariants for the corresponding modes of a massless scalar field in the Minkowski vacuum. Importantly, one can show that $\mathcal{A} > 0$. }  

States that are not manifestly separable (according to the Gaussian separability lemma) satisfy the  inequalities $\mathcal{F}_- \ll \mathcal{F}_+$,   $\mathcal{F}_- < \mathcal{F}_{\Delta} <\mathcal{F}_{\tilde{\Delta}} < \mathcal{F}_+$, and  $(\tilde{\nu}_-^{\mathrm{Mink}})^2 < (\nu_-^{\mathrm{Mink}})^2 \leq (\nu_{+}^{\mathrm{Mink}})^2  $, and the  {numerator in the right hand side of }Eq.~\eqref{eq:mathcalFtm_approx} can be approximated by  
\begin{equation}\begin{split}
  &  2 \mathcal{F}_{\tilde{\Delta}} (\tilde{\nu}_-^{\mathrm{Mink}})^2 - \mathcal{F}_+ (\nu_-^{\mathrm{Mink}})^2 - \mathcal{F}_- (\nu_+^{\mathrm{Mink}})^2  \\
  &=(||f||_{-1/2}^2 - \mathrm{Re}(f_A|f_B)_{-1/2})(||g||_{1/2}^2 (2\mathcal{F}_{\tilde{\Delta}} - \mathcal{F}_+) \\ 
    & + \mathrm{Re}(g_A|g_B)_{1/2} (2\mathcal{F}_{\tilde{\Delta}} + \mathcal{F}_+) - \mathcal{F}_- (\nu_+^{\mathrm{Mink}})^2).
\end{split}
\end{equation}
Using the Cauchy-Schwartz inequality, the inequalities  between $\mathcal{F}_{\pm}$,  $\mathcal{F}_{\Delta}$, and $\mathcal{F}_{\tilde \Delta}$, and the sign conventions above, we conclude that $ 2 \mathcal{F}_{\tilde{\Delta}} (\tilde{\nu}_-^{\mathrm{Mink}})^2 - \mathcal{F}_+ (\nu_-^{\mathrm{Mink}})^2 - \mathcal{F}_- (\nu_+^{\mathrm{Mink}})^2   < 0$, implying  $\tilde{\mathcal{F}}_- >0$ also when $x \ll 1$.

Since $\tilde{\mathcal{F}}_{\pm}>0$ also for $RH\gtrsim1$ (Prop.~\ref{prop:LN_general}), the dependence on $RH$ is continuous,  and using the scaling for the subleading contributions $\mathcal{N}_{\Xi\Xi'}$ for $\Xi,\Xi'\in \{\Phi,\Pi\}$ when $RH \ll 1$ derived in Appendix~\ref{app:proofprop1},  {we conclude that $\tilde{\mathcal{F}}_- > 0$ also when $RH\sim\mu^2$.}

\end{proof}

We conclude that for two spacelike-separated, identical single-mode subsystems of the form $(\hat{\Phi}[f],\hat{\Pi}[g])$, where $f$ and $g$ do not belong to the special family, defined in Appendix~\ref{app:proofs_special_functions}, $\tilde{\mathcal{F}}_{-}$ is positive for all $RH\geq0$. Consequently, the entanglement between the two modes decreases monotonically with $H$. Additionally, the Minkowski limit is recovered smoothly as $H\to0$.

\section{Simon's normal form\label{app:simon_form}}
To facilitate the calculations of  symplectic invariants, it is convenient to write the covariance matrix of the pair of degrees of freedom in the so-called Simon's normal form {(see, e.g. Sec. 7.1.1. in~\cite{serafini2017quantum})}: 
\begin{equation}
    {\bm \sigma_{AB}'} = \left(\begin{array}{cc}
        {\bm \sigma_A'} & {\bm C'}  \\
        (\bm{C}')^T &  {\bm \sigma_B'}
    \end{array}
    \right), 
\end{equation}
where  
\begin{align}
    {\bm \sigma_A'} &= \nu_A \mathbb{I}_2, \nonumber\\
   {\bm \sigma_B' }&= \nu_B \mathbb{I}_2, \nonumber  \\
   {\bm C'} &= \mathrm{diag}(c_+,c_-)\,,
\end{align}
with $\nu_I$, $I=A,B$ being the symplectic eigenvalues of the reduced single-mode covariance matrices, and $c_+\geq c_- $ the correlations between subsystems $A$ and $B$. 

Any $4\times4$ covariance matrix ${\bm\sigma_{AB}}$ with block components ${\bm\sigma_A}$, ${\bm \sigma_B}$, and ${\bm C}$ can be brought to this normal form by a change of basis within subsystems $A$ and $B$. Such change of basis is implemented by a system-local symplectic transformation,\begin{equation}
    {\bm \sigma'_{AB}} =  S \bm \sigma_{AB}  S^{T}\,,
\end{equation}
where 
\begin{equation}
    S \!=\! (O_A \oplus O_B)(U_2^{\dagger} \oplus U_2^{\dagger}) ((L^{\dagger}_A)^{-1} \!\oplus \! (L^{\dagger}_B)^{-1}).
\end{equation}
Here, $O_I$ ($I=A,B$) are orthogonal transformations, $L_I$ denotes the linear maps that diagonalize $\Omega_2\sigma_I$, and 
\begin{equation}
    U_2 = \frac{1}{\sqrt{2}} \left(\begin{matrix}
        1 & i \\ 
        1 & - i 
    \end{matrix}  \right) \,.
\end{equation}
The coefficients $c_{\pm}$ are obtained as\begin{equation} \begin{split}
    c_{\pm} &= \frac{1}{2} \Big(\sqrt{(d_{11} + d_{22})^2 + (d_{12} - d_{21})^2 } \\ 
    &\pm  \sqrt{(d_{11} - d_{22})^2 + (d_{12} + d_{21})^2 } \Big) \,,
\end{split}
\end{equation}
with 
\begin{equation} \begin{split}
     d_{11} &= \frac{1}{\sqrt{\sigma_{A}^{11}\sigma_{B}^{11}}} \frac{1}{\sqrt{\nu_A\nu_B}} \Big(\sigma_A^{11} (\sigma_B^{11} C^{22} -\sigma_B^{12}C^{12} ) \\ 
     &+ \sigma_A^{12} (\sigma_B^{12}C^{11} - \sigma_{B}^{11} C^{12})\Big) \,,
\end{split}
\end{equation}
\begin{equation}
    d_{12} = \frac{1}{\sqrt{\sigma_{A}^{11}\sigma_{B}^{11}}} \sqrt{\frac{\nu_B}{\nu_A}} \left(\sigma_A^{12}C^{11} - \sigma_A^{11}C^{12} \right) \,,
\end{equation} \begin{equation}
    d_{21} = \frac{1}{\sqrt{\sigma_{A}^{11}\sigma_{B}^{11}}} \sqrt{\frac{\nu_A}{\nu_B}} (\sigma_B^{12} C^{11} - \sigma_B^{11}C^{12})\,,
\end{equation}
and 
\begin{equation}
    d_{22} = \sqrt{\frac{\nu_A\nu_B}{\sigma_{A}^{11}\sigma_{B}^{11}}} C^{11}\,.
\end{equation}

In Simon's normal form, the global symplectic invariants and, in particular, the symplectic eigenvalues take compact analytical expressions:
\begin{equation}
    \det \sigma = (\nu_A\nu_B - c_+^2)(\nu_A\nu_B - c_-^2)\,,
\end{equation}
\begin{widetext}
    \begin{equation}\label{eq:nupm_simon}
    \nu_{\pm}^2  =\frac{1}{2} \left(\nu_A^2 + \nu_B^2 + 2 c_-c_+ \pm \sqrt{(\nu_A^2 -\nu_B^2)^2 + 4 c_+ c_- (\nu_A^2 + \nu_B^2) +4 \nu_A\nu_B (c_+^2 + c_-^2) } \right) \,,
\end{equation}
and
\begin{equation}\label{eq:nuPTpm_simon}
    \tilde{\nu}_{\pm}^2  =\frac{1}{2} \left(\nu_A^2 + \nu_B^2 - 2 c_-c_+ \pm \sqrt{(\nu_A^2 -\nu_B^2)^2 - 4 c_+ c_- (\nu_A^2 + \nu_B^2) +4 \nu_A\nu_B (c_+^2 + c_-^2) } \right) \,,
\end{equation}
\end{widetext}
where $\tilde{\nu}_{\pm}$ denote the symplectic eigenvalues of the partially transposed covariance matrix.

Simon’s normal form is particularly useful for evaluating the mutual information and logarithmic negativity. Although the explicit construction above is not strictly required for our proofs, it greatly simplifies subsequent expressions—especially when the reduced covariance matrices of subsystems $A$ and $B$ share the same symplectic eigenvalues.

\bibliography{dSlong.bib} 
\end{document}